\newcommand{\be}{\begin{equation}}
\newcommand{\ee}{\end{equation}}
\newcommand{\ben}{\begin{equation*}}
\newcommand{\een}{\end{equation*}}
\newcommand{\mc}{\mathcal}
\newcommand{\mbf}{\mathbf}
\newtheorem{lemma}{Lemma}
\newtheorem{defi}{Definition}[section]
\newtheorem{thm}{Theorem}
\newtheorem{fact}{Fact}
\newtheorem{note}{Note}
\newtheorem{rem}{Remark}
\newtheorem{corr}{Corollary}
\newcommand{\e}{\epsilon}
\newcommand{\abs}[1]{\lvert#1\rvert}
\newcommand{\norm}[1]{\lVert#1\rVert}
\newcommand{\expec}{\mathbb{E}}
\newcommand{\bst}{\boldsymbol{\theta}}
\newcommand{\bsth}{\hat{\boldsymbol{\theta}}}
\newcommand{\by}{\mathbf{y}}
\begin{document}
\title{Cluster-Seeking James-Stein Estimators}
\author{K. Pavan Srinath
and Ramji Venkataramanan
\thanks{ This work was supported in part by a Marie Curie Career Integration Grant (Grant Agreement No. 631489) and an Early Career Grant from the Isaac Newton Trust. This paper was presented in part at the 2016 IEEE International Symposium on Information Theory.}%
\thanks{K.~P.~Srinath and R.~Venkataramanan are with Department of Engineering, University of Cambridge, Cambridge CB2 1PZ, UK (e-mail: \{pk423, rv285\}@cam.ac.uk).}
}
\maketitle

\begin{abstract}
This paper considers the problem of estimating a high-dimensional vector of parameters $\boldsymbol{\theta} \in \mathbb{R}^n$ from a noisy observation. The noise vector is i.i.d. Gaussian with known variance. For a squared-error loss function, the James-Stein (JS) estimator is known to dominate the simple maximum-likelihood (ML) estimator when the dimension $n$ exceeds two. The JS-estimator shrinks the observed vector towards the origin, and the  risk reduction over the ML-estimator is greatest for  $\boldsymbol{\theta}$ that lie close to the origin. JS-estimators can be generalized to shrink the data towards any target subspace. Such estimators also dominate the ML-estimator, but the  risk reduction is significant only when $\boldsymbol{\theta}$ lies close to the subspace.  
This leads to the question: in the absence of prior information about $\boldsymbol{\theta}$, how do we design estimators that give significant risk reduction over the ML-estimator for a wide range of $\boldsymbol{\theta}$? 

In this paper, we propose shrinkage estimators that attempt to infer the structure of $\boldsymbol{\theta}$ from the observed data in order to construct a good attracting subspace. In particular,  the components of the observed vector are separated into clusters, and the elements in each cluster shrunk towards a common attractor. The number of clusters and the attractor for each cluster are determined from the observed vector. We provide concentration results for the squared-error loss and convergence results for the risk of the proposed estimators. The results show that the estimators give significant risk reduction over the ML-estimator for a wide range of $\boldsymbol{\theta}$, particularly for large $n$. Simulation results are provided to support the theoretical claims.
\end{abstract}

\begin{IEEEkeywords}
High-dimensional estimation, Large deviations bounds, Loss function estimates, Risk estimates, Shrinkage estimators  \end{IEEEkeywords}

\section{Introduction}
\label{sec:intro}
\IEEEPARstart{C}onsider the problem of estimating a vector of parameters $\bst \in \mathbb{R}^n$ from a noisy observation 
$\mbf{y}$ of the form
\[ \mbf{y} = \bst + \mbf{w}. \]
The noise vector $\mbf{w} \in \mathbb{R}^n$ is  distributed as $\mc{N}(\mbf{0}, \sigma^2 \mbf{I})$, i.e., its components are i.i.d. Gaussian random variables with mean zero and variance $\sigma^2$.   We emphasize that $\bst$ is deterministic, so the joint probability density function of $\mbf{y}=[y_1, \ldots, y_n]^T$  for a given $\bst$ is 
\begin{equation}\label{eq_pdf_y}
 p_{\bst}(\mathbf{y}) = \frac{1}{\left(2\pi\sigma^2\right)^{\frac{n}{2}}}e^{-\frac{ \norm{\mathbf{y}-\boldsymbol{\theta}}^2}{2\sigma^2}}.
\end{equation}
The performance of an estimator $\bsth$ is measured using the squared-error loss function given by 
\begin{equation*}\label{eq_loss_function}
 L(\boldsymbol{\theta}, \hat{\boldsymbol{\theta}}(\mbf{y}) ) \vcentcolon= \norm{\hat{\boldsymbol{\theta}}(\mbf{y})- \boldsymbol{\theta}}^2,
\end{equation*}
where $\norm{\cdot}$ denotes the Euclidean norm. The \emph{risk} of the estimator for a given $\bst$ is the expected value of the loss function:
\begin{equation*}\label{eq_risk_function}
 R(\boldsymbol{\theta}, \hat{\boldsymbol{\theta}} ) \vcentcolon= \mathbb{E}\left[ \norm{ \hat{\boldsymbol{\theta}}(\mbf{y}) - \boldsymbol{\theta}}^2 \right],
\end{equation*}
where the expectation is computed using the density in \eqref{eq_loss_function}. The {\it normalized risk} 
is $R(\boldsymbol{\theta}, \hat{\boldsymbol{\theta}})/n$.

Applying the maximum-likelihood (ML) criterion to \eqref{eq_pdf_y} yields the ML-estimator  $\hat{\boldsymbol{\theta}}_{ML} = \mathbf{y}$. The ML-estimator is an unbiased estimator, and its risk is  $R(\boldsymbol{\theta}, \hat{\boldsymbol{\theta}}_{ML}) = n \sigma^2$.  The goal of this paper is to design  estimators that give significant risk reduction over  $\bsth_{ML}$ for a wide range of $\bst$, without any prior assumptions about its structure. 

In 1961 James and Stein published a surprising result \cite{stein}, proposing an estimator that uniformly achieves lower risk than $\hat{\boldsymbol{\theta}}_{ML}$ for any $\boldsymbol{\theta} \in \mathbb{R}^{n}$, for $n \geq 3$. Their estimator 
$\hat{\boldsymbol{\theta}}_{JS}$ is given by
\begin{equation}\label{eq_regular_JS}
 \hat{\boldsymbol{\theta}}_{JS} = \left[1-\frac{(n-2)\sigma^2}{\Vert\mathbf{y}\Vert^2} \right]\mathbf{y},
\end{equation}
and  its risk is \cite[Chapter $5$, Thm. 5.1]{lehmannCas98}
 \begin{equation}\label{eq_theta_JS_1}
 R\left( \boldsymbol{\theta}, \hat{\boldsymbol{\theta}}_{JS}\right) = n\sigma^2 - (n-2)^2\sigma^4\mathbb{E} 
 \left[\frac{1}{\norm{\mathbf{y}}^2} \right].
\end{equation}
Hence  for $n \geq 3$,  
\begin{equation}\label{eq_js_ml_risk}
R(\boldsymbol{\theta}, \hat{\boldsymbol{\theta}}_{JS}) < R(\boldsymbol{\theta}, \hat{\boldsymbol{\theta}}_{ML})=n\sigma^2, ~~\forall \boldsymbol{\theta} \in \mathbb{R}^n.
\end{equation}
An estimator $\hat{\boldsymbol{\theta}}_1$ is said to {\it dominate} another estimator $\hat{\boldsymbol{\theta}}_2$ if 
\begin{equation*}
 R(\boldsymbol{\theta}, \hat{\boldsymbol{\theta}}_{1}) \leq R(\boldsymbol{\theta}, \hat{\boldsymbol{\theta}}_{2}),  ~~\forall \boldsymbol{\theta} \in \mathbb{R}^n,
\end{equation*}
with the inequality being strict for at least one $\boldsymbol{\theta}$. 
Thus \eqref{eq_js_ml_risk} implies that the James-Stein estimator  (JS-estimator)  dominates the ML-estimator. Unlike the ML-estimator, the JS-estimator  is non-linear and biased. However, the risk reduction over  the ML-estimator can be significant, making it an attractive option in many situations --- see, for example, \cite{efron_morris2}.

By evaluating the  expression in \eqref{eq_theta_JS_1}, it can be shown that the risk of the JS-estimator depends on $\bst$ only via  $\norm{\bst}$ \cite{stein}. Further,  the risk decreases as $\norm{\bst}$ decreases. (For intuition about this,  note in \eqref{eq_theta_JS_1} that for large $n$, $\norm{\mbf{y}}^2 \approx n\sigma^2 + \norm{\bst}^2$.)  The dependence of the risk on 
$\norm{\bst}$ is illustrated in Fig. \ref{fig:js_comp},  where the average loss of the JS-estimator  is plotted versus $\norm{\bst}$, for two different choices of $\bst$.

The JS-estimator in \eqref{eq_regular_JS} shrinks each element of $\mbf{y}$ towards the origin. Extending this idea, JS-like estimators can be defined by shrinking $\mbf{y}$ towards any vector, or more generally, towards a target subspace $\mathbb{V} \subset \mathbb{R}^{n}$.  Let $P_\mathbb{V}(\mathbf{y})$ denote the projection of $\mathbf{y}$ onto $\mathbb{V}$, so that $\Vert \mathbf{y} - P_\mathbb{V}(\mathbf{y}) \Vert^2 = \min_{\mathbf{v} \in \mathbb{V}}\Vert \mathbf{y} - \mathbf{v}\Vert^2$. Then the JS-estimator that shrinks $\mathbf{y}$ towards the subspace $\mathbb{V}$ is
\begin{equation}\label{JS_attractor_gen}
\hat{\boldsymbol{\theta}} = P_\mathbb{V}(\mathbf{y})   + \left[1 - \frac{(n-d-2)\sigma^2}{\left\Vert \mathbf{y} - P_\mathbb{V}(\mathbf{y})  \right \Vert^2} \right] \left( \mathbf{y} - P_\mathbb{V}(\mathbf{y}) \right ),
\end{equation}
where $d$ is the dimension of  $\mathbb{V}$.\footnote{The dimension $n$ has to be greater than $d+2$ for the estimator to achieve lower risk than $\bsth_{ML}$.}
A classic example of such an  estimator is Lindley's estimator \cite{lindley}, which shrinks $\mathbf{y}$ towards the one-dimensional subspace defined by the all-ones vector $\mbf{1}$. It is given by 
 \begin{equation}\label{eq_one_part_estimator}
   \hat{\boldsymbol{\theta}}_{L} = \bar{y}\mathbf{1}   + \left[1 - \frac{(n-3)\sigma^2}{\norm{\mathbf{y} - \bar{y}\mathbf{1}}^2} \right] \left( \mathbf{y} - \bar{y}\mathbf{1}  \right ),
  \end{equation}
where $\bar{y} \vcentcolon= \frac{1}{n}\sum_{i=1}^ny_i$ is the empirical mean of $\mbf{y}$. 

\begin{figure}[t]
    \centering
    \begin{subfloat}[\label{subfig_n_10_comp_1}]{
       \includegraphics[width=3.25in]{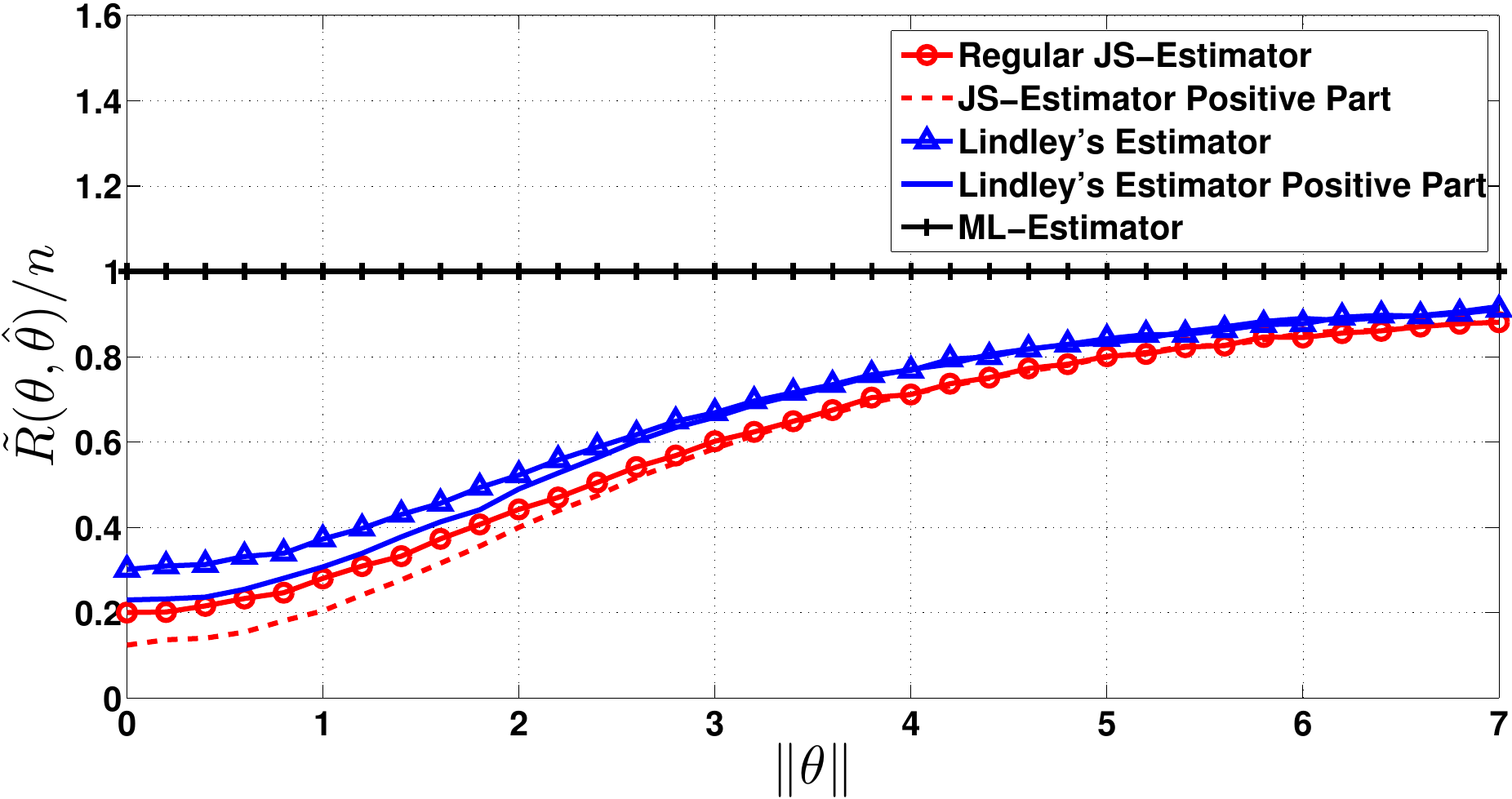}   
       }
    \end{subfloat}
  \quad
    \begin{subfloat}[\label{subfig_n_10_comp_2}]{
       \includegraphics[width=3.25in]{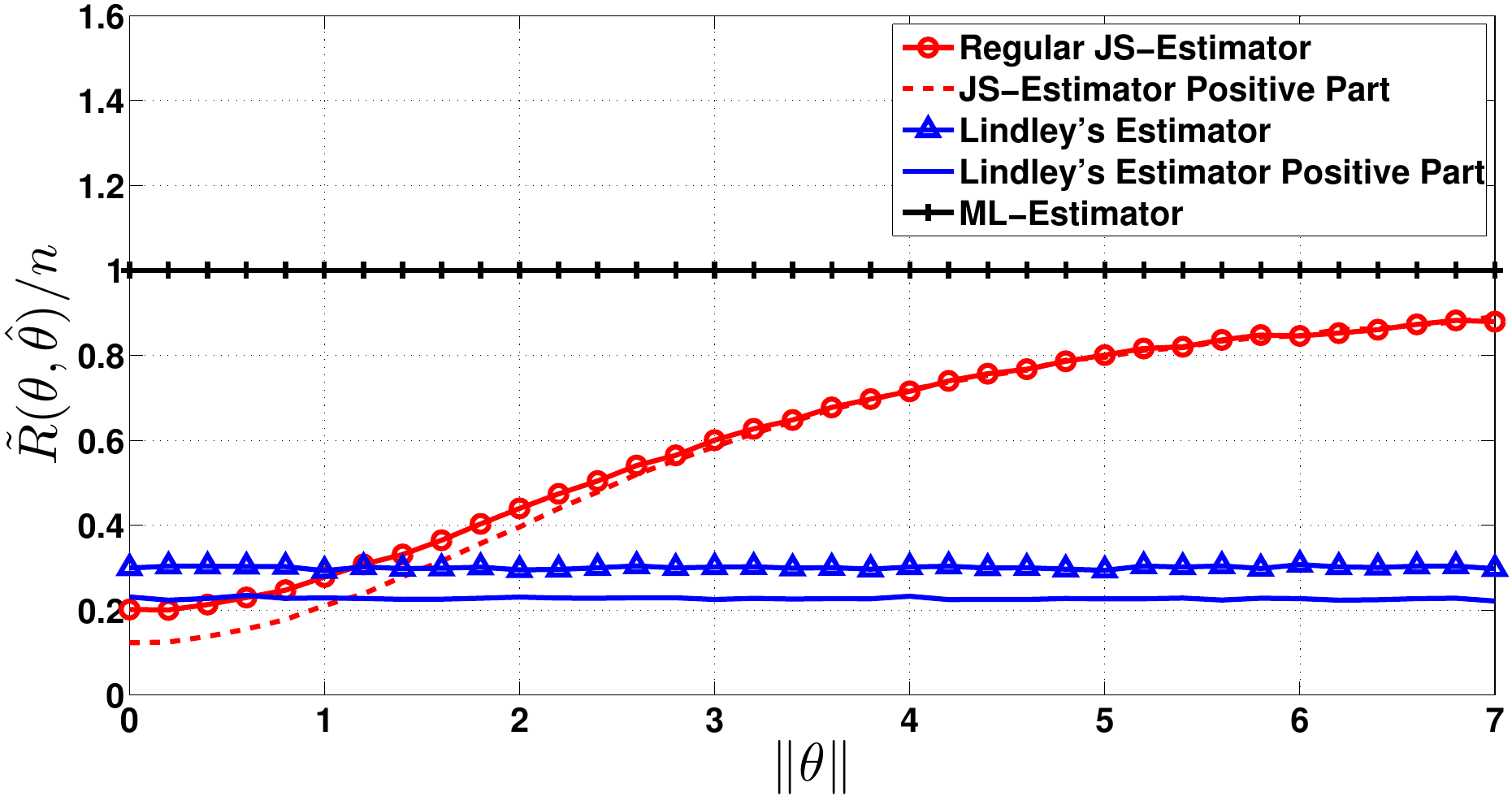}
       }
    \end{subfloat}    
    \caption{\small Comparison of the average normalized loss of the regular JS-estimator, Lindley's estimator, and their positive-part versions for $n=10$ as a function of $\Vert \boldsymbol{\theta} \Vert$. The loss of the ML-estimator is $\sigma^2=1$.  In (a) $\theta_i = \norm{\bst}/{\sqrt{10}}$, $i=1,\cdots,5$, and     $\theta_i = - \norm{\bst}/{\sqrt{10}}$, $i=6,\cdots,10$. In (b), $\theta_i =  \norm{\bst}/{\sqrt{10}}$, $\forall i$.}
    \label{fig_JS_vs_Lindley}
    \label{fig:js_comp}
\end{figure}

It can be shown that the different variants of the JS-estimator  such as \eqref{eq_regular_JS},\eqref{JS_attractor_gen},\eqref{eq_one_part_estimator}  all dominate the ML-estimator.\footnote{The risks of JS-estimators of the form \eqref{JS_attractor_gen} can  usually be computed using Stein's lemma \cite{stein2}, which states that $\expec[Xg(X)]=E[g'(X)]$, where $X$ is a standard normal random variable, and $g$ a weakly differentiable function.}   Further, all JS-estimators share the following key property \cite{efron_morris,george3, george1}: \textbf{the smaller the Euclidean distance between $\bst$ and the attracting vector, the smaller the risk}.

Throughout this paper,  the term ``attracting vector" refers to the vector   that $\by$ is shrunk towards.
 For $\bsth_{JS}$ in \eqref{eq_regular_JS}, the attracting vector is $\mbf{0}$,  and the risk reduction over $\bsth_{ML}$ is larger when $\norm{\bst}$ is close to zero. Similarly, if the components of $\bst$ are clustered around some value $c$, a JS-estimator with attracting vector $c\mathbf{1}$ would give significant risk reduction over $\bsth_{ML}$.  
One motivation for Lindley's estimator in \eqref{eq_one_part_estimator} comes from a guess  that the components of  $\bst$ are close to its empirical mean $\bar{\theta}$ --- since we do not know $\bar{\theta}$, we approximate it by $\bar{y}$ and use the attracting vector $\bar{y} \mathbf{1}$.

Fig. \ref{fig:js_comp} shows how the performance of $\bsth_{JS}$ and  $\bsth_L$ depends on the structure of  $\bst$. In the left panel of the figure, the empirical mean $\bar{\theta}$ is always  $0$, so the risks of both estimators increase monotonically with  
$\norm{\bst}$. In the right panel, all the components of $\bst$ are all equal to $\bar{\theta}$. In this case, the distance from the attracting vector for $\bsth_L$ is $\norm{\bst-\bar{y} \mbf{1}} = \sqrt{(\sum_{i=1}^nw_i)^2/n}$, so the risk  does not vary with  $\norm{\bst}$; in contrast the risk of $\bsth_{JS}$ increases with $\norm{\bst}$ as its attracting vector is $\mbf{0}$.

The risk reduction obtained by using a JS-like shrinkage estimator over $\bsth_{ML}$  crucially  depends on the choice of attracting vector.  To achieve significant risk reduction for a wide range of $\bst$, in this paper, we  infer the structure of  $\bst$ from the data $\mbf{y}$ and choose attracting vectors tailored to this structure.  
The idea is to partition $\mbf{y}$ into clusters, and shrink the components in each cluster towards a common element (attractor). Both the number of clusters and the attractor for each cluster are to be determined based on the data $\mbf{y}$.

As a motivating example, consider a $\bst$ in which half the components are equal to $\norm{\bst}/\sqrt{n}$ and the other half  are equal to  $-\norm{\bst}/\sqrt{n}$. Fig. \ref{fig_JS_vs_Lindley}(a) shows that the risk reduction of both $\bsth_{JS}$ and $\bsth_{L}$ diminish as $\norm{\bst}$ gets larger. This is because the empirical mean $\bar{y}$ is close to zero, hence $\bsth_{JS}$ and  $\bsth_{L}$ both shrink $\mbf{y}$ towards $\textbf{0}$.  An ideal JS-estimator would shrink the   $y_i$'s corresponding to  $\theta_i = \norm{\bst}/\sqrt{n}$ towards the attractor $\norm{\bst}/\sqrt{n}$, and the remaining  observations  towards   $-\norm{\bst}/\sqrt{n}$. Such an estimator would give handsome gains over $\bsth_{ML}$ for all $\bst$ with the above structure. 
On the other hand, if $\bst$ is such that all its components are equal (to $\bar{\theta}$),  Lindley's estimator $\bsth_L$ is an excellent choice,  with significantly smaller risk than $\bsth_{ML}$ for all values of $\norm{\bst}$ (Fig. \ref{fig_JS_vs_Lindley}(b)). 

We would like an intelligent estimator that can correctly distinguish  between different $\bst$ structures (such as the two above) and choose an appropriate attracting vector, based only on  $\mbf{y}$. We propose such estimators in Sections \ref{sec_2_hybrid}  and \ref{sec_multi_attractor}. For reasonably large $n$, these estimators choose a good attracting subspace tailored to the structure of $\bst$, and use an approximation of the best attracting vector within the subspace.

The main contributions of our paper are as follows.
\begin{itemize}
 \item We construct a two-cluster JS-estimator, and provide concentration results for the squared-error loss, and asymptotic convergence results for its risk. Though this estimator does not dominate the ML-estimator, it is shown to provide significant risk reduction over Lindley's estimator and the regular JS-estimator when the components of $\bst$ can be approximately separated into two clusters. 
 \item We present a hybrid JS-estimator that, for any $\boldsymbol{\theta}$ and for large $n$, has risk close to the minimum of that of Lindley's estimator and the proposed two-cluster JS-estimator. Thus the hybrid estimator  asymptotically dominates both the ML-estimator and Lindley's estimator, and gives significant risk reduction over the ML-estimator for a wide range of $\bst$.
 
 \item We generalize the above idea to define general multiple-cluster hybrid JS-estimators, and provide concentration and convergence results for the squared-error loss and risk, respectively. 
 \item We provide simulation results that support the theoretical results on the loss function. The simulations indicate that the hybrid estimator gives significant risk reduction over the ML-estimator for a wide range of $\bst$ even for modest values of $n$, e.g. $n=50$. The empirical risk of the  hybrid estimator converges  rapidly to the theoretical value with growing $n$. 
\end{itemize}

\subsection{Related work}

George \cite{george3, george1} proposed  a ``multiple shrinkage estimator", which is a convex   combination of multiple subspace-based JS-estimators of the form \eqref{JS_attractor_gen}. The coefficients defining the convex combination give larger weight to the estimators whose target subspaces are closer to $\mbf{y}$. Leung and Barron  \cite{LeungBarron06,LeungThesis} also studied similar ways of combining estimators and their risk properties. Our proposed estimators also seek to emulate the best among a class of subspace-based estimators, but there are some key differences. In \cite{george3, george1}, the target subspaces are fixed a priori, possibly based on  prior knowledge about where $\bst$ might lie. In the absence of such prior knowledge, it may not be possible to choose good target subspaces. This motivates the estimators proposed in this paper, which use a target subspace constructed from the data $\mbf{y}$.  The nature  of clustering in $\bst$ is inferred from  $\mbf{y}$, and used to define a suitable subspace. 

Another difference from earlier work is in how the attracting vector is determined given a target subspace $\mathbb{V}$. Rather than choosing the attracting vector as the projection of $\mbf{y}$ onto $\mathbb{V}$, we use an approximation of  the projection of $\bst$ onto 
$\mathbb{V}$. This approximation is computed from $\mbf{y}$, and concentration inequalities are  provided to guarantee the goodness of the approximation. 

The risk of a JS-like estimator is typically computed using Stein's lemma \cite{stein2}. However, the data-dependent subspaces we use result in estimators that are hard to analyze using this technique. We therefore use concentration inequalities to bound the loss function of the proposed estimators. Consequently, our theoretical bounds get sharper as the dimension $n$ increases, but may not be accurate for small $n$.   However, even for relatively small $n$, simulations indicate that the risk reduction over the ML-estimator is significant  for a wide range of $\bst$.

Noting that the shrinkage factor multiplying $\mbf{y}$ in \eqref{eq_regular_JS} could be negative, Stein proposed the following positive-part JS-estimator \cite{stein}:
 \begin{equation}\label{pp_JS}
  \hat{\boldsymbol{\theta}}_{JS_{+}} = \left[1-\frac{(n-2)\sigma^2}{\Vert\mathbf{y}\Vert^2} \right]_+\mathbf{y}, 
 \end{equation}
 where $X_+$ denotes $\max(0,X)$.  We can similarly define positive-part versions of  JS-like estimators such as \eqref{JS_attractor_gen} and \eqref{eq_one_part_estimator}. The positive-part Lindley's estimator is given by 
 \begin{equation}\label{pp_lindley}
   \hat{\boldsymbol{\theta}}_{L_+} = \bar{y}\mathbf{1}   + \left[1 - \frac{(n-3)\sigma^2}{\norm{\mathbf{y} - \bar{y}\mathbf{1}}^2} \right]_+ \left( \mathbf{y} - \bar{y}\mathbf{1}  \right ).
  \end{equation}
 Baranchik \cite{baranchik} proved that $\hat{\boldsymbol{\theta}}_{JS_{+}}$ dominates $\hat{\boldsymbol{\theta}}_{JS}$, and his result also proves that $\hat{\boldsymbol{\theta}}_{L_+}$ dominates $\hat{\boldsymbol{\theta}}_{L}$.      Estimators that dominate $\hat{\boldsymbol{\theta}}_{JS_{+}}$ are discussed in \cite{shao,maruyama2}. 
 Fig. \ref{fig_JS_vs_Lindley} shows  that  the positive-part versions can give noticeably lower loss than the regular JS and Lindley estimators.  However, for large $n$,  the shrinkage factor  is positive with high probability, hence the positive-part estimator is nearly always identical to the regular JS-estimator. Indeed,  for large $n$, $\frac{\norm{\mbf{y}}^2}{n} \approx \frac{\norm{\bst}^2}{n} +\sigma^2$, and the shrinkage factor is 
 \[ \left( 1- \frac{(n-2) \sigma^2}{\norm{\mbf{y}}^2} \right) \approx  \left( 1- \frac{(n-2) \sigma^2}{\norm{\mbf{\bst}}^2 + n \sigma^2} \right) >0. \]
 
 We analyze the positive-part version of the proposed hybrid estimator using concentration inequalities. Though we cannot guarantee that the hybrid estimator dominates the positive-part JS or Lindley estimators for any finite $n$,  we show that for large $n$, the loss of the hybrid estimator is equal to the minimum of that of the positive-part Lindley's estimator and the cluster-based estimator with high probability (Theorems \ref{thm3} and \ref{thm_hybrid_partition}).

The rest of the paper is organized as follows. In Section \ref{sec_2_attractor}, a two-cluster JS-estimator is proposed and its performance analyzed. Section \ref{sec_2_hybrid} presents a hybrid JS-estimator along with its performance analysis. General multiple-attractor JS-estimators are discussed in Section \ref{sec_multi_attractor}, and  simulation results to corroborate the theoretical analysis are provided in Section \ref{sec:simulations}. The proofs of the main results are given in Section \ref{sec:proofs}.  Concluding remarks and possible directions for future research constitute Section \ref{sec_conc}. 

\subsection{Notation} \label{subsec:notation}
Bold lowercase letters are used to denote vectors, and plain lowercase letters for their entries.  For example, the entries of  $\mathbf{y} \in \mathbb{R}^n$ are  $y_i$, $i=1,\cdots,n$. All vectors have length $n$ and are column vectors, unless otherwise mentioned. For vectors $\mathbf{y},  \mbf{z} \in \mathbb{R}^{n }$, $\langle \mathbf{y},\mathbf{z} \rangle$ denotes their Euclidean inner product. The all-zero vector and the all-one vector of length $n$ are denoted by  $\mathbf{0}$ and $\mathbf{1}$, respectively. The complement of a set $A$ is denoted by $A^c$. For a finite set $A$ with real-valued elements, $\min(A)$ denotes the minimum of the elements in $A$. We use  $\mathsf{1}_{\{\mathcal{E}\}}$ to denote the indicator function of an event $\mathcal{E}$. A central chi-squared distributed random variable with $n$ degrees of freedom is denoted by $\mathcal{X}^2_{n}$. 
The $Q$-function is given by $Q(x) = \int_x^\infty \frac{1}{\sqrt{2\pi}}\exp(-\frac{x^2}{2})dx$, and $Q^c(x):=1-Q(x)$. For a random variable $X$, $X_+$ denotes $\max(0,X)$. For real-valued functions $f(x)$ and $g(x)$, the notation $f(x) = o(g(x))$ means that $\lim_{x \to 0} [{f(x)}/{g(x)}] = 0$, and $f(x) = O(g(x))$ means that $\lim_{x \to \infty} [{f(x)}/{g(x)}] = c$ for some positive constant $c$. 

For a sequence of random variables $\{X_n\}_{n=1}^{\infty}$, $X_n \overset{P}{\longrightarrow} X$, $X_n \overset{a.s.}{\longrightarrow} X$, and $X_n \overset{\mathcal{L}^1}{\longrightarrow} X$ respectively denote convergence in probability, almost sure convergence, and convergence in $\mathcal{L}^1$ norm to the random variable $X$. 

We use the following shorthand for concentration inequalities. Let $\{X_n(\bst), \bst \in \mathbb{R}^n \}_{n=1}^\infty$ be a sequence of random variables. The notation $X_n(\bst) \doteq X$, where $X$ is either a random variable or a constant, means that for any $\epsilon > 0$, 
\begin{equation}\label{notation_doteq}
 \mathbb{P}\left(\vert X_n(\bst) - X \vert\geq \epsilon \right) \leq Ke^{-\frac{nk \min(\epsilon^2,1)}{\max(\Vert \bst\Vert^2/n,1)}},
\end{equation}
where $K$ and $k$ are positive constants that do not depend on $n$ or $\bst$. The exact values  of $K$ and $k$ are not specified. 

The shrinkage estimators we propose have the general form
\[ \bsth =  \boldsymbol{\nu} +    \left[1 - \frac{n\sigma^2}{\left\Vert \mathbf{y} - \boldsymbol{\nu} \right \Vert^2} \right]_+ \left( \mathbf{y} - \boldsymbol{\nu}  \right ). \]
For $1 \leq i \leq n$,  the $i$th component of the attracting vector $\boldsymbol{\nu}$ is the attractor for $y_i$  (the point towards  which it is shrunk).


\section{A two-cluster James-Stein estimator} \label{sec_2_attractor}

Recall the example in Section \ref{sec:intro} where $\bst$ has half its components equal to $\norm{\bst}/\sqrt{n}$, and the other half equal to $\norm{\bst}/\sqrt{n}$. Ideally, we would like to shrink the $y_i$'s corresponding to the first group towards 
$\norm{\bst}/\sqrt{n}$, and the remaining  points towards $-\norm{\bst}/\sqrt{n}$. However, without an oracle, we cannot accurately guess which point each $y_i$ should be shrunk towards.  We would like to obtain an estimator that identifies separable clusters in $\mbf{y}$, constructs a suitable attractor for each cluster, and shrinks the $y_i$ in each cluster towards its attractor. 

We start by dividing the observed data  into two clusters based on a separating point $s_{\mbf{y}}$, which is obtained from $\mbf{y}$. A natural choice for the $s_{\mbf{y}}$ would be the empirical mean $\bar{\theta}$; since this is unknown we use $s_{\mbf{y}}=\bar{y}$. Define the clusters
\begin{align*}
\mathcal{C}_1 & \vcentcolon= \{y_i,  \ 1 \leq i \leq n  \mid y_i > \bar{y} \}, \\
\mathcal{C}_2 & \vcentcolon= \{y_i,  \ 1 \leq i \leq n  \mid y_i \leq \bar{y} \}. 
\end{align*}
The points in  $\mathcal{C}_1$ and $\mathcal{C}_2$ will be shrunk towards attractors $a_1(\mathbf{y})$ and $a_2(\mathbf{y})$, respectively, where $a_1, a_2: \mathbb{R}^n \to \mathbb{R}$ are defined in \eqref{eq_attractor1} later in this section. For brevity, we henceforth do not indicate the dependence of the attractors on $\mathbf{y}$. Thus the attracting vector is
\be \label{two_partition_attractor}
 \boldsymbol{\nu}_{2} \vcentcolon= a_1\begin{bmatrix}
  \mathsf{1}_{\{y_1 > \bar{y}\}}\\
  \mathsf{1}_{\{y_2 > \bar{y}\}}\\
  \vdots \\
  \mathsf{1}_{\{y_n  > \bar{y}\}}
  \end{bmatrix}
+ a_2\begin{bmatrix}
  \mathsf{1}_{\{y_1 \leq \bar{y}\}}\\
  \mathsf{1}_{\{y_2 \leq \bar{y}\}}\\
  \vdots \\
  \mathsf{1}_{\{y_n  \leq \bar{y}\}}
  \end{bmatrix},
 \ee
 with $a_1$ and $a_2$ defined in \eqref{eq_attractor1}. The proposed estimator is 
 \begin{align} \nonumber
   \hat{\boldsymbol{\theta}}_{JS_{2}} &= \boldsymbol{\nu}_{2}   + \left[1 - \frac{n\sigma^2}{\left\Vert \mathbf{y} - \boldsymbol{\nu}_{2} \right \Vert^2} \right]_+ \left( \mathbf{y} - \boldsymbol{\nu}_{2}  \right ) \\ \label{eq_two_part_estimator}
   &= \boldsymbol{\nu}_{2}   + \left[1 - \frac{\sigma^2}{g\left({\left\Vert \mathbf{y} - \boldsymbol{\nu}_{2} \right \Vert^2}/{n}\right)} \right] \left( \mathbf{y} - \boldsymbol{\nu}_{2}  \right ),
\end{align}
where  the function $g$ is defined as
 \begin{equation}\label{eq_g_func}
   g(x) \vcentcolon= \max(\sigma^2,x), \quad x \in \mathbb{R}.
 \end{equation}
 
 The attracting vector $\boldsymbol{\nu}_{2}$ in \eqref{two_partition_attractor} lies in a {two}-dimensional subspace defined by the orthogonal vectors 
$[\mathsf{1}_{\{y_1 > \bar{y} \}},  \cdots,  \mathsf{1}_{\{y_n  > \bar{y}\}} ]^T$ and 
 $[\mathsf{1}_{\{y_1  \leq \bar{y} \}},  \cdots,  \mathsf{1}_{\{y_n  \leq \bar{y}} ]^T$. 
To derive the values of $a_1$ and $a_2$ in \eqref{two_partition_attractor}, it is useful to compare $\boldsymbol{\nu}_{2}$ to the attracting vector of Lindley's estimator in \eqref{eq_one_part_estimator}. Recall that Lindley's attracting vector lies in the one-dimensional subspace spanned by $\mbf{1}$. The vector lying in this subspace that is closest in Euclidean distance to $\bst$ is its projection $\bar{\theta}\mbf{1}$. Since $\bar{\theta}$ is unknown, we use the approximation $\bar{y}$ to define the attracting vector $\bar{y} \mbf{1}$.

Analogously, the vector in the two-dimensional subspace defined by \eqref{two_partition_attractor} that is closest to $\bst$ is the {projection} of  $\bst$ onto this subspace. Computing this projection,  the desired values for $a_1, a_2$ are found to be
\be a^{des}_1 =  \frac{\sum_{i=1}^n\theta_i\mathsf{1}_{\{y_i > \bar{y}\}}}{\sum_{i=1}^n\mathsf{1}_{\{y_i > \bar{y} \}}}, \quad
a^{des}_2= \frac{\sum_{i=1}^n\theta_i\mathsf{1}_{\{y_i \leq \bar{y}\}}}{\sum_{i=1}^n\mathsf{1}_{\{y_i \leq \bar{y} \}}}.  \label{eq:a1a2_des}
\ee
As the $\theta_i$'s are not available, we define the attractors $a_1, a_2$ as approximations of $a^{des}_1, a^{des}_2$, obtained using the following concentration results. 

\begin{lemma}\label{prop_y1_y2}
 We have 
 \begin{align}
\label{eq:lem1_yigeq} 
&   \frac{1}{n} \sum_{i=1}^n y_i \mathsf{1}_{\{y_i > \bar{y}\}} \doteq  \frac{1}{n} \sum_{i=1}^n\theta_i\mathsf{1}_{\{y_i > \bar{y}\}}+ 
\frac{\sigma}{n \sqrt{2\pi}}\sum_{i=1}^n e^{-\frac{\left(\bar{\theta}-\theta_i\right)^2}{2\sigma^2}}, \\
\label{eq:lem1_yileq}
&  \frac{1}{n} \sum_{i=1}^n y_i\mathsf{1}_{\{y_i \leq \bar{y}\}} \doteq  \frac{1}{n} \sum_{i=1}^n\theta_i\mathsf{1}_{\{y_i \leq \bar{y}\}} - \frac{\sigma}{n \sqrt{2\pi}}\sum_{i=1}^n e^{-\frac{\left(\bar{\theta}-\theta_i\right)^2}{2\sigma^2}},\\ 
\label{eq:lem1_theti_yi}
&  \frac{1}{n} \sum_{i=1}^n \theta_i \mathsf{1}_{\{y_i > \bar{y}\}} \doteq  \sum_{i=1}^n \theta_i Q\left(\frac{\bar{\theta}-\theta_i}{\sigma}\right), \\ 
\label{eq:lem1_theti_yi1}
& \frac{1}{n} \sum_{i=1}^n \theta_i \mathsf{1}_{\{y_i \leq \bar{y}\}} \doteq \frac{1}{n} \sum_{i=1}^n \theta_i Q^c\left(\frac{\bar{\theta}-\theta_i}{\sigma}\right), \\
\label{eq:lem1_1yi}
&\mathbb{P}\left(  \frac{1}{n}\left\vert \sum_{i=1}^n \mathsf{1}_{\{y_i > \bar{y}\}} -  \sum_{i=1}^n Q\left(\frac{\bar{\theta}-\theta_i}{\sigma}\right) \right \vert \geq \epsilon \right) \leq Ke^{-nk\epsilon^2}, \\ 
\label{eq:lem1_1yi1}
&\mathbb{P}\left( \frac{1}{n}\left\vert  \sum_{i=1}^n \mathsf{1}_{\{y_i \leq \bar{y}\}}  -   \sum_{i=1}^n Q^c\left(\frac{\bar{\theta}-\theta_i}{\sigma}\right) \right \vert \geq \epsilon \right)  \leq Ke^{-nk\epsilon^2}.
 \end{align}
 where $Q^c\left(\frac{\bar{\theta}-\theta_i}{\sigma} \right ) \vcentcolon= 1- Q\left(\frac{\bar{\theta}-\theta_i}{\sigma} \right )$. Recall from Section \ref{subsec:notation} that the symbol  $\doteq$ is shorthand for a concentration inequality of the form \eqref{notation_doteq}.
\end{lemma}
The proof is given in Appendix \ref{pf_prop_y1_y2}.

Using Lemma \ref{prop_y1_y2}, we can obtain estimates for $a^{des}_1, a^{des}_2$ in \eqref{eq:a1a2_des} provided we have an estimate for the term   $\frac{\sigma}{n \sqrt{2\pi}}\sum_{i=1}^n e^{-\frac{\left(\bar{\theta}-\theta_i\right)^2}{2\sigma^2}}$.  This is achieved via the following concentration result.

\begin{lemma}\label{prop_bias}
Fix $\delta > 0$.  Then for any $\epsilon >0$, we have
\begin{align} \nonumber
&\mathbb{P}\left(\Bigg\vert \frac{\sigma^2}{2n\delta}\sum_{i=0}^{n} \mathsf{1}_{\left\{\left\vert y_i - \bar{y}\right\vert \leq \delta \right\}} - \left(\frac{\sigma}{n\sqrt{2\pi}}\sum_{i=0}^{n} e^{-\frac{\left(\bar{\theta}-\theta_i\right)^2}{2\sigma^2}} + \kappa_n\delta  \right) \Bigg\vert \right.\\ \label{eq_prop_bias}
& \geq \epsilon \Bigg) \leq 10e^{-{nk\epsilon^2}},
\end{align}
where $k$ is a positive constant and $\vert \kappa_n \vert \leq \frac{1}{\sqrt{2\pi e}}$.
\end{lemma}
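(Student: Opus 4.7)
The natural approach is a triangle-inequality decomposition into (i) replacing the data-dependent threshold $\bar{y}$ by the deterministic $\bar{\theta} := \frac{1}{n}\sum_i \theta_i$; (ii) concentrating the resulting empirical count around its expectation; and (iii) a deterministic mean-value-theorem computation that identifies the $\kappa_n \delta$ bias. Introduce $S := \sum_{i=1}^n \mathsf{1}_{\{|y_i - \bar{y}|\leq \delta\}}$, $S' := \sum_{i=1}^n \mathsf{1}_{\{|y_i - \bar{\theta}|\leq \delta\}}$, and $p_i := \mathbb{P}(|y_i - \bar{\theta}|\leq \delta)$, and bound the quantity in the lemma by the sum of the three scaled pieces $\tfrac{\sigma^2}{2n\delta}|S - S'|$, $\tfrac{\sigma^2}{2n\delta}|S' - \sum_i p_i|$, and $\bigl|\tfrac{\sigma^2}{2n\delta}\sum_i p_i - \bigl(\tfrac{\sigma}{n\sqrt{2\pi}}\sum_i e^{-(\bar{\theta}-\theta_i)^2/(2\sigma^2)} + \kappa_n \delta\bigr)\bigr|$.

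For (i), the indicators defining $S$ and $S'$ disagree only on indices $i$ for which $y_i$ lies in the symmetric difference of the two windows; writing $\bar{w} := \bar{y} - \bar{\theta} \sim \mathcal{N}(0,\sigma^2/n)$, on the event $\{|\bar{w}|\leq \eta\}$ this symmetric difference is contained in the deterministic set $U_\eta := [\bar{\theta}-\delta-\eta,\bar{\theta}-\delta+\eta] \cup [\bar{\theta}+\delta-\eta,\bar{\theta}+\delta+\eta]$. The Gaussian density bound gives $\mathbb{P}(y_i \in U_\eta)\leq 4\eta/\sqrt{2\pi\sigma^2}$ uniformly in $i$, and the indicators $\mathsf{1}_{\{y_i \in U_\eta\}}$ are independent; combining Hoeffding applied to $\tfrac{1}{n}\sum_i \mathsf{1}_{\{y_i \in U_\eta\}}$ with the Gaussian tail $\mathbb{P}(|\bar{w}|>\eta)\leq 2e^{-n\eta^2/(2\sigma^2)}$, and taking $\eta$ proportional to $\epsilon$ (with constant depending on $\sigma,\delta$), handles (i). For (ii) the summands in $S'$ are independent Bernoullis with means $p_i$, so plain Hoeffding yields a $2e^{-nk\epsilon^2}$ bound. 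For (iii), the mean value theorem for integrals gives $p_i = 2\delta\, f_{y_i}(\tilde{x}_i)$ for some $\tilde{x}_i \in [\bar{\theta}-\delta,\bar{\theta}+\delta]$, where $f_{y_i}$ is the $\mathcal{N}(\theta_i,\sigma^2)$ density, so $\tfrac{\sigma^2}{2\delta} p_i = \tfrac{\sigma}{\sqrt{2\pi}} e^{-(\tilde{x}_i - \theta_i)^2/(2\sigma^2)}$; a second MVT step between $\tilde{x}_i$ and $\bar{\theta}$ applied to $x\mapsto e^{-(x-\theta_i)^2/(2\sigma^2)}$, combined with the elementary identity $\max_{u\in\mathbb{R}} |u|e^{-u^2/2} = e^{-1/2}$, yields the pointwise bound $\bigl|\tfrac{\sigma^2}{2\delta} p_i - \tfrac{\sigma}{\sqrt{2\pi}}e^{-(\bar{\theta}-\theta_i)^2/(2\sigma^2)}\bigr| \leq \delta/\sqrt{2\pi e}$. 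Averaging over $i$ defines $\kappa_n$ satisfying exactly the claimed bound $|\kappa_n| \leq 1/\sqrt{2\pi e}$. A triangle inequality and union bound over the finitely many Chernoff-type events involved give the final form $10e^{-nk\epsilon^2}$.

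\textbf{Main obstacle.} Step (i) is the crux. A direct bounded-differences/McDiarmid argument on $\tfrac{1}{n}\sum_i \mathsf{1}_{\{|y_i - \bar{y}|\leq \delta\}}$ is far too lossy, since perturbing a single $y_k$ shifts $\bar{y}$ by $1/n$ and can in principle flip every one of the $n$ indicators. The key idea is to exploit the $\sigma/\sqrt{n}$ scale of $\bar{w}$ to localize $\bar{y}$ near $\bar{\theta}$ with exponential tail, and then reduce the indicator discrepancy to a count over a \emph{deterministic} strip $U_\eta$ of width $O(\eta)$, on which the standard Hoeffding-for-independent-Bernoullis argument becomes available; everything downstream is then routine.
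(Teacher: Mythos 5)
Your proposal is correct and follows essentially the same route as the paper: the paper also decomposes into replacing $\bar{y}$ by $\bar{\theta}$ (via its Lemma \ref{lem:fhy_conc}, whose underlying proof is exactly your localization of $\bar{y}-\bar{\theta}$, reduction to a deterministic strip of width $O(\eta)$ with the Gaussian density bound, and Hoeffding with $\eta\propto\epsilon$), then Hoeffding for the count with the deterministic threshold $\bar{\theta}$, then the mean-value-theorem computation bounding the bias by $\delta/\sqrt{2\pi e}$ per term via the maximal derivative of the Gaussian density. The only cosmetic difference is that you inline the strip argument rather than invoking the paper's general auxiliary lemma.
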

The proof is given in Appendix \ref{pf_prop_bias}.
\begin{note}\label{note_delta_bound}
Henceforth in this paper, $\kappa_n$ is used to denote a generic bounded constant (whose exact value is not needed) that is a coefficient of $\delta$ in expressions of the form $f(\delta) = a + \kappa_n\delta + o(\delta)$ where $ a$ is some constant. As an example to illustrate its usage, let $f(\delta) = \frac{1}{a + b \delta}$, where $a > 0$ and $\vert b \delta \vert < a$. Then, we have $f(\delta) = \frac{1}{a(1 + b \delta/a)} = \frac{1}{a}(1 + \frac{b}{a}\delta + o(\delta)) = \frac{1}{a} + \kappa_n\delta + o(\delta)$. 
\end{note}
Using Lemmas \ref{prop_y1_y2} and \ref{prop_bias}, the two attractors are defined to be
\begin{equation}
\begin{split}
\label{eq_attractor1}
  {a}_1 = \frac{\sum_{i=1}^n y_i\mathsf{1}_{\left\{y_i > \bar{y}\right\}} - \frac{\sigma^2}{2\delta}\sum_{i=0}^{n} \mathsf{1}_{\left\{\vert y_i - \bar{y} \vert \leq \delta \right\}} }{\sum_{i=1}^n \mathsf{1}_{\left\{y_i > \bar{y}\right\}}}, \\ 
  {a}_2 = \frac{\sum_{i=1}^n y_i\mathsf{1}_{\left\{y_i \leq \bar{y}\right\}} + \frac{\sigma^2}{2\delta}\sum_{i=0}^{n} \mathsf{1}_{\left\{\vert y_i - \bar{y} \vert \leq \delta \right\}} }{\sum_{i=1}^n \mathsf{1}_{\left\{y_i \leq \bar{y}\right\}}}.
  \end{split}
\end{equation}
With $\delta >0$ chosen to be a small positive number, this completes the specification of the attracting vector in \eqref{two_partition_attractor}, and hence the two-cluster JS-estimator in \eqref{eq_two_part_estimator}.

Note that $\boldsymbol{\nu}_2$, defined by \eqref{two_partition_attractor}, \eqref{eq_attractor1}, is an approximation of the projection of $\bst$ onto the two-dimensional subspace $\mathbb{V}$ spanned by the  vectors $[1_{y_1 > \bar{y}},\cdots, 1_{y_n > \bar{y}}]^T$ and $[1_{y_1 \leq \bar{y}},  \cdots,1_{y_n \leq \bar{y}}]^T$. We remark that $\boldsymbol{\nu}_2$, which approximates the vector in $\mathbb{V}$ that is closest  to $\bst$, is distinct from the projection of $\mbf{y}$ onto $\mathbb{V}$. While the analysis is easier (there would be no terms involving $\delta$) if $\boldsymbol{\nu}_2$ were chosen to be a projection of $\by$ (instead of $\bst$) onto $\mathbb{V}$, our numerical simulations suggest that this choice yields significantly higher risk. The intuition behind choosing the projection of  $\bst$ onto $\mathbb{V}$ is that if all the $y_i$ in a group are to be attracted to a common point (without any prior information), a natural choice would be the mean of the $\theta_i$ within the group, as  in \eqref{eq:a1a2_des}. This mean is determined by the term $\mathbb{E}(\sum_{i=1}^n \theta_i \mathsf{1}_{\{y_i \geq \bar{y} \}})$, which  is different from $\mathbb{E}(\sum_{i=1}^n y_i \mathsf{1}_{\{y_i \geq \bar{y} \}})$   because  
$$\mathbb{E}\left( \sum_{i=1}^n (y_i - \theta_i) \mathsf{1}_{\{y_i \geq \bar{y} \}} \right)  = \mathbb{E}\left( \sum_{i=1}^n w_i \mathsf{1}_{\{y_i \geq \bar{y} \}} \right) \neq 0.
$$ The term involving $\delta$ in \eqref{eq_attractor1} approximates $\mathbb{E}(\sum_{i=1}^n w_i \mathsf{1}_{\{y_i \geq \bar{y} \}})$.

\begin{note}
The attracting vector  $\boldsymbol{\nu}_{2} $ is dependent not just on $\mathbf{y}$ but also on $\delta$, through the two attractors $a_1$ and $a_2$. In Lemma \ref{prop_bias}, for the deviation probability in \eqref{eq_prop_bias} to fall exponentially in $n$, $\delta$ needs to be held constant and independent of $n$. From a practical design point of view,  what is needed is $n\delta^2 \gg 1$. Indeed , for $\frac{\sigma^2}{2n\delta}\sum_{i=0}^{n} \mathsf{1}_{\left\{\left\vert y_i - \bar{y}\right\vert \leq \delta \right\}}$ to be a reliable approximation of the term $\frac{\sigma}{n\sqrt{2\pi}}\sum_{i=1}^ne^{-\frac{\left(\bar{\theta}-\theta_i\right)^2}{2\sigma^2}}$, it is shown in Appendix \ref{pf_prop_bias}, specifically in \eqref{prop_bias_eq1} that we need $n\delta^2 \gg 1$. Numerical experiments suggest  a value of $5/\sqrt{n}$ for $\delta$ to be  large enough for a good approximation.  
\end{note}
We now present the first main result of the paper.
\begin{thm}\label{thm1}
 The loss function of the two-cluster JS-estimator in \eqref{eq_two_part_estimator} satisfies the following:
 \begin{enumerate}[(1)]
 \item For any $\epsilon > 0$, and for any fixed $\delta > 0$ that is independent of $n$,
 \begin{align}
 \nonumber
 &\mathbb{P}\left( \Bigg \vert \frac{1}{n}\norm{\boldsymbol{\theta} - \hat{\boldsymbol{\theta}}_{JS_{2}}}^2 - \left[\min\left(\beta_n,\frac{\beta_n\sigma^2}{\alpha_n+\sigma^2}\right) + \kappa_n\delta  \right. \right. \\ 
 \label{eq1_thm1_statement}
 & \hspace{0.5in}+ o(\delta) \bigg]\Bigg \vert\geq  \epsilon \Bigg) \leq Ke^{-\frac{nk \min(\epsilon^2,1)}{\max(\Vert \bst\Vert^2/n,1)}},
 \end{align}
 where $\alpha_n, \beta_n$ are given by \eqref{eq_alpha2} and \eqref{eq_beta2} below, and $K$ is a positive constant that is independent of $n$ and $\delta$, while $k = \Theta(\delta^2)$ is another positive constant  that is independent of $n$ (for a fixed $\delta$).
 \item For a sequence of $\bst$ with increasing dimension $n$, if $\limsup_{n \to \infty} \Vert\bst \Vert^2/n < \infty$, we have
\begin{align} %
 \nonumber
& \hspace{-18pt} \lim_{n\to \infty} \bigg \vert\frac{1}{n}R( \boldsymbol{\theta}, \hat{\boldsymbol{\theta}}_{JS_{2}})- \bigg[\min\left(\beta_n,\frac{\beta_n\sigma^2}{\alpha_n+\sigma^2}\right) + \kappa_n\delta + o(\delta)\bigg]\bigg \vert \\ 
\label{eq2_thm1_statement}
  &= 0.
\end{align} 
 \end{enumerate}
The constants $\beta_n, \alpha_n$ are given by 
 \begin{equation}\label{eq_beta2}
 \beta_n \vcentcolon= \frac{\Vert \boldsymbol{\theta} \Vert^2}{n} - \frac{ c_1^2}{n}\sum_{i=1}^nQ\left(\frac{\bar{\theta}-\theta_i}{\sigma}\right) - \frac{c_2^2}{n}\sum_{i=1}^nQ^c\left(\frac{\bar{\theta}-\theta_i}{\sigma}\right),
\end{equation}
\begin{equation}\label{eq_alpha2}
 \alpha_n \vcentcolon=\beta_n- \left(\frac{2\sigma}{n\sqrt{2\pi}} \right)\left(\sum_{i=1}^ne^{-\frac{\left(\bar{\theta}-\theta_i\right)^2}{2\sigma^2}} \right)\left(c_1 - c_2\right),
 \end{equation}
where
\begin{equation}\label{eq_c_1_c_2}
  c_1 \vcentcolon = \frac{\sum_{i=1}^n \theta_iQ\left(\frac{\bar{\theta}-\theta_i}{\sigma}\right)}{\sum_{i=1}^nQ\left(\frac{\bar{\theta}-\theta_i}{\sigma}\right)}, ~~c_2 \vcentcolon = \frac{\sum_{i=1}^n \theta_iQ^c\left(\frac{\bar{\theta}-\theta_i}{\sigma}\right)}{\sum_{i=1}^nQ^c\left(\frac{\bar{\theta}-\theta_i}{\sigma}\right)}.                                                                         
\end{equation}
\end{thm}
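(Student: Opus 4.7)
The plan is to decompose the normalized loss into a handful of pieces whose concentrations follow from Lemmas~\ref{prop_y1_y2} and \ref{prop_bias}, then reassemble them through the continuous dependence implicit in $g$. Writing $c := 1 - \sigma^2/g\!\left(\tfrac{1}{n}\|\mathbf{y}-\boldsymbol{\nu}_2\|^2\right)$, the identity $\hat{\boldsymbol{\theta}}_{JS_2} - \bst = (1-c)(\boldsymbol{\nu}_2 - \bst) + c\,\mathbf{w}$ gives
\[
\tfrac{1}{n}\|\bst - \hat{\boldsymbol{\theta}}_{JS_2}\|^2 \;=\; (1-c)^2 A_n \;+\; 2c(1-c)\, B_n \;+\; c^2 D_n,
\]
where $A_n := \tfrac{1}{n}\|\bst - \boldsymbol{\nu}_2\|^2$, $B_n := \tfrac{1}{n}\langle \boldsymbol{\nu}_2 - \bst, \mathbf{w}\rangle$, and $D_n := \tfrac{1}{n}\|\mathbf{w}\|^2$. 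The task therefore reduces to pinning down the concentration points of these three quantities and of the scalar $E_n := \tfrac{1}{n}\|\mathbf{y}-\boldsymbol{\nu}_2\|^2 = A_n - 2B_n + D_n$ that controls $c$.

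First I would apply Lemmas~\ref{prop_y1_y2} and \ref{prop_bias} to the ratio defining $a_1,a_2$ in \eqref{eq_attractor1}: the bounds \eqref{eq:lem1_yigeq}--\eqref{eq:lem1_1yi1} combined with \eqref{eq_prop_bias} are arranged precisely so that the $\sigma/(n\sqrt{2\pi})\sum e^{-(\bar\theta-\theta_i)^2/(2\sigma^2)}$ bias appearing in $\tfrac{1}{n}\sum y_i\mathsf{1}_{\{y_i>\bar y\}}$ is cancelled by $\tfrac{\sigma^2}{2n\delta}\sum\mathsf{1}_{\{|y_i - \bar y|\leq\delta\}}$, up to a $\kappa_n\delta+o(\delta)$ slack. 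This gives $a_1 \doteq c_1$ and $a_2 \doteq c_2$ with the same slack. Expanding $\|\bst-\boldsymbol{\nu}_2\|^2$ in bilinear form and using the identities $c_1\cdot \tfrac{1}{n}\sum Q\!\left(\tfrac{\bar\theta-\theta_i}{\sigma}\right) = \tfrac{1}{n}\sum \theta_i Q\!\left(\tfrac{\bar\theta-\theta_i}{\sigma}\right)$ and its $Q^c$ analogue then collapses cross terms to $A_n \doteq \beta_n + \kappa_n\delta + o(\delta)$. Standard chi-squared concentration yields $D_n \doteq \sigma^2$, and for $B_n$ I would rewrite $\tfrac{1}{n}\sum w_i\mathsf{1}_{\{y_i>\bar y\}} = \tfrac{1}{n}\sum(y_i-\theta_i)\mathsf{1}_{\{y_i>\bar y\}}$ and combine \eqref{eq:lem1_yigeq} with \eqref{eq:lem1_theti_yi} to show it concentrates on $F_n := \tfrac{\sigma}{n\sqrt{2\pi}}\sum e^{-(\bar\theta-\theta_i)^2/(2\sigma^2)}$, with the analogue yielding $-F_n$ on $\mathcal{C}_2$; Gaussian concentration of $\tfrac{1}{n}\sum\theta_i w_i \doteq 0$ then gives $B_n \doteq (c_1-c_2)F_n + \kappa_n\delta$. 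Combining, $E_n \doteq \alpha_n + \sigma^2 + \kappa_n\delta$, since $\alpha_n = \beta_n - 2(c_1-c_2)F_n$ by \eqref{eq_alpha2}.

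The piecewise structure of $g(x)=\max(\sigma^2,x)$ now forces a case split on the sign of $\alpha_n$. If $\alpha_n \geq 0$, Lipschitz continuity of $x \mapsto 1-\sigma^2/\max(\sigma^2,x)$ gives $c \doteq \alpha_n/(\alpha_n+\sigma^2)$ and $1-c \doteq \sigma^2/(\alpha_n+\sigma^2)$; substituting into the three-term decomposition and using $2(c_1-c_2)F_n = \beta_n - \alpha_n$, the pieces simplify to $(1-c)\beta_n - c(1-c)\alpha_n + c^2\sigma^2$, which a short calculation collapses to $\sigma^2\beta_n/(\alpha_n+\sigma^2)$. If $\alpha_n < 0$, the truncation in $g$ is active, so $c \doteq 0$, $1-c \doteq 1$, and the loss concentrates on $A_n \doteq \beta_n$. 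Because $\sigma^2\beta_n/(\alpha_n+\sigma^2) \leq \beta_n$ precisely when $\alpha_n \geq 0$, both regimes merge into $\min\!\left(\beta_n,\sigma^2\beta_n/(\alpha_n+\sigma^2)\right)$, yielding \eqref{eq1_thm1_statement} once the constants are tracked through the combinations (the exponent $k=\Theta(\delta^2)$ is inherited from Lemma~\ref{prop_bias}).

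For part~(2), I would upgrade convergence in probability to convergence of the expectation. The convexity bound $\tfrac{1}{n}\|\bst-\hat{\boldsymbol{\theta}}_{JS_2}\|^2 \leq A_n + D_n$ (since $\hat{\boldsymbol{\theta}}_{JS_2}$ is a convex combination of $\boldsymbol{\nu}_2$ and $\mathbf{y}$) together with integration of the exponential tails from \eqref{notation_doteq} gives a uniformly integrable family under $\limsup\|\bst\|^2/n<\infty$; applying $\mathbb{E}X = \int_0^\infty\mathbb{P}(X>t)\,dt$ to \eqref{eq1_thm1_statement} then yields \eqref{eq2_thm1_statement}. The main obstacle I expect is the regime boundary $\alpha_n\approx 0$, where $g$ is nondifferentiable and the map $E_n\mapsto c$ has a corner: one must argue that the concentration of $c$ is uniform across the two regimes, that the $\min$ expression glues the two asymptotes continuously with no additional slack, and that the constants $K,k$ in \eqref{notation_doteq} do not degenerate at the transition. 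Managing this uniformly, while simultaneously keeping the $\kappa_n\delta+o(\delta)$ bookkeeping consistent through sums, products, and the $\max$ in $g$, is the delicate part of the argument; everything else is algebraic manipulation driven by the two lemmas.
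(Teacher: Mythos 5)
Your part (1) is correct and is essentially the paper's own argument in a lightly re-parametrized form. Writing the loss as $(1-c)^2A_n+2c(1-c)B_n+c^2D_n$ and concentrating $A_n$, $B_n$, $D_n$, $E_n$ separately is algebraically equivalent to the paper's route, which eliminates the cross term via the identity $-\tfrac{2}{n}\langle \by-\boldsymbol{\nu}_2,\mathbf{w}\rangle=\tfrac1n\norm{\bst-\boldsymbol{\nu}_2}^2-\tfrac1n\norm{\by-\boldsymbol{\nu}_2}^2-\tfrac1n\norm{\mathbf{w}}^2$ and works only with $A_n,E_n,D_n$; your claims $a_1\doteq c_1+\kappa_n\delta+o(\delta)$, $A_n\doteq\beta_n+\kappa_n\delta+o(\delta)$ and $E_n\doteq\alpha_n+\sigma^2+\kappa_n\delta+o(\delta)$ are exactly Lemmas \ref{lem3} and \ref{lem_theta}, obtained from Lemmas \ref{prop_y1_y2} and \ref{prop_bias} together with the product/reciprocal/$g$ concentration lemmas (Lemmas \ref{lem1_app2}, \ref{lem1_app}, \ref{app_lemma_g}, \ref{lem_app_conc_ineq3}), and your final algebra, including the merge of the two regimes through $\min\bigl(\beta_n,\beta_n\sigma^2/(\alpha_n+\sigma^2)\bigr)=\beta_n\sigma^2/g(\alpha_n+\sigma^2)$, checks out. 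Your worry about the boundary $\alpha_n\approx0$ is unfounded: $x\mapsto\sigma^2/g(x)$ is globally Lipschitz with constant $1/\sigma^2$, so the concentration of $c$ is uniform, no constants degenerate, and the paper needs no case analysis until it reads off the minimum at the very end.

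The genuine gap is in part (2). Uniform integrability does not follow from ``the convexity bound $\tfrac1n\norm{\bst-\bsth_{JS_2}}^2\le A_n+D_n$ together with integration of the exponential tails from \eqref{notation_doteq}''. The tail bounds in \eqref{notation_doteq} saturate at $\epsilon=1$ (for $\epsilon\ge1$ they only give the constant $Ke^{-nk/\max(\norm{\bst}^2/n,1)}$), so they can neither be integrated over the unbounded range nor control moments of an unbounded quantity; and $A_n=\tfrac1n\norm{\bst-\boldsymbol{\nu}_2}^2$ is not almost surely bounded, since the attractors $a_1,a_2$ in \eqref{eq_attractor1} are ratios whose denominators (the cluster counts) can be as small as $1$ and whose $\delta$-term can be of order $n/\delta$ on rare events; UI of $A_n$ itself would need a separate rare-event/moment argument that you do not supply. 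The paper sidesteps this by never discarding the weight $\sigma^2/g(\norm{\by-\boldsymbol{\nu}_2}^2/n)$: each term in its expansion is then almost surely bounded by a constant plus a constant times $\norm{\mathbf{w}}^2/n$, e.g. $\sigma^2A_n/g(E_n)\le2\sigma^2(E_n+D_n)/g(E_n)\le2(\sigma^2+D_n)$ because $A_n\le2E_n+2D_n$ and $g(E_n)\ge\max(\sigma^2,E_n)$; UI then follows from UI of $\norm{\mathbf{w}}^2/n$ via Lemma \ref{lem_ui} and Fact \ref{fact_UI}, and Jensen's inequality converts $\mathcal{L}^1$ convergence of the loss into convergence of the risk. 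The same repair is available inside your own decomposition: bound $(1-c)^2A_n\le2\sigma^2+2D_n$ and $\lvert2c(1-c)B_n\rvert\le2\sqrt{(1-c)^2A_n}\,\sqrt{D_n}\le(2\sigma^2+2D_n)+D_n$, rather than dropping the $(1-c)^2$ factor before taking expectations.
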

The proof of the theorem is given in Section \ref{subsec:thm1_proof}. 

\begin{rem}\label{note_beta_n}
 In Theorem \ref{thm1}, $\beta_n$ represents the concentrating value for the distance between $\bst$ and the attracting vector $\boldsymbol{\nu}_2$. (It is shown in Sec. \ref{subsec:thm1_proof}  that $\Vert \bst - \boldsymbol{\nu}_2 \Vert^2/n$ concentrates around $\beta_n +\kappa_n \delta$.) Therefore, the closer $\bst$ is to the attracting subspace, the lower the normalized asymptotic  risk $R( \boldsymbol{\theta}, \hat{\boldsymbol{\theta}}_{JS_{2}})/n$.   The term $\alpha_n + \sigma^2$  represents the concentrating value for the distance between $\by$ and $\boldsymbol{\nu}_2$. (It is shown in  Sec. \ref{subsec:thm1_proof} that $ \Vert \by - \boldsymbol{\nu}_2 \Vert^2/n$ concentrates around $\alpha_n + \sigma^2 + \kappa_n \delta$.)
 \end{rem}
 
 \begin{rem}
Comparing $\beta_n$ in \eqref{eq_beta2} and $\alpha_n$ in \eqref{eq_alpha2}, we note that $\beta_n \geq  \alpha_n$ because
 \be 
 \label{eq:c1minc2}
 c_1 - c_2 = \frac{- n \sum_{i=1}^n (\theta_i - \bar{\theta}) Q\left(\frac{\theta_i-\bar{\theta}}{\sigma}\right)}{\left(\sum_{i=1}^nQ\left(\frac{\theta_i-\bar{\theta}}{\sigma}\right)\right)\left( \sum_{i=1}^nQ^c\left(\frac{\theta_i-\bar{\theta}}{\sigma}\right)\right)} \geq 0. 
 \ee
To see \eqref{eq:c1minc2},  observe that in the sum in the numerator, the $Q( \cdot )$ function assigns larger weight to the terms with $(\theta_i - \bar{\theta}) <0$  than to the terms with $(\theta_i - \bar{\theta}) >0$. 

Furthermore, $\alpha_n \approx \beta_n$ for large $n$  if  either $\vert \theta_i - \bar{\theta} \vert \approx 0, \forall i$, or $\vert \theta_i - \bar{\theta} \vert \to \infty, \forall i$. In the first case,  if $\theta_i = \bar{\theta}$,  for $i = 1,\cdots,n$, we get $\beta_n = \alpha_n = \Vert \bst - \bar{\theta}\mathbf{1}  \Vert^2/n = 0$. In the second case, suppose that $n_1$ of the $\theta_i$ values equal $p_1$ and the remaining $(n-n_1)$ values equal $-p_2$ for some $p_1,p_2 >0$. Then, as $p_1, p_2 \to \infty$, it can be verified that $\beta_n \to [\Vert \bst \Vert^2 - n_1 p_1^2 - (n-n_1)p_2^2]/n = 0$. Therefore, the asymptotic normalized risk $R( \boldsymbol{\theta}, \hat{\boldsymbol{\theta}}_{JS_{2}})/n$ converges to $0$ in both cases.
\end{rem}

The proof of Theorem \ref{thm1} further leads to the following corollaries.
\begin{corr}\label{cor_pp_JS}
 The loss function of the positive-part JS-estimator in \eqref{pp_JS} satisfies the following:
 \begin{enumerate}[(1)]
 \item For any $\epsilon > 0$, 
 \begin{equation*}
 \mathbb{P}\left( \left \vert\frac{\norm{\boldsymbol{\theta} - \hat{\boldsymbol{\theta}}_{JS_{+}}}^2}{n} - \frac{\gamma_n\sigma^2}{\gamma_n+\sigma^2}  \right \vert\geq  \epsilon \right) \leq Ke^{-nk\min(\epsilon^2,1)},
 \end{equation*}
 where $  \gamma_n  \vcentcolon = {\left\Vert \boldsymbol{\theta} \right \Vert^2}/n$, and $K$ and $k$ are positive constants.
 \item For a sequence of $\bst$ with increasing dimension $n$, if $\limsup_{n \to \infty} \Vert\bst \Vert^2/n < \infty$, we have $$ \lim_{n\to \infty} \left\vert \frac{1}{n}R( \boldsymbol{\theta}, \hat{\boldsymbol{\theta}}_{JS_+}) - \frac{\gamma_n\sigma^2}{\gamma_n+\sigma^2} \right\vert = 0.$$
 \end{enumerate}
 \end{corr}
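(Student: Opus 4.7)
The plan is to observe that $\hat{\boldsymbol{\theta}}_{JS_+}$ has exactly the same structural form as $\hat{\boldsymbol{\theta}}_{JS_2}$ in \eqref{eq_two_part_estimator}, with the attracting vector $\boldsymbol{\nu}_2$ replaced by $\mathbf{0}$ (and $(n-2)$ in place of $n$ in the shrinkage factor, a change that is asymptotically negligible). Consequently, Corollary \ref{cor_pp_JS} should follow by specializing the proof of Theorem \ref{thm1} to this simpler case. With $\boldsymbol{\nu} = \mathbf{0}$, the analogues of $\beta_n$ and $\alpha_n$ in \eqref{eq_beta2}, \eqref{eq_alpha2} both reduce to $\gamma_n = \|\bst\|^2/n$: there is no bias-correction term, and therefore no dependence on $\delta$. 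The expression $\min(\beta_n, \beta_n \sigma^2/(\alpha_n + \sigma^2))$ collapses to $\gamma_n \sigma^2/(\gamma_n + \sigma^2)$, matching the claimed limit.

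Concretely, I would expand
\[
\frac{1}{n}\|\bst - \hat{\boldsymbol{\theta}}_{JS_+}\|^2 = \frac{\|\bst\|^2}{n} - 2 s(\by)\,\frac{\langle \bst, \by\rangle}{n} + s(\by)^2 \,\frac{\|\by\|^2}{n},
\]
where $s(\by) := \bigl[1 - (n-2)\sigma^2/\|\by\|^2\bigr]_+$. The first step is to establish the sub-Gaussian concentrations
\[
\frac{\|\by\|^2}{n} \doteq \gamma_n + \sigma^2, \qquad \frac{\langle \bst, \by\rangle}{n} \doteq \gamma_n,
\]
using the facts that $\|\mathbf{w}\|^2/n \doteq \sigma^2$ (a chi-square tail bound) and $\langle \bst, \mathbf{w}\rangle/n$ is Gaussian with variance $\sigma^2 \gamma_n/n$, whose deviations are controlled by $\max(\gamma_n,1)$ exactly in the way demanded by the shorthand \eqref{notation_doteq}. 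The second step is to deduce concentration of the shrinkage factor $s(\by) \doteq \gamma_n/(\gamma_n + \sigma^2)$; because the function $x \mapsto [1 - (n-2)\sigma^2/x]_+$ is smooth and bounded on the event $\{\|\by\|^2/n \geq \sigma^2/2\}$, a union bound combining this event with the concentration of $\|\by\|^2/n$ gives the desired inequality. Combining these ingredients, the three displayed terms concentrate to $\gamma_n$, $-2\gamma_n^2/(\gamma_n+\sigma^2)$, and $\gamma_n^2(\gamma_n+\sigma^2)/(\gamma_n+\sigma^2)^2$, respectively, summing to $\gamma_n \sigma^2/(\gamma_n + \sigma^2)$. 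Assembling these via the usual union-bound and Lipschitz arguments (and absorbing lower-order $O(1/n)$ corrections from the $(n-2)$ versus $n$ discrepancy into the $\min(\epsilon^2,1)$ exponent) yields part (1).

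For part (2), the passage from concentration to convergence of the risk requires uniform integrability of $\|\bst - \hat{\boldsymbol{\theta}}_{JS_+}\|^2/n$. This is straightforward because $\|\hat{\boldsymbol{\theta}}_{JS_+}\| \leq \|\by\|$, so the loss is bounded by $(\|\bst\| + \|\by\|)^2 / n$, whose second moment is controlled by $\gamma_n$ and $\sigma^2$; under the hypothesis $\limsup_{n} \gamma_n < \infty$, the sub-Gaussian tail from part (1) together with this uniform quadratic bound delivers $\mathcal{L}^1$ convergence in the standard fashion. The main obstacle I anticipate is the careful bookkeeping of the concentration radius on the event where the shrinkage factor hits zero (which becomes vanishingly unlikely as $n \to \infty$ whenever $\gamma_n$ is not itself $o(1)$), but this is already handled in the $[\cdot]_+$ analysis within the proof of Theorem \ref{thm1}, so the corollary truly is a byproduct of that argument rather than an independent calculation.
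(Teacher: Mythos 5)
Your proposal is correct and follows essentially the same route as the paper, which gives no separate proof but states Corollary \ref{cor_pp_JS} as a direct byproduct of the proof of Theorem \ref{thm1}: with the fixed attractor $\mathbf{0}$ the quantities playing the roles of $\alpha_n$ and $\beta_n$ both become $\gamma_n$, the $\delta$-terms disappear, and $\min\left(\beta_n,\beta_n\sigma^2/(\alpha_n+\sigma^2)\right)$ collapses to $\gamma_n\sigma^2/(\gamma_n+\sigma^2)$, exactly as you argue. Your direct expansion of the loss in terms of the shrinkage factor $s(\by)$, the concentration of $\Vert\by\Vert^2/n$, $\langle\bst,\by\rangle/n$, and $s(\by)$, and the uniform-integrability step for part (2) are just a streamlined specialization of the same concentration machinery (Lemmas \ref{lem1_app2}, \ref{lem1_app}, \ref{lem_app_conc_ineq3} and Fact \ref{fact_UI}), with the $(n-2)$ versus $n$ discrepancy correctly dismissed as an $O(1/n)$ correction.
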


Note that the positive-part Lindley's estimator in \eqref{pp_lindley} is essentially a single-cluster estimator which shrinks all the points towards $\bar{y}$. Henceforth, we denote it by $\bsth_{JS_1}$.
\begin{corr}\label{cor_pp_JS_Lindley} 
The loss function of the positive-part Lindley's estimator in \eqref{pp_lindley} satisfies the following:
\begin{enumerate}[(1)]
 \item For any $\epsilon > 0$, 
 \begin{equation*}
 \mathbb{P}\left( \left \vert\frac{\norm{\boldsymbol{\theta} - \hat{\boldsymbol{\theta}}_{JS_1}}^2}{n} - \frac{\rho_n\sigma^2}{\rho_n+\sigma^2}  \right \vert\geq  \epsilon \right) \leq Ke^{-nk\min(\epsilon^2,1)},
 \end{equation*}
 where $K$ and $k$ are positive constants, and 
 \begin{equation}\label{eq_rho_n}
  \rho_n  \vcentcolon = \frac{\left\Vert \boldsymbol{\theta} - \bar{\theta}\mathbf{1}\right \Vert^2}{n}.
 \end{equation}
 \item For a sequence of $\bst$ with increasing dimension $n$, if $\limsup_{n \to \infty} \Vert\bst \Vert^2/n < \infty$, we have
 \begin{equation}\label{eq_corr_pp_lindley}
   \lim_{n\to \infty} \left\vert \frac{1}{n}R\left( \boldsymbol{\theta}, \hat{\boldsymbol{\theta}}_{JS_1}\right) - \frac{\rho_n\sigma^2}{\rho_n+\sigma^2} \right\vert = 0.
 \end{equation}  
 \end{enumerate}
\end{corr}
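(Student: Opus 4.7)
The plan is to adapt the proof of Theorem \ref{thm1} to the one-cluster setting, where the attracting vector is simply $\bar{y}\mathbf{1}$ (a projection of $\by$ onto the one-dimensional subspace spanned by $\mathbf{1}$) and no $\delta$-correction is needed. Because this attractor involves $\by$ rather than $\bst$, the analysis is in fact simpler: there is no analogue of Lemma \ref{prop_bias}, so we expect the concentrating value to appear without the $\kappa_n\delta + o(\delta)$ error.

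First I would establish the following two concentration statements, both of which are standard and follow from Gaussian/chi-squared tail bounds combined with the identity $\by - \bar{y}\mathbf{1} = (\bst - \bar{\theta}\mathbf{1}) + (\mathbf{w} - \bar{w}\mathbf{1})$:
\begin{equation*}
\tfrac{1}{n}\norm{\by - \bar{y}\mathbf{1}}^2 \doteq \rho_n + \sigma^2, \qquad \tfrac{1}{n}\norm{\mathbf{w} - \bar{w}\mathbf{1}}^2 \doteq \sigma^2, \qquad \tfrac{1}{n}\langle \mathbf{w}-\bar{w}\mathbf{1},\, \bst - \bar{\theta}\mathbf{1}\rangle \doteq 0.
\end{equation*}
The first two follow from $\chi^2$-concentration (the second has $n-1$ degrees of freedom, negligible at this scale); the third is a linear combination of mean-zero Gaussians with variance $O(\norm{\bst-\bar\theta\mathbf{1}}^2/n^2)$, controlled by $\norm{\bst}^2/n$, matching the shape of \eqref{notation_doteq}.

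Next I would analyze the shrinkage factor $s \vcentcolon= (n-3)\sigma^2/\norm{\by - \bar{y}\mathbf{1}}^2$. By the first display above together with continuity of $x \mapsto 1/x$ away from zero, $s \doteq \sigma^2/(\rho_n + \sigma^2) < 1$, so the positive part $[1-s]_+$ equals $1-s$ except on an event of probability $\leq Ke^{-nk}$. On the high-probability event, write
\begin{equation*}
\bst - \hat{\bst}_{L_+} = \bst - \bar{y}\mathbf{1} - (1-s)(\by - \bar{y}\mathbf{1}) = -(\mathbf{w} - \bar{w}\mathbf{1}) + s(\by - \bar{y}\mathbf{1}),
\end{equation*}
so that
\begin{equation*}
\tfrac{1}{n}\norm{\bst - \hat{\bst}_{L_+}}^2 = \tfrac{1}{n}\norm{\mathbf{w} - \bar{w}\mathbf{1}}^2 - 2s\,\tfrac{1}{n}\langle \mathbf{w}-\bar{w}\mathbf{1},\, \by-\bar{y}\mathbf{1}\rangle + s^2\,\tfrac{1}{n}\norm{\by-\bar{y}\mathbf{1}}^2.
\end{equation*}
Splitting $\by - \bar{y}\mathbf{1} = (\bst - \bar\theta\mathbf{1}) + (\mathbf{w} - \bar{w}\mathbf{1})$ in the cross term and plugging in the concentrations, the right-hand side concentrates around $\sigma^2 - 2\sigma^4/(\rho_n+\sigma^2) + \sigma^4/(\rho_n+\sigma^2) = \rho_n\sigma^2/(\rho_n+\sigma^2)$, giving part (1). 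A standard algebraic lemma (Lipschitz composition of concentration inequalities, as used in Theorem \ref{thm1}) packages this into the single bound of the form \eqref{notation_doteq}; the bound $\rho_n \leq \norm{\bst}^2/n$ ensures the denominator $\max(\norm{\bst}^2/n,1)$ absorbs all scale dependence.

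For part (2), the boundedness hypothesis $\limsup_n \norm{\bst}^2/n < \infty$ together with the concentration inequality gives convergence in probability of $\tfrac{1}{n}\norm{\bst - \hat{\bst}_{L_+}}^2 - \rho_n\sigma^2/(\rho_n+\sigma^2)$ to $0$. To upgrade this to convergence of expectations, I would argue uniform integrability: using $\norm{\hat{\bst}_{L_+} - \bar{y}\mathbf{1}} \leq \norm{\by - \bar{y}\mathbf{1}}$ (the positive-part factor lies in $[0,1]$), one has $\norm{\bst - \hat{\bst}_{L_+}}^2/n \leq 2(\norm{\bst - \bar{y}\mathbf{1}}^2 + \norm{\by - \bar{y}\mathbf{1}}^2)/n$, and the second moments of both quantities are uniformly bounded under the hypothesis. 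The main obstacle I anticipate is bookkeeping this uniform integrability step cleanly; the concentration/algebra part is a direct one-cluster specialization of the machinery already developed for Theorem \ref{thm1}.
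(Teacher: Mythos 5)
Your overall strategy---specializing the Theorem \ref{thm1} machinery to the attractor $\bar y\mathbf{1}$, with the three concentration statements, the product/reciprocal lemmas, and a uniform-integrability step for the risk---is exactly the route the paper intends (it obtains the corollary as a byproduct of the proof of Theorem \ref{thm1}). However, one step as written is wrong: the claim that $s=(n-3)\sigma^2/\norm{\by-\bar y\mathbf{1}}^2$ satisfies $s<1$ except on an event of probability at most $Ke^{-nk}$ with constants independent of $\bst$. Since $\tfrac1n\norm{\by-\bar y\mathbf{1}}^2$ concentrates around $\rho_n+\sigma^2$, the gap between $s$ and $1$ is of order $\rho_n/(\rho_n+\sigma^2)$, which is not bounded away from zero: for $\rho_n=0$ (all $\theta_i$ equal, precisely the regime where Lindley's estimator is most useful) the clipping event $\{s\geq 1\}$ is $\{\chi^2_{n-1}\leq n-3\}$, whose probability is close to $1/2$, not exponentially small. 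So conditioning on the non-clipping event does not yield a bound of the form \eqref{notation_doteq}. The conclusion still holds, but you need one more ingredient: either (i) avoid event-splitting altogether by writing the estimator through $g(x)=\max(\sigma^2,x)$ and using the exact identity \eqref{eq3_thm1} with $\boldsymbol{\nu}_{2}$ replaced by $\bar y\mathbf{1}$---all terms then concentrate and the limiting value is $\rho_n\sigma^2/g(\rho_n+\sigma^2)=\rho_n\sigma^2/(\rho_n+\sigma^2)$ with no case analysis, since $\rho_n\geq 0$; or (ii) keep your split but add that clipping can have non-negligible probability only when $\rho_n=O(\epsilon)$ (again by the concentration of $\tfrac1n\norm{\by-\bar y\mathbf{1}}^2$), and on the clipping event the loss is $\tfrac1n\norm{\bst-\bar y\mathbf{1}}^2=\rho_n+\bar w^2$, which differs from $\rho_n\sigma^2/(\rho_n+\sigma^2)$ by $\rho_n^2/(\rho_n+\sigma^2)+\bar w^2$, i.e.\ within the tolerance $\epsilon$ up to a term that itself concentrates at zero.

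Two smaller points. Your identity $\bst-\hat{\boldsymbol{\theta}}_{L_+}=-(\mathbf{w}-\bar w\mathbf{1})+s(\by-\bar y\mathbf{1})$ is off by $\bar w\mathbf{1}$: the correct expression is $-\mathbf{w}+s(\by-\bar y\mathbf{1})$. This is harmless here, since the discrepancy changes the normalized loss by exactly $\bar w^2$, which concentrates at zero, but it should be corrected. The uniform-integrability argument for part (2) is sound and parallels the paper's use of Lemma \ref{lem_ui} and Fact \ref{fact_UI}; with the positive-part issue repaired as above, the rest of your argument goes through.
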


\begin{rem}
Statement $(2)$ of Corollary \ref{cor_pp_JS}, which  is known in the literature \cite{beran}, implies that $\bsth_{JS_{+}}$ is asymptotically minimax over Euclidean balls. Indeed, if $\Theta_n$ denotes  the set of $\bst$ such that $\gamma_n=\norm{\bst}^2/n \leq c^2$, then Pinsker's theorem  \cite[Ch. 5]{johnstonebook} implies that the minimax risk over $\Theta_n$ is asymptotically (as $n \to \infty$) equal to  $\frac{\sigma^2 c^2}{c^2 + \sigma^2}$. 

Statement $(1)$ of Corollary \ref{cor_pp_JS} and both the statements of Corollary \ref{cor_pp_JS_Lindley} are new, to the best of our knowledge. Comparing Corollaries \ref{cor_pp_JS} and \ref{cor_pp_JS_Lindley}, we observe that $\rho_n \leq \gamma _n$  since  $\norm{\bst -  \bar{\theta} \mbf{1}} \leq \norm{\bst}$ for all $\bst \in \mathbb{R}^n$ with strict inequality whenever $ \bar{\theta} \neq 0$. Therefore  the positive-part Lindley's estimator  asymptotically dominates the positive part JS-estimator. 
\end{rem}

\begin{figure}[t]
    \centering
    \includegraphics[width= 3.25in]{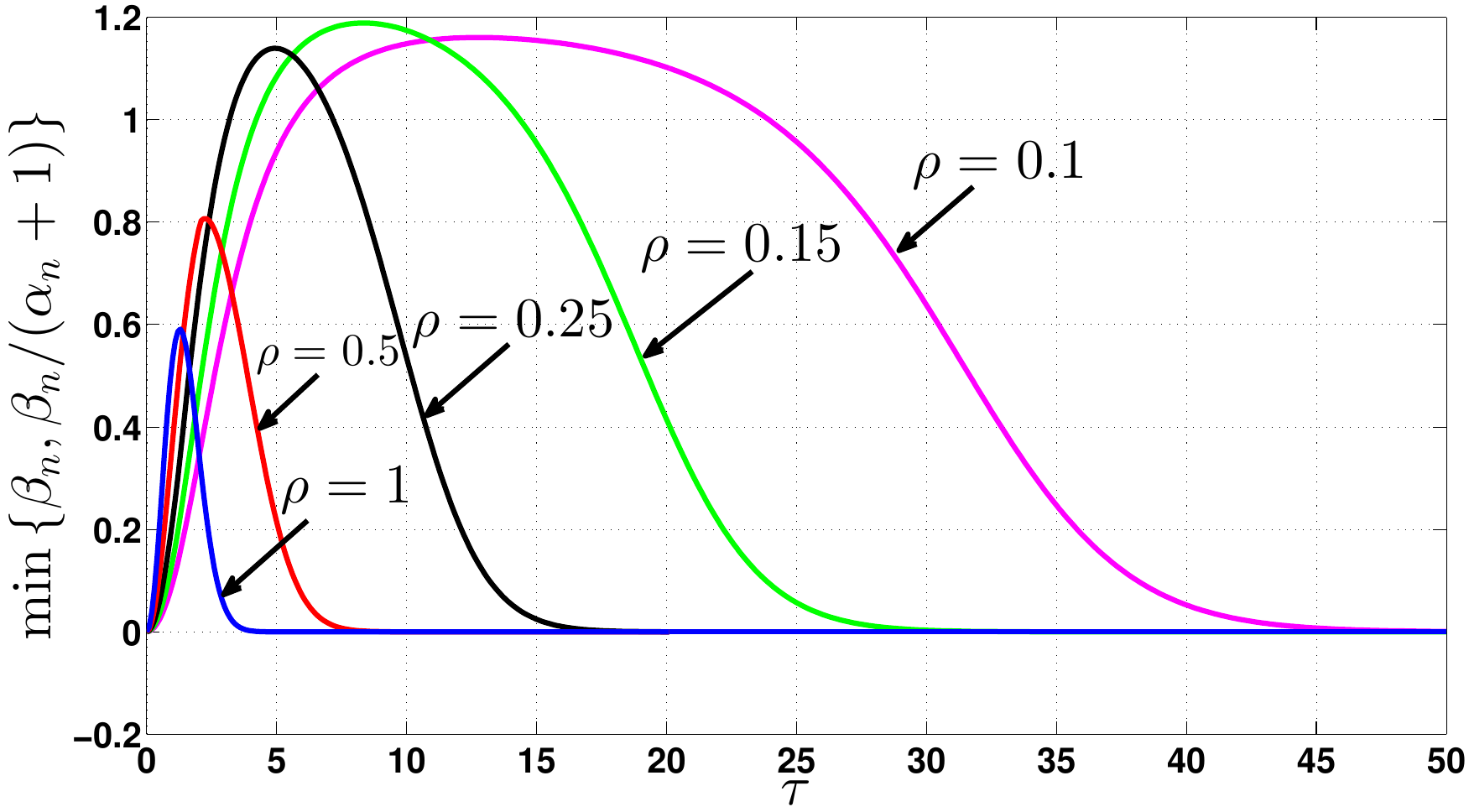}    
    \caption{\small The asymptotic risk  term $\min \{ \beta_n, \beta_n/(\alpha_n+\sigma^2) \}$ for the two-cluster estimator is plotted vs $\tau$ for $n =1000$, $\sigma=1$, and different values of $\rho$. Here, the components of $\bst$ take only two values, $\tau$ and $-\rho \tau$. The number of components taking the value $\tau$ is $\lfloor n\rho/(1+\rho) \rfloor$. }
    \label{fig_beta_n}
    \vspace{-5pt}
\end{figure} 

It is well known that both $\bsth_{JS_{+}}$ and $\bsth_{JS_{1}}$ dominate the ML-estimator \cite{baranchik}. From Corollary \ref{cor_pp_JS}, it is clear that asymptotically, the normalized risk of $\bsth_{JS_{+}}$ is small when $\gamma_n$ is small, i.e.,  when $\bst$ is close to the origin. Similarly, from Corollary \ref{cor_pp_JS_Lindley}, the asymptotic normalized risk of $\bsth_{JS_{1}}$ is small when $\rho_n$ is small, which occurs when  the components of $\bst$ are all very close to the mean $\bar{\theta}$. It is then natural to ask if the two-cluster estimator $\bsth_{JS_{2}}$ dominates $\bsth_{ML}$, and when its asymptotic normalized risk is close to $0$. To answer these questions, we use the following example, shown in Fig. \ref{fig_beta_n}.  Consider $\bst$  whose components take one of  two values, $\tau$ or $-\rho \tau$, such that $\bar{\theta}$ is as close to zero as possible. Hence the number of components taking the value $\tau$ is $\lfloor n\rho/(1+\rho) \rfloor$. Choosing $\sigma =1$, $n=1000$,   the key asymptotic risk term $\min \{ \beta_n, \beta_n/(\alpha_n+\sigma^2) \}$ in Theorem \ref{thm1} is plotted as a function of $\tau$  in Fig. \ref{fig_beta_n} for various values of $\rho$.  

Two important observations can be made from the plots. Firstly, $\min \{ \beta_n, \beta_n/(\alpha_n+\sigma^2) \}$ exceeds $\sigma^2 = 1$ for certain values of $\rho$ and $\tau$. Hence, $\bsth_{JS_{2}}$ does not dominate $\bsth_{ML}$.  Secondly, for any $\rho$, the normalized  risk of $\bsth_{JS_{2}}$ goes to zero for large enough $\tau$.  Note that when $\tau$ is large, both $\gamma_n = \Vert \bst \Vert^2/n$ and $\rho_n = \Vert \bst - \bar{\theta}\mathsf{1}\Vert^2/n$ are  large and hence, the normalized risks of both $\bsth_{JS_{+}}$ and $\bsth_{JS_{1}}$ are close to $1$. So, although $\bsth_{JS_{2}}$ does not dominate $\bsth_{ML}$, $\bsth_{JS_{+}}$ or $\bsth_{JS_{2}}$, there is a range of $\bst$ for which $R( \boldsymbol{\theta}, \hat{\boldsymbol{\theta}}_{JS_2})$ is much lower than both $R( \boldsymbol{\theta}, \hat{\boldsymbol{\theta}}_{JS_{+}})$ and $R( \boldsymbol{\theta}, \hat{\boldsymbol{\theta}}_{JS_1})$. This serves as motivation for designing a hybrid estimator that attempts to pick the better of $\bsth_{JS_1}$ and $\bsth_{JS_2}$ for the $\bst$ in context. This is described in the next section.

In the example of Fig. \ref{fig_beta_n}, it is worth examining why the two-cluster estimator performs poorly for a certain range of $\tau$, while giving significantly risk reduction for large enough $\tau$. First consider  an ideal case, where it is known which components of theta are equal to $\tau$ and which ones are equal to  $-\rho \tau$ (although the values of $\rho, \tau$ may not be known). In this case, we could use a James-Stein estimator $\bsth_{JS_{\mathbb{V}}}$ of the form \eqref{JS_attractor_gen} with the target subspace $\mathbb{V}$ being the two-dimensional subspace with basis vectors 
\[\mathbf{u}_1 \vcentcolon= \begin{bmatrix}
  \mathsf{1}_{\{\theta_1 = \tau \}}\\
  \mathsf{1}_{\{\theta_2 = \tau \}}\\
  \vdots \\
  \mathsf{1}_{\{\theta_n = \tau \}}
  \end{bmatrix}, ~~~ \mathbf{u}_2 \vcentcolon= \begin{bmatrix}
  \mathsf{1}_{\{\theta_1 = -\rho \tau \}}\\
  \mathsf{1}_{\{\theta_2 =  -\rho \tau  \}}\\
  \vdots \\
  \mathsf{1}_{\{\theta_n = -\rho \tau  \}}
  \end{bmatrix}.\] 
 Since $\mathbb{V}$ is a fixed subspace that does not depend on the data, it can be shown that   $\bsth_{JS_{\mathbb{V}}}$ dominates the ML-estimator \cite{george1,george3}.   In the actual problem, we do not have access to the ideal basis vectors $\mathbf{u}_1, \mathbf{u}_2$, so  we cannot use $\bsth_{JS_{\mathbb{V}}}$.  The two-cluster estimator $\bsth_{JS_{2}}$ attempts to approximate 
$\bsth_{JS_{\mathbb{V}}}$ by choosing the target subspace from the data. As shown in \eqref{two_partition_attractor}, this is done using the basis vectors:
\[ \widehat{\mathbf{u}}_1 \vcentcolon= \begin{bmatrix}
  \mathsf{1}_{\{y_1 \geq \bar{y} \}}\\
  \mathsf{1}_{\{ y_2 \geq  \bar{y} \}}\\
  \vdots \\
  \mathsf{1}_{\{ y_n \geq \bar{y} \}}
  \end{bmatrix}, ~~~ \widehat{\mathbf{u}}_2 \vcentcolon= \begin{bmatrix}
  \mathsf{1}_{\{ y_1 < \bar{y}   \}}\\
  \mathsf{1}_{\{ y_2 < \bar{y}  \}}\\
  \vdots \\
  \mathsf{1}_{\{ y_n < \bar{y} \}}
  \end{bmatrix}.
  \] 
Since  $\bar{y}$ is a good approximation for $\bar{\theta} =0$, when the separation between $\theta_i$ and $\bar{\theta}$ is large enough, the noise term $w_i$ is unlikely to pull $y_i = \theta_i +w_i$ into the wrong region; hence, the estimated basis vectors 
$\widehat{\mathbf{u}}_1, \widehat{\mathbf{u}}_2$ will be close to the ideal ones $\mathbf{u}_1, \mathbf{u}_2$.  Indeed, Fig. \ref{fig_beta_n} indicates that when the minimum separation between $\theta_i$ and $\bar{\theta}$ (here, equal to $\rho\tau$)  is at least $4.5\sigma$, then $\widehat{\mathbf{u}}_1, \widehat{\mathbf{u}}_2$ approximate the ideal basis vectors very well, and the normalized risk is close to $0$. On the other hand, the  approximation to the ideal basis vectors turns out to be poor when the components of $\bst$ are neither too close to nor too far from $\bar{\theta}$, as evident from Remark \ref{note_beta_n}. 

\section{Hybrid James-Stein estimator with up to two clusters} \label{sec_2_hybrid}

Depending on the underlying $\boldsymbol{\theta}$, either  the positive-part  Lindley estimator  $\hat{\boldsymbol{\theta}}_{JS_{1}}$ or the two-cluster estimator $\hat{\boldsymbol{\theta}}_{JS_{2}}$ could have a smaller loss (cf. Theorem \ref{thm1} and Corollary \ref{cor_pp_JS_Lindley}). So we would like an estimator that selects the better among $\hat{\boldsymbol{\theta}}_{JS_{1}}$ and $\hat{\boldsymbol{\theta}}_{JS_{2}}$ for the $\boldsymbol{\theta}$ in context.  To this end, we estimate the loss  of $\hat{\boldsymbol{\theta}}_{JS_{1}}$ and $\hat{\boldsymbol{\theta}}_{JS_{2}}$ based on $\mbf{y}$. Based on these loss  estimates, denoted by $\hat{L}(\boldsymbol{\theta},  \hat{\boldsymbol{\theta}}_{JS_{1}})$ and 
$\hat{L}(\boldsymbol{\theta},  \hat{\boldsymbol{\theta}}_{JS_{2}})$ respectively,
 we define a hybrid estimator as
\begin{equation}\label{comb_estimator}
  \hat{\boldsymbol{\theta}}_{JS_{H}} = 
\gamma_\by \hat{\boldsymbol{\theta}}_{JS_{1}} + (1-\gamma_\by) 
 \hat{\boldsymbol{\theta}}_{JS_{2}},
\end{equation}
where $ \hat{\boldsymbol{\theta}}_{JS_{1}}$ and $ \hat{\boldsymbol{\theta}}_{JS_{2}}$ are respectively given by \eqref{pp_lindley} and \eqref{eq_two_part_estimator}, and $\gamma_\by$ is  given by
\begin{equation}\label{eq_gamma}
 \gamma_\by = \left\{ \begin{array}{ccc}
                    1 & \textrm{if } & \frac{1}{n}\hat{L}(\boldsymbol{\theta},  \hat{\boldsymbol{\theta}}_{JS_{1}}) \leq 
                    \frac{1}{n}\hat{L}(\boldsymbol{\theta},  \hat{\boldsymbol{\theta}}_{JS_{2}}), \\
                    0 & \textrm{otherwise.}&\\
                  \end{array} \right.
\end{equation}

The loss function estimates $\hat{L}(\boldsymbol{\theta},  \hat{\boldsymbol{\theta}}_{JS_{1}})$ and $\hat{L}(\boldsymbol{\theta},  \hat{\boldsymbol{\theta}}_{JS_{2}})$ are obtained as follows. Based on Corollary \ref{cor_pp_JS_Lindley},  the loss function of $\hat{\boldsymbol{\theta}}_{JS_{1}}$ can be estimated via an estimate of $\rho_n \sigma^2/(\rho_n + \sigma^2)$, where $\rho_n$ is given by \eqref{eq_rho_n}. It  is straightforward to check, along the lines of the proof of Theorem \ref{thm1}, that 
\begin{align}
  g\left(\frac{\left\Vert \mathbf{y} - \bar{y}\mathbf{1} \right\Vert^2}{n}\right)  \doteq g\left(\rho_n + \sigma^2\right) = \rho_n + \sigma^2.
  \label{eq:gyybar_conv}
\end{align}
 Therefore, an estimate of the normalized loss $L(\boldsymbol{\theta},  \hat{\boldsymbol{\theta}}_{JS_{1}})/n$ is
\begin{equation}\label{eq_est_1_partition_risk}
 \frac{1}{n}\hat{L}(\boldsymbol{\theta},  \hat{\boldsymbol{\theta}}_{JS_{1}}) = \sigma^2\left(1-\frac{ \sigma^2 }{g\left({\left\Vert \mathbf{y} - \bar{y}\mathbf{1} \right\Vert^2}/{n}\right) }\right). 
\end{equation}

 The loss function of the two-cluster estimator  $\hat{\boldsymbol{\theta}}_{JS_{2}}$ can be estimated using Theorem  \ref{thm1}, by estimating  $\beta_n$ and $\alpha_n$  defined in \eqref{eq_alpha2} and \eqref{eq_beta2}, respectively. From Lemma \ref{lem3} in Section \ref{subsec:thm1_proof}, we have 
\begin{equation}\label{eq_alpha_n_est}
 \frac{1}{n}\left\Vert \mathbf{y} - \boldsymbol{\nu}_{2} \right\Vert^2 \doteq \alpha_n + \sigma^2 + \kappa_n\delta + o(\delta).
\end{equation}
Further, using the concentration inequalities in Lemmas \ref{prop_y1_y2}  and \ref{prop_bias} in Section \ref{sec_2_attractor}, we can deduce that 
\begin{align}\nonumber
 &\frac{1}{n}\left\Vert \mathbf{y} - \boldsymbol{\nu}_{2} \right\Vert^2 - \sigma^2 + \frac{\sigma^2}{n\delta}\left(\sum_{i=0}^{n} \mathbf{1}_{\left\{\left\vert y_i - \bar{y}\right\vert \leq \delta \right\}}\right)(a_1 - a_2) \\ \label{eq_beta_n_est}
 & \doteq \beta_n + \kappa_n\delta + o(\delta),
 \end{align}
where $a_1, a_2$ are defined in \eqref{eq_attractor1}.  We now use \eqref{eq_alpha_n_est} and \eqref{eq_beta_n_est} to estimate 
the concentrating value in \eqref{eq1_thm1_statement}, noting that 
\[ \min\left( \beta_n, \frac{\beta_n\sigma^2}{\alpha_n+\sigma^2} \right) =   \frac{\beta_n\sigma^2}{g(\alpha_n+\sigma^2)},\]
where $g(x)=\max(x, \sigma^2)$. This yields the following estimate of $L(\boldsymbol{\theta},  \hat{\boldsymbol{\theta}}_{JS_{2}})/n$:
\begin{align} \nonumber
 & \frac{1}{n}\hat{L}(\boldsymbol{\theta},  \hat{\boldsymbol{\theta}}_{JS_{2}}) = \\ \label{eq_est_2_partition_risk}
 & \frac{\sigma^2\left( \frac{1}{n}\left\Vert \mathbf{y} - \boldsymbol{\nu}_{2} \right\Vert^2 - \sigma^2 + \frac{\sigma^2}{n\delta}(a_1 - a_2) \sum_{i=0}^{n} \mathbf{1}_{\left\{\left\vert y_i - \bar{y}\right\vert \leq \delta \right\}}\right)}{ g(\norm{\mbf{y} - \boldsymbol{\nu}_2}^2/n)} . 
\end{align}
The loss function estimates in \eqref{eq_est_1_partition_risk} and \eqref{eq_est_2_partition_risk} complete the specification of the hybrid estimator in \eqref{comb_estimator} and \eqref{eq_gamma}.  The following theorem characterizes the loss function of the hybrid estimator, by showing that the loss estimates  in \eqref{eq_est_1_partition_risk} and \eqref{eq_est_2_partition_risk} concentrate around the values specified in Corollary \ref{cor_pp_JS_Lindley} and  Theorem \ref{thm1}, respectively.  
\begin{thm}\label{thm2}
The loss function of the hybrid JS-estimator in \eqref{comb_estimator} satisfies the following: 
 \begin{enumerate}[(1)]
  \item For any $\epsilon > 0$, 
 \begin{align*}
 & \mathbb{P}\left( \frac{\norm{\boldsymbol{\theta} - \hat{\boldsymbol{\theta}}_{JS_{H}}}^2}{n} - \min\left(  \frac{\norm{\boldsymbol{\theta} - \hat{\boldsymbol{\theta}}_{JS_{1}}}^2}{n}, \frac{\norm{\boldsymbol{\theta} - \hat{\boldsymbol{\theta}}_{JS_{2}}}^2}{n}\right) \right. \\
 & \hspace{0.2in} \geq  \epsilon \Bigg) \leq Ke^{-\frac{nk \min(\epsilon^2,1)}{\max(\Vert \bst\Vert^2/n,1)}},
 \end{align*}
 where $K$ and $k$ are positive constants.
 \item For a sequence of $\bst$ with increasing dimension $n$, if $\limsup_{n \to \infty} \Vert\bst \Vert^2/n < \infty$, we have
 \begin{align*}
   &\limsup_{n\to \infty} \frac{1}{n} \left( R\left( \boldsymbol{\theta}, \hat{\boldsymbol{\theta}}_{JS_{H}}\right) \right.\\ &\hspace{0.7in}-\min\left[R\left( \boldsymbol{\theta}, \hat{\boldsymbol{\theta}}_{JS_{1}}\right), R\left( \boldsymbol{\theta}, \hat{\boldsymbol{\theta}}_{JS_{2}}\right) \right] \bigg ) \leq  0.
 \end{align*}
\end{enumerate}
\end{thm}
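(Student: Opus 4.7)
The plan is to exploit the selection structure: $\hat{\boldsymbol{\theta}}_{JS_H}$ is literally one of $\hat{\boldsymbol{\theta}}_{JS_1}$ or $\hat{\boldsymbol{\theta}}_{JS_2}$, with the choice dictated by comparing the data-dependent loss estimates $\hat{L}(\boldsymbol{\theta}, \hat{\boldsymbol{\theta}}_{JS_i})$. If both loss estimates concentrate around the concentration values of the corresponding true losses (which are given by Corollary \ref{cor_pp_JS_Lindley} and Theorem \ref{thm1} respectively), then the selection rule must pick the estimator with smaller true loss, up to error terms that shrink exponentially in $n$.

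First I would show two concentration statements for the loss estimates. For $\hat{L}(\boldsymbol{\theta}, \hat{\boldsymbol{\theta}}_{JS_1})/n$, definition \eqref{eq_est_1_partition_risk} together with the concentration \eqref{eq:gyybar_conv} and the fact that $g(\cdot)$ is bounded below by $\sigma^2$ and Lipschitz on that domain yields
\[ \frac{1}{n}\hat{L}(\boldsymbol{\theta},  \hat{\boldsymbol{\theta}}_{JS_{1}}) \;\doteq\; \frac{\rho_n \sigma^2}{\rho_n + \sigma^2}. \]
For $\hat{L}(\boldsymbol{\theta}, \hat{\boldsymbol{\theta}}_{JS_2})/n$, the two building blocks \eqref{eq_alpha_n_est} and \eqref{eq_beta_n_est}, combined via the Lipschitz truncation $g(\cdot)$ to write $\min(\beta_n, \beta_n\sigma^2/(\alpha_n+\sigma^2)) = \beta_n \sigma^2 / g(\alpha_n + \sigma^2)$, give
\[ \frac{1}{n}\hat{L}(\boldsymbol{\theta},  \hat{\boldsymbol{\theta}}_{JS_{2}}) \;\doteq\; \min\!\left(\beta_n, \tfrac{\beta_n\sigma^2}{\alpha_n+\sigma^2}\right) + \kappa_n\delta + o(\delta). \]
Pairing these with Corollary \ref{cor_pp_JS_Lindley} and Theorem \ref{thm1} respectively, and taking $\delta$ small (and a union bound), there is an event $\mathcal{E}$ with $\mathbb{P}(\mathcal{E}^c) \leq Ke^{-nk\min(\epsilon^2,1)/\max(\|\boldsymbol{\theta}\|^2/n,1)}$ on which, for $i=1,2$,
\[ \left| \tfrac{1}{n}\hat{L}(\boldsymbol{\theta}, \hat{\boldsymbol{\theta}}_{JS_i}) - \tfrac{1}{n}\|\boldsymbol{\theta}-\hat{\boldsymbol{\theta}}_{JS_i}\|^2 \right| \leq \epsilon/3. \]

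Part (1) then follows by a case analysis on $\mathcal{E}$. Without loss of generality suppose $\|\boldsymbol{\theta}-\hat{\boldsymbol{\theta}}_{JS_1}\|^2/n \leq \|\boldsymbol{\theta}-\hat{\boldsymbol{\theta}}_{JS_2}\|^2/n$; I want to bound the loss of $\hat{\boldsymbol{\theta}}_{JS_H}$ by this minimum plus $\epsilon$. If $\gamma_{\by}=1$, the bound holds trivially. If $\gamma_{\by}=0$, then $\hat{L}(\boldsymbol{\theta}, \hat{\boldsymbol{\theta}}_{JS_2})/n < \hat{L}(\boldsymbol{\theta}, \hat{\boldsymbol{\theta}}_{JS_1})/n$; stringing together the two $\epsilon/3$ approximations gives $\|\boldsymbol{\theta}-\hat{\boldsymbol{\theta}}_{JS_2}\|^2/n \leq \|\boldsymbol{\theta}-\hat{\boldsymbol{\theta}}_{JS_1}\|^2/n + 2\epsilon/3$, and a choice of $\delta$ small enough so $|\kappa_n\delta + o(\delta)| < \epsilon/3$ absorbs the residual into $\epsilon$.

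For part (2), the hybrid squared loss is bounded above pointwise by $\max(\|\boldsymbol{\theta}-\hat{\boldsymbol{\theta}}_{JS_1}\|^2, \|\boldsymbol{\theta}-\hat{\boldsymbol{\theta}}_{JS_2}\|^2) \leq \|\boldsymbol{\theta}-\hat{\boldsymbol{\theta}}_{JS_1}\|^2 + \|\boldsymbol{\theta}-\hat{\boldsymbol{\theta}}_{JS_2}\|^2$, and both summands have uniformly bounded normalized risk under the hypothesis $\limsup_n \|\boldsymbol{\theta}\|^2/n < \infty$ (by Corollary \ref{cor_pp_JS_Lindley} and Theorem \ref{thm1}, or simply because each JS-type estimator satisfies $\|\boldsymbol{\theta}-\hat{\boldsymbol{\theta}}\|^2 \leq 2\|\boldsymbol{\theta}-\by\|^2 + 2\|\by - \hat{\boldsymbol{\theta}}\|^2$ with the second term a bounded contraction of residuals). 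Thus the family $\{\|\boldsymbol{\theta}-\hat{\boldsymbol{\theta}}_{JS_H}\|^2/n\}$ is uniformly integrable, and exponential concentration from part (1) upgrades to convergence of expectations, giving the stated $\limsup$ statement.

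The main obstacle is in handling the $\kappa_n\delta + o(\delta)$ slack arising from the bias-correction term in $\hat{\boldsymbol{\theta}}_{JS_2}$: the loss-estimate error and the selection comparison have to be orchestrated so the residual $\delta$-dependent terms can be made arbitrarily small before the concentration error is absorbed. Establishing uniform integrability cleanly in part (2) — specifically, arguing that the positive-part truncation in both $\hat{\boldsymbol{\theta}}_{JS_1}$ and $\hat{\boldsymbol{\theta}}_{JS_2}$ keeps the squared loss controlled by an $\mathcal{L}^1$ function uniformly in $n$ — is the technical piece that requires most care.
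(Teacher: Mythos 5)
Your proposal is correct, and Part (1) follows essentially the paper's own route: show $\tfrac{1}{n}\hat{L}(\bst,\bsth_{JS_1}) \doteq \tfrac{1}{n}\norm{\bst-\bsth_{JS_1}}^2$ and $\tfrac{1}{n}\hat{L}(\bst,\bsth_{JS_2}) \doteq \tfrac{1}{n}\norm{\bst-\bsth_{JS_2}}^2$ by pairing the loss-estimate concentration with Corollary \ref{cor_pp_JS_Lindley} and Theorem \ref{thm1}, so the selection rule can pick the worse estimator only when the loss gap is below $\epsilon$ or on an exponentially rare event; the paper phrases this as bounding $\mathbb{P}(\gamma_\by=0,\Delta_n>\epsilon)$ rather than via a good event with $\epsilon/3$ budgets, but it is the same argument. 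For Part (2) you take a genuinely different route. The paper splits $\mathbb{R}^n$ into the regions $\{\gamma_\by=1\}$, $\{\gamma_\by=0,\Delta_n\le\epsilon\}$, $\{\gamma_\by=0,\Delta_n>\epsilon\}$ and controls the bad region by Cauchy--Schwarz against a fourth-moment bound $\bigl(\mathbb{E}\norm{\bsth_{JS_i}-\bst}^4\bigr)^{1/2}$ (whose normalized boundedness it invokes somewhat tersely), multiplied by the exponentially small probability from Part (1). You instead observe that $X_n:=\tfrac{1}{n}\norm{\bst-\bsth_{JS_H}}^2-\min_i\tfrac{1}{n}\norm{\bst-\bsth_{JS_i}}^2 \ge 0$ converges to $0$ in probability, dominate it by $\tfrac{1}{n}\norm{\bst-\bsth_{JS_1}}^2+\tfrac{1}{n}\norm{\bst-\bsth_{JS_2}}^2 \le c+c'\norm{\mathbf{w}}^2/n$ (valid because $\by-\bsth_{JS_\ell}$ is a truncated shrinkage of the residual, so $\norm{\by-\bsth_{JS_\ell}}^2/n\le\sigma^2$), and conclude $\mathbb{E}[X_n]\to 0$ by uniform integrability (Fact \ref{fact_UI}, Lemma \ref{lem_ui}); combined with $\mathbb{E}[\min_i(\cdot)]\le\min_i\mathbb{E}[(\cdot)]$ this gives the stated $\limsup$. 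This mirrors the UI machinery the paper uses for Theorem \ref{thm1}(2), and it buys you a cleaner second-moment-free argument that does not need the fourth-moment condition the paper assumes; the paper's Cauchy--Schwarz route, in turn, gives a slightly more quantitative excess-risk bound ($n\epsilon$ plus an exponentially small term).

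One small caveat: in Part (1) you propose to "choose $\delta$ small enough" to absorb the $\kappa_n\delta+o(\delta)$ slack, but $\delta$ is a fixed design parameter of $\bsth_{JS_2}$ and cannot be tuned after $\epsilon$ is given. The cleaner resolution, which is what the paper implicitly does in \eqref{eq_est_2_partition_risk1}, is that the loss estimate \eqref{eq_est_2_partition_risk} and the actual loss concentrate around the same $\delta$-dependent quantity $\beta_n\sigma^2/g(\alpha_n+\sigma^2)+\kappa_n\delta+o(\delta)$, so the $\delta$-terms cancel in the comparison and no smallness of $\delta$ is needed. This does not affect the validity of your overall argument.
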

The proof of the theorem in given in Section \ref{subsec:thm2_proof}. The theorem implies that the hybrid estimator chooses the better of the $\bsth_{JS_1}$ and $\bsth_{JS_2}$ with high probability, with the probability of choosing the worse estimator decreasing exponentially in $n$. It also implies that asymptotically, $ \hat{\boldsymbol{\theta}}_{JS_{H}}$ dominates both $ \hat{\boldsymbol{\theta}}_{JS_{1}}$ and $ \hat{\boldsymbol{\theta}}_{JS_{2}}$, and hence, $ \hat{\boldsymbol{\theta}}_{ML}$ as well. 

\begin{rem}
Instead of picking one among the two (or several) candidate estimators, one could consider a hybrid estimator which is a weighted  combination of the candidate estimators.  Indeed, George \cite{george1} and Leung and Barron \cite{LeungBarron06,LeungThesis} have proposed combining the estimators using exponential mixture weights based on Stein's unbiased risk estimates (SURE) \cite{stein2}. Due to the presence of indicator functions in the definition of the attracting vector, it is challenging to obtain a SURE for $\bsth_{JS_{2}}$. We therefore  use loss estimates to choose the better estimator. Furthermore, instead of choosing one estimator based on the loss estimate,  if we were to follow the approach in \cite{LeungBarron06} and employ a combination of the estimators using exponential mixture weights based on the un-normalized loss estimates, then the weight assigned to the estimator with the smallest loss estimate is exponentially larger (in $n$) than the other. Therefore, when the dimension is high, this is effectively equivalent to picking the estimator with the smallest loss estimate.
\end{rem}

\section{General multiple-cluster James-Stein estimator}\label{sec_multi_attractor}

In this section, we generalize the two-cluster estimator of Section \ref{sec_2_attractor} to an $L$-cluster estimator defined by an arbitrary partition of the real line.  The partition is defined by $L-1$ functions $s_j \vcentcolon \mathbb{R}^{n} \to \mathbb{R}$, such that 
\begin{equation}\label{eq_multi_part_points}
 s_j(\mathbf{y})\vcentcolon= s_j\left(y_1,\cdots,y_n \right) 	\doteq \mu_j, \quad  \forall j =1,\cdots,(L-1),
\end{equation}
 with constants $\mu_1 > \mu_2 > \cdots > \mu_{L-1}$. In words, the partition can be defined via any $L-1$ functions of $\mbf{y}$,  each of which concentrates around a deterministic value as $n$ increases. In the two-cluster estimator, we only have one function $s_1(\mbf{y}) = \bar{y}$, which concentrates around $\bar{\theta}$. The points in \eqref{eq_multi_part_points} partition the real line as
 \begin{equation*}
 \mathbb{R} =  \left(-\infty,s_{L-1}(\mathbf{y}) \right] \cup \left(s_{L-1}(\mathbf{y}), s_{L-2}(\mathbf{y})\right]  \cup \cdots \cup  \left(s_1(\mathbf{y}), \infty\right).
\end{equation*}

 The clusters are defined as $\mc{C}_j = \{ y_i, \ 1 \leq i \leq n \mid y_i \in (s_{j}(\mathbf{y}), s_{j-1}(\mathbf{y})] \}$, for $1\leq j \leq L$, with $s_{0}(\mathbf{y}) = \infty$ and $s_{L}(\mathbf{y}) = -\infty$. In Section \ref{subsec:obt_clusters}, we discuss one choice of partitioning points to define the $L$ clusters, but here we first construct and analyse an estimator based on a general partition satisfying   \eqref{eq_multi_part_points}.

 The points  in $\mc{C}_j$ are all shrunk towards the same point $a_j$, defined in \eqref{eq:a1aL} later in this section. The attracting vector  is 
 \begin{equation} \label{eq_mult_part_attractor}
\begin{split}
 \boldsymbol{\nu}_{L} & \vcentcolon= \sum_{j=1}^{L}a_j\left[ \begin{array}{c}
  \mathsf{1}_{\{y_1 \in \mathcal{C}_j \}}\\
  \vdots \\
  \mathsf{1}_{\{y_n \in \mathcal{C}_j\}}
  \end{array}
 \right],
\end{split}
\end{equation}
and the proposed $L$-cluster JS-estimator is 
 \begin{equation} \label{eq_mult_part_estimator}
   \hat{\boldsymbol{\theta}}_{JS_{L}} = \boldsymbol{\nu}_{L}   + \left[1 - \frac{\sigma^2}
   {g\left({ \| \mathbf{y} - \boldsymbol{\nu}_{2} \|^2}/{n}\right)} \right] \left( \mathbf{y} - \boldsymbol{\nu}_{L}  \right ),
\end{equation}
where $g(x)=\max(\sigma^2,x)$.

 The attracting vector $\boldsymbol{\nu}_{L}$ lies in an $L$-dimensional subspace defined by the $L$ orthogonal vectors 
$[\mathsf{1}_{\{y_1 \in \mc{C}_1 \}},  \cdots,  \mathsf{1}_{\{y_n \in \mc{C}_1 \}} ]^T, \ldots, $ 
 $[\mathsf{1}_{\{y_1  \in \mc{C}_L \}},  \cdots,  \mathsf{1}_{\{y_n  \leq \mc{C}_L\}} ]^T$.  The desired values for $a_1, \ldots,a_L$ in \eqref{eq_mult_part_attractor} are such that the attracting vector $ \boldsymbol{\nu}_{L}$ is the projection of $\bst$ onto the $L$-dimensional subspace. Computing this projection, we find  the desired values to be the means of the $\theta_i$'s in each cluster:
 \be
 a^{des}_1 =  \frac{\sum_{i=1}^n\theta_i \mathsf{1}_{\{y_i \in  \mc{C}_1 \}}}{\sum_{i=1}^n\mathsf{1}_{\{y_i \in \mc{C}_1 \}}},  \ldots
 , a^{des}_L= \frac{\sum_{i=1}^n\theta_i \mathsf{1}_{\{y_i \in  \mc{C}_L \}}}{\sum_{i=1}^n\mathsf{1}_{\{y_i \in \mc{C}_L \}}}.  \label{eq:a1aL_des}
 \ee
 As the $\theta_i$'s are unavailable, we set $a_1,\ldots, a_L$ to be  approximations  of $a^{des}_1, \ldots,  a^{des}_L$,  obtained using concentration results similar to Lemmas \ref{prop_y1_y2} and \ref{prop_bias} in Section \ref{sec_2_attractor}. The attractors are given by
 \begin{equation}
\begin{split}
 {a}_j &= \frac{\sum_{i=1}^n y_i\mathsf{1}_{\left\{ y_i \in \mathcal{C}_j\right\}} }{\sum_{i=1}^n \mathsf{1}_{\left\{y_i \in \mathcal{C}_j\right\}}} \\
 &\quad-\frac{\frac{\sigma^2}{2\delta}\sum_{i=0}^{n} \left[\mathsf{1}_{\left\{\vert y_i - s_j(\mathbf{y}) \vert \leq \delta \right\}} - \mathsf{1}_{\left\{\vert y_i - s_{j-1}(\mathbf{y}) \vert \leq \delta \right\}}  \right]}{\sum_{i=1}^n \mathsf{1}_{\left\{y_i \in \mathcal{C}_j\right\}}}
\end{split}
\label{eq:a1aL}
\end{equation} 
for $1\leq j \leq L$. With $\delta >0$ chosen to be a small positive number as before, this completes the specification of the attracting vector in \eqref{eq_mult_part_attractor}, and hence the $L$-cluster JS-estimator in \eqref{eq_mult_part_estimator}.
\begin{thm}\label{thm3}
The loss function of the $L$-cluster JS-estimator in \eqref{eq_mult_part_estimator}
satisfies the following:
 \begin{enumerate}[(1)] 
 \item For any $\epsilon > 0$, 
 \begin{align*}
 &\mathbb{P}\left( \left \vert\frac{1}{n}\norm{\boldsymbol{\theta} - \hat{\boldsymbol{\theta}}_{JS_{L}}}^2  - \min\left(\beta_{n,L},\frac{\beta_{n,L}\sigma^2}{\alpha_{n,L}+\sigma^2}\right)\right. \right.\\
 &\hspace{0.2in} -  \kappa_n\delta + o(\delta) \bigg\vert\geq  \epsilon \Bigg) \leq Ke^{-\frac{nk \min(\epsilon^2,1)}{\max(\Vert \bst\Vert^2/n,1)}},
 \end{align*}
 where $K$ and $k$ are positive constants, and 
 \begin{align} 
 \nonumber 
\alpha_{n,L}   &\vcentcolon= \frac{\Vert \boldsymbol{\theta} \Vert^2}{n}  \\ \nonumber
 &- \sum_{j=0}^L\frac{ c_j^2}{n}\sum_{i=1}^n\left[Q\left(\frac{\mu_{j}-\theta_i}{\sigma}\right) -Q\left(\frac{\mu_{j-1}-\theta_i}{\sigma}\right)\right]  \\ 
 \label{eq_m1_thm3}
&- \frac{2\sigma}{n\sqrt{2\pi}} \sum_{j=1}^Lc_j\sum_{i=1}^n\left[e^{-\frac{\left(\mu_{j}-\theta_i\right)^2}{2\sigma^2}}-e^{-\frac{\left(\mu_{j-1}-\theta_i\right)^2}{2\sigma^2}}\right],\\
 \nonumber
\beta_{n,L} & \vcentcolon= \frac{\Vert \boldsymbol{\theta} \Vert^2}{n} \\ \label{eq_m2_thm3}
&- \sum_{j=0}^L\frac{c_j^2}{n}\sum_{i=1}^n\left[Q\left(\frac{\mu_{j}-\theta_i}{\sigma}\right) -Q\left(\frac{\mu_{j-1}-\theta_i}{\sigma}\right)\right],
\end{align}
with 
\begin{equation}\label{eq_cj}
 c_j \vcentcolon= \frac{\sum_{i=1}^n \theta_i \left[Q\left(\frac{\mu_j-\theta_i}{\sigma}\right) -Q\left(\frac{\mu_{j-1}-\theta_i}{\sigma}\right)\right] }{\sum_{i=1}^n\left[Q\left(\frac{\mu_j-\theta_i}{\sigma}\right) -Q\left(\frac{\mu_{j-1}-\theta_i}{\sigma}\right)\right]}, 
\end{equation}
for $ 1\leq j \leq L$.
 \item For a sequence of $\bst$ with increasing dimension $n$, if $\limsup_{n \to \infty} \Vert\bst \Vert^2/n < \infty$, we have
 \begin{align*}
   &\lim_{n\to \infty} \left\vert\frac{1}{n}R\left( \boldsymbol{\theta}, \hat{\boldsymbol{\theta}}_{JS_L}\right)- \min\left(\beta_{n,L},\frac{\sigma^2\beta_{n,L}}{\alpha_{n,L}+\sigma^2}\right) \right. \\
   & \hspace{0.4in} - \kappa_n\delta + o(\delta)\bigg \vert = 0. 
 \end{align*} 
 \end{enumerate} 
\end{thm}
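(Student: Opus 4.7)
The plan is to follow the proof strategy of Theorem \ref{thm1} verbatim, generalizing each of its ingredients from two to $L$ clusters. First I would establish generalized versions of Lemmas \ref{prop_y1_y2} and \ref{prop_bias} adapted to the data-dependent partition $\{\mathcal{C}_j\}_{j=1}^L$. Because each partition point satisfies $s_j(\mathbf{y})\doteq \mu_j$ by hypothesis \eqref{eq_multi_part_points}, every indicator $\mathsf{1}_{\{y_i\in\mathcal{C}_j\}}$ and $\mathsf{1}_{\{|y_i-s_j(\mathbf{y})|\leq\delta\}}$ can be replaced (with exponentially small error, via a union bound over $j$ and a Lipschitz-smoothing argument) by the analogous indicator using the deterministic cutoffs $\mu_j$. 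This yields, for each $j\in\{1,\dots,L\}$,
\begin{align*}
\tfrac{1}{n}\textstyle\sum_{i} \mathsf{1}_{\{y_i\in\mathcal{C}_j\}} &\doteq \tfrac{1}{n}\textstyle\sum_i \left[Q\!\left(\tfrac{\mu_j-\theta_i}{\sigma}\right)-Q\!\left(\tfrac{\mu_{j-1}-\theta_i}{\sigma}\right)\right],\\
\tfrac{1}{n}\textstyle\sum_i y_i\mathsf{1}_{\{y_i\in\mathcal{C}_j\}} &\doteq \tfrac{1}{n}\textstyle\sum_i \theta_i\mathsf{1}_{\{y_i\in\mathcal{C}_j\}} + \tfrac{\sigma}{n\sqrt{2\pi}}\textstyle\sum_i\left[e^{-(\mu_{j-1}-\theta_i)^2/(2\sigma^2)}-e^{-(\mu_j-\theta_i)^2/(2\sigma^2)}\right],
\end{align*}
together with the corresponding analog of \eqref{eq_prop_bias} for each boundary $s_j(\mathbf{y})$. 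These in turn imply that each attractor $a_j$ in \eqref{eq:a1aL} concentrates around $c_j$ in \eqref{eq_cj} up to an $O(\delta)$ correction.

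Next I would combine these concentration results to compute the two key quadratic functionals. A direct expansion gives
$$\tfrac{1}{n}\|\boldsymbol{\theta}-\boldsymbol{\nu}_L\|^2 = \tfrac{\|\boldsymbol{\theta}\|^2}{n} - 2\sum_{j=1}^L a_j \tfrac{1}{n}\textstyle\sum_i \theta_i \mathsf{1}_{\{y_i\in\mathcal{C}_j\}} + \sum_{j=1}^L a_j^2 \tfrac{1}{n}\textstyle\sum_i \mathsf{1}_{\{y_i\in\mathcal{C}_j\}},$$
which, upon substituting the concentrating values for $a_j$ and the cluster sums, collapses to $\beta_{n,L}+\kappa_n\delta+o(\delta)$ by the definitions \eqref{eq_m2_thm3} and \eqref{eq_cj}. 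An identical calculation for $\tfrac{1}{n}\|\mathbf{y}-\boldsymbol{\nu}_L\|^2$ produces the additional cross-terms $\tfrac{1}{n}\sum_i w_i\mathsf{1}_{\{y_i\in\mathcal{C}_j\}}$, whose concentrating value is exactly the density-based term in \eqref{eq_m1_thm3}, giving $\tfrac{1}{n}\|\mathbf{y}-\boldsymbol{\nu}_L\|^2 \doteq \alpha_{n,L}+\sigma^2+\kappa_n\delta+o(\delta)$. The noise cross-term $\tfrac{1}{n}\langle\boldsymbol{\theta}-\boldsymbol{\nu}_L,\mathbf{w}\rangle$ is handled painlessly via the identity $2\langle\boldsymbol{\theta}-\boldsymbol{\nu}_L,\mathbf{w}\rangle = \|\mathbf{y}-\boldsymbol{\nu}_L\|^2 - \|\boldsymbol{\theta}-\boldsymbol{\nu}_L\|^2 - \|\mathbf{w}\|^2$, which concentrates around $(\alpha_{n,L}-\beta_{n,L})/2$.

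Writing $\boldsymbol{\theta}-\hat{\boldsymbol{\theta}}_{JS_L} = \lambda(\boldsymbol{\theta}-\boldsymbol{\nu}_L) - (1-\lambda)\mathbf{w}$ with $\lambda = \sigma^2/g(\|\mathbf{y}-\boldsymbol{\nu}_L\|^2/n)$, the loss expands into three terms whose concentrating values have just been computed. The continuity of $g$ and the concentration of $\|\mathbf{y}-\boldsymbol{\nu}_L\|^2/n$ imply that $\lambda$ concentrates around $\sigma^2/\max(\sigma^2,\alpha_{n,L}+\sigma^2)$. An algebraic simplification identical to the one in the proof of Theorem \ref{thm1} (checking separately the cases $\alpha_{n,L}\geq 0$ and $\alpha_{n,L}<0$, the latter making the positive-part kick in and freezing the estimator at $\boldsymbol{\nu}_L$) collapses the expression to $\min(\beta_{n,L},\sigma^2\beta_{n,L}/(\alpha_{n,L}+\sigma^2)) + \kappa_n\delta + o(\delta)$, establishing part (1). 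Part (2) then follows from (1) by a standard truncation argument: the loss is crudely bounded by a polynomial in $\|\mathbf{y}\|^2$ and $\|\boldsymbol{\theta}\|^2$, so combining the exponential concentration bound with Gaussian moment estimates and the hypothesis $\limsup\|\boldsymbol{\theta}\|^2/n<\infty$ upgrades the probabilistic statement to convergence in $\mathcal{L}^1$ of the normalized risk.

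The main obstacle I anticipate is the first step: rigorously replacing the data-dependent boundaries $s_j(\mathbf{y})$ by the deterministic values $\mu_j$ inside the indicator functions. Indicators are not Lipschitz, so one must smoothen them on a scale comparable to the concentration radius of $s_j(\mathbf{y})-\mu_j$ and then absorb the approximation error into the $\kappa_n\delta$ term. Handling this cleanly while keeping the exponent in the concentration bound of the right order in $\epsilon$, $\delta$, and $\|\boldsymbol{\theta}\|^2/n$, and controlling all $L-1$ boundaries simultaneously via a union bound, is where the bulk of the work lies; everything downstream is essentially bookkeeping that mirrors the two-cluster proof.
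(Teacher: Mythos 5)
Your proposal follows essentially the same route as the paper's (sketched) proof: generalize the Hoeffding-based concentration of the interval indicators and weighted indicators to the $L$ boundaries, repeat the computations of Lemmas \ref{lem3} and \ref{lem_theta} to get $\frac{1}{n}\Vert\mathbf{y}-\boldsymbol{\nu}_L\Vert^2 \doteq \alpha_{n,L}+\sigma^2+\kappa_n\delta+o(\delta)$ and $\frac{1}{n}\Vert\bst-\boldsymbol{\nu}_L\Vert^2 \doteq \beta_{n,L}+\kappa_n\delta+o(\delta)$, and then reuse the algebraic decomposition and uniform-integrability step of Theorem \ref{thm1}. The boundary-replacement issue you flag as the main obstacle is already resolved by Lemmas \ref{lem_app_small_O_P} and \ref{lem:fhy_conc} (condition on $\vert s_j(\mathbf{y})-\mu_j\vert\leq t$, bound the probability mass of a width-$t$ strip, optimize $t$, and union-bound over the $L-1$ boundaries), so no separate smoothing argument is needed.
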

\noindent The proof is similar to that of Theorem \ref{thm1}, and we provide its sketch in Section \ref{subsec:thm3_proof}.
 
 \begin{figure}[t]
    \centering
    \begin{subfloat}[\label{fig_beta_n_4}]{
       \includegraphics[width= 3.25in]{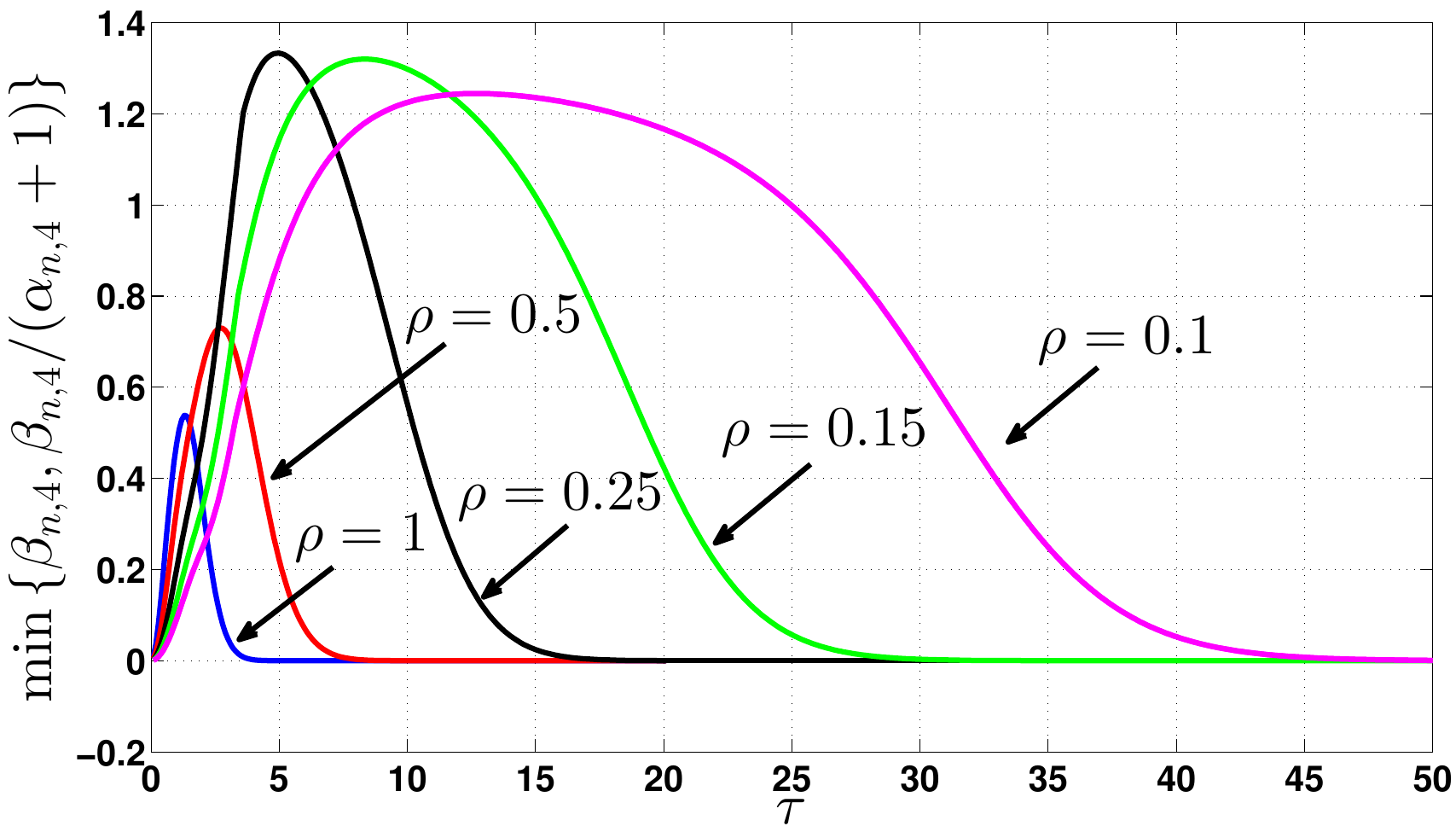}   
       }
    \end{subfloat}
    \begin{subfloat}[\label{fig_beta_n_4_1}]{
       \includegraphics[width= 3.25in]{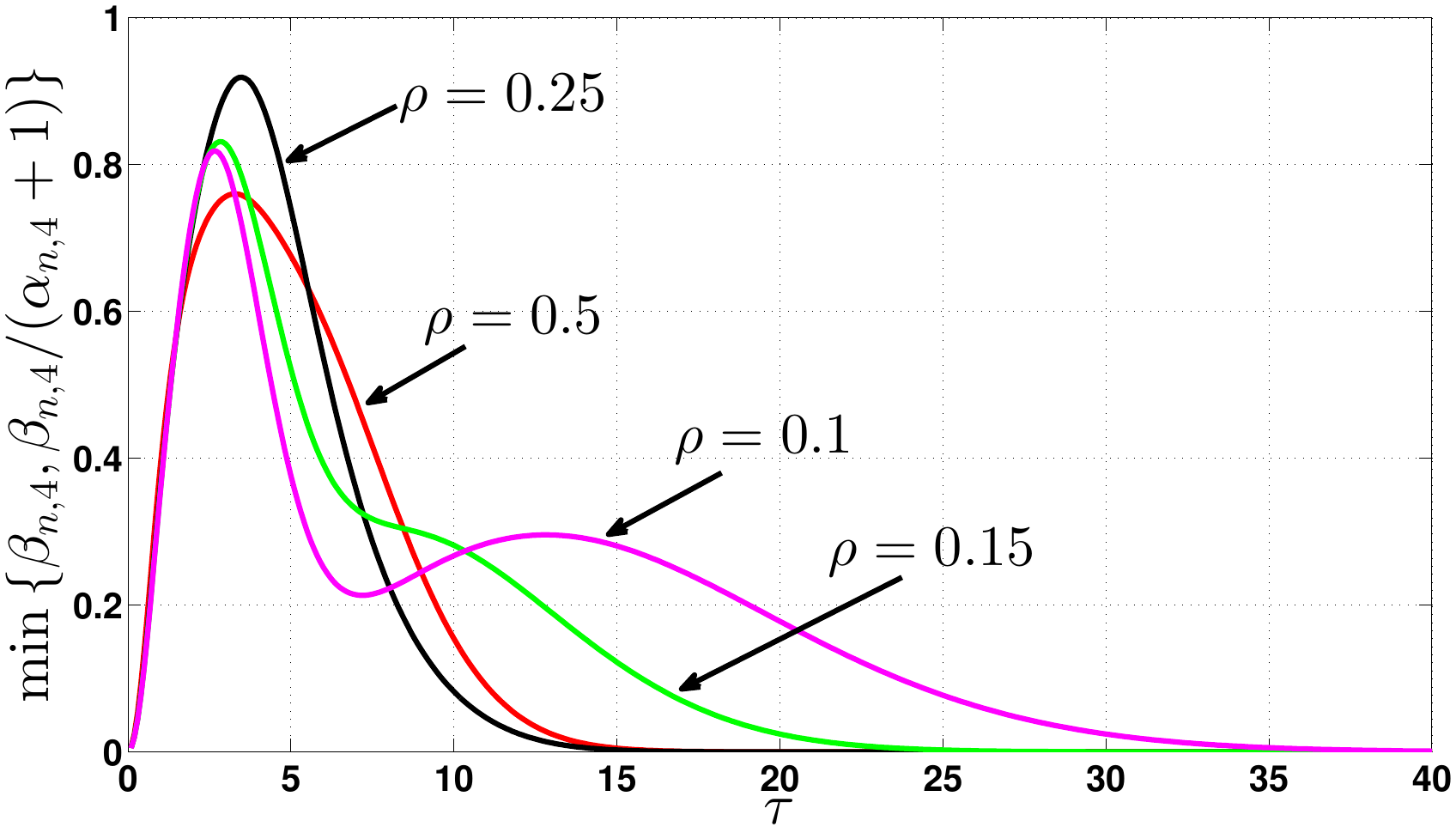}
       }
    \end{subfloat}    
    \caption{\small The asymptotic risk term  $\min\{ \beta_n,  \beta_{n,4}/(\alpha_{n,4}+1) \}$ for the four-cluster estimator is plotted vs $\tau$, for $n =1000, \sigma=1$, and different values of $\rho$. In (a), the components of $\bst$ take only two values $\tau$ and $-\rho \tau$, with $\lfloor 1000\rho/(1+\rho) \rfloor$ components taking the value $\tau$. In (b), they take values from the set $\{\tau, \rho \tau, -\rho \tau, -\tau\}$ with equal probability.}
    \label{fig:fig_beta_n_4_main}
\end{figure}

To get intuition on how the asymptotic normalized risk $R(\boldsymbol{\theta},\hat{\boldsymbol{\theta}}_{JS_L})/n$ depends on $L$, consider the four-cluster estimator with $L =4$. For the same setup as in Fig. \ref{fig_beta_n}, i.e., the components of $\bst$ take one of two values: $\tau$ or $-\rho \tau$,  Fig. \ref{fig_beta_n_4}  plots the asymptotic risk term $\min\{ \beta_n, \beta_{n,4}/(\alpha_{n,4}+1) \}$ versus $\tau$ for the four-cluster estimator. Comparing  Fig. \ref{fig_beta_n_4} and Fig. \ref{fig_beta_n}, we observe that the four-cluster estimator's risk $\min\{\beta_n, \beta_{n,4}/(\alpha_{n,4}+1)\}$ behaves similarly to the two-cluster estimator's risk  $\min\{ \beta_n, \beta_{n,2}/(\alpha_{n,2}+1) \}$ with the notable difference being the magnitude. For $\rho =0.5,1$, the peak value of $\beta_{n,4}/(\alpha_{n,4}+1)$ is smaller than that of  $\beta_{n,2}/(\alpha_{n,2}+1)$. However, for the smaller values of $\rho$, the reverse is true. This means that $\bsth_{JS_4}$ can be better than $\bsth_{JS_2}$,  even in certain scenarios where the $\theta_i$ take only two values. In the two-value example,  $\bsth_{JS_4}$ is typically better when  two of the four attractors of  $\bsth_{JS_4}$ are closer to the $\theta_i$ values, while the two attracting points of $\bsth_{JS_2}$ are closer to $\bar{\theta}$ than the respective $\theta_i$ values.

Next consider an example where $\theta_i$ take values from $\{\tau, \rho \tau, -\rho \tau, -\tau\}$ with equal probability. This is the scenario favorable to $\bsth_{JS_4}$. Figure \ref{fig_beta_n_4_1} shows the plot of $\min\{ \beta_n, \beta_{n,4}/(\alpha_{n,4}+1) \}$ as a function of $\tau$ for different values of $\rho$. Once again, it is clear that when the separation between the points is large enough, the asymptotic normalized risk approaches $0$.

\subsection{$L$-hybrid James-Stein estimator}
 
 Suppose that we have estimators $\hat{\boldsymbol{\theta}}_{JS_1}, \ldots, \hat{\boldsymbol{\theta}}_{JS_L}$, where 
 $\hat{\boldsymbol{\theta}}_{JS_\ell}$ is an $\ell$-cluster JS-estimator  constructed as described above, for $\ell=1, \ldots, L$. (Recall that  
 $\ell=1$ corresponds to Lindley's positive-part estimator in \eqref{pp_lindley}.) Depending on $\bst$,  any one of these $L$ estimators could achieve the smallest loss.  We  would like to design a hybrid estimator that picks the best of these $L$ estimators for the $\bst$ in context. As in Section \ref{sec_2_hybrid}, we construct loss estimates  for each of the $L$ estimators, and define a hybrid estimator as 
 \begin{equation}\label{comb_L_estimator}
  \hat{\boldsymbol{\theta}}_{JS_{H,L}} = 
\sum_{\ell=1}^L\gamma_\ell \, \hat{\boldsymbol{\theta}}_{JS_\ell} 
\end{equation}
where 
\begin{equation*}
 \gamma_\ell = \left\{ \begin{array}{ll}
                    1 & \textrm{ if }  \frac{1}{n}\hat{L}(\boldsymbol{\theta},  \hat{\boldsymbol{\theta}}_{JS_{\ell}}) = 
                    \min_{1 \leq k \leq L} \ \frac{1}{n}\hat{L}(\boldsymbol{\theta},  \hat{\boldsymbol{\theta}}_{JS_k} ), \\
                    0  & \textrm{ otherwise}\\
                  \end{array} \right.
\end{equation*}
 with $\hat{L}(\boldsymbol{\theta},  \hat{\boldsymbol{\theta}}_{JS_{\ell}})$ denoting the loss function estimate of 
 $\hat{\boldsymbol{\theta}}_{JS_{\ell}}$. 

For $\ell \geq 2$, we estimate the loss of  $\hat{\boldsymbol{\theta}}_{JS_{\ell}}$ using Theorem  \ref{thm3}, by estimating  $\beta_{n,\ell}$ and $\alpha_{n,\ell}$ which are defined in \eqref{eq_m1_thm3} and \eqref{eq_m2_thm3}, respectively. 
From \eqref{eq_thm3} in Section \ref{subsec:thm3_proof}, we obtain 
\begin{equation}\label{eq_alpha_nL_est}
 \frac{1}{n}\left\Vert \mathbf{y} - \boldsymbol{\nu}_{\ell} \right\Vert^2 \doteq \alpha_{n,\ell} + \sigma^2 + \kappa_n\delta + o(\delta).\end{equation}
Using concentration inequalities similar to those in Lemmas \ref{prop_y1_y2}  and \ref{prop_bias} in Section \ref{sec_2_attractor}, we deduce that 
\begin{align} \nonumber
&\frac{1}{n}\left\Vert \mathbf{y} - \boldsymbol{\nu}_{\ell} \right\Vert^2 - \sigma^2 + \frac{\sigma^2}{n\delta}\left(\sum_{j=1}^\ell a_j\sum_{i=0}^{n}   \big[\mathsf{1}_{\left\{\left\vert y_i - s_j(\mathbf{y})\right\vert \leq \delta \right\}} \right.\\ \label{eq_beta_nL_est}
&-  \mathsf{1}_{\left\{\left\vert y_i - s_{j-1}(\mathbf{y})\right\vert \leq \delta \right\}}\big]\Bigg) \doteq \beta_{n,\ell} + \kappa_n\delta +  o(\delta),
 \end{align}
where $a_1, \ldots,a_\ell$ are as defined in \eqref{eq:a1aL}.  We now use \eqref{eq_alpha_nL_est} and \eqref{eq_beta_nL_est} to estimate 
the concentrating value in Theorem \ref{thm3}, and thus obtain the following estimate of 
$L(\boldsymbol{\theta},  \hat{\boldsymbol{\theta}}_{JS_{\ell}})/n$:
\begin{equation}
\label{eq_est_L_partition_risk}
\begin{split}
&  \frac{1}{n}\hat{L}(\boldsymbol{\theta},  \hat{\boldsymbol{\theta}}_{JS_{\ell}})   = 
 \frac{\sigma^2}{g( \| \mbf{y} - \boldsymbol{\nu}_l \|^2/n)}  
 \Bigg( \frac{1}{n}  \| \mathbf{y} - \boldsymbol{\nu}_{\ell} \|^2 - \sigma^2  \\
&  \left. \ \  + \frac{\sigma^2}{n\delta}\sum_{j=1}^\ell a_j\sum_{i=0}^{n}  \left[ \mathsf{1}_{\left\{\left\vert y_i - s_j(\mathbf{y})\right\vert \leq \delta \right\}} -  \mathsf{1}_{\left\{\left\vert y_i - s_{j-1}(\mathbf{y})\right\vert \leq \delta \right\}}\right]  \right).  
 \end{split}
 \end{equation}

The loss function estimator in \eqref{eq_est_L_partition_risk} for $2 \leq \ell \leq L$,  together with the loss function estimator in \eqref{eq_est_1_partition_risk} for $\ell=1$, completes the specification of the $L$-hybrid estimator in \eqref{comb_L_estimator}.   Using steps similar to those in Theorem \ref{thm2}, we can show that 
\begin{equation}\label{eq_est_L_partition_risk1}
 \frac{1}{n}\hat{L}\left(\boldsymbol{\theta},  \hat{\boldsymbol{\theta}}_{JS_{\ell}}\right) \doteq \min\left(\beta_{n,\ell}, \frac{\sigma^2\beta_{n,\ell}}{\alpha_{n,\ell} + \sigma^2}\right) + \kappa_n\delta +  o(\delta),
\end{equation}
for $ 2 \leq \ell \leq L$.
\begin{thm}\label{thm_hybrid_partition}
The loss function of the $L$-hybrid JS-estimator in  \eqref{comb_L_estimator} satisfies the following: 
 \begin{enumerate}[(1)]
 \item For any $\epsilon > 0$, 
 \begin{align*}
 & \mathbb{P}\left( \frac{\norm{\boldsymbol{\theta} - \hat{\boldsymbol{\theta}}_{JS_{H,L}}}^2}{n} - \min_{1\leq \ell \leq L}\left(\frac{\norm{\boldsymbol{\theta} - \hat{\boldsymbol{\theta}}_{JS_{\ell}}}^2}{n}\right)\geq  \epsilon \right) \\
 &\quad \leq Ke^{-\frac{nk \min(\epsilon^2,1)}{\max(\Vert \bst\Vert^2/n,1)}},
 \end{align*}
 where $K$ and $k$ are positive constants.
 \item For a sequence of $\bst$ with increasing dimension $n$, if $\limsup_{n \to \infty} \Vert\bst \Vert^2/n < \infty$, we have
 \[   \limsup_{n\to \infty} \frac{1}{n} \left[ R\left( \boldsymbol{\theta}, \hat{\boldsymbol{\theta}}_{JS_{H,L}}\right)- \min_{1 \leq \ell \leq L}R\left( \boldsymbol{\theta}, \hat{\boldsymbol{\theta}}_{JS_{\ell}}\right) \right ] \leq 0.\]
 \end{enumerate}
 \end{thm}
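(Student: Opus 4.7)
The argument closely parallels the proof of Theorem~\ref{thm2}, with $L$ candidate estimators in place of two. The key observation is that for every $\ell \in \{1,\ldots,L\}$, both the true normalized loss $\frac{1}{n}\norm{\bst - \bsth_{JS_\ell}}^2$ and its estimate $\frac{1}{n}\hat{L}(\bst, \bsth_{JS_\ell})$ concentrate around the same deterministic quantity (up to a common $\kappa_n\delta + o(\delta)$ discrepancy). For $\ell = 1$ this follows from Corollary~\ref{cor_pp_JS_Lindley} together with the concentration in \eqref{eq:gyybar_conv} and the definition \eqref{eq_est_1_partition_risk}; for $\ell \geq 2$ it follows from Theorem~\ref{thm3} combined with \eqref{eq_est_L_partition_risk1}. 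A triangle inequality then yields, for each fixed $\ell$,
\begin{equation*}
\mathbb{P}\!\left(\left\vert \tfrac{1}{n}\hat{L}(\bst, \bsth_{JS_\ell}) - \tfrac{1}{n}\norm{\bst - \bsth_{JS_\ell}}^2 \right\vert \geq \tfrac{\epsilon}{3}\right) \leq K' e^{-nk'\min(\epsilon^2,1)/\max(\norm{\bst}^2/n,1)}.
\end{equation*}
Since $L$ is a fixed constant independent of $n$, a union bound produces an event $\mathcal{E}$ of probability at least $1 - LK'e^{-nk'\min(\epsilon^2,1)/\max(\norm{\bst}^2/n,1)}$ on which all $L$ gaps $\vert \hat{L}/n - L/n \vert$ are simultaneously bounded by $\epsilon/3$.

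Let $\hat{\ell}$ denote the index chosen by the hybrid rule, so that $\bsth_{JS_{H,L}} = \bsth_{JS_{\hat{\ell}}}$, and let $\ell^\star \in \argmin_{\ell} \norm{\bst - \bsth_{JS_\ell}}^2$. On the event $\mathcal{E}$, I can chain
\begin{equation*}
\frac{\norm{\bst - \bsth_{JS_{\hat{\ell}}}}^2}{n} \leq \frac{\hat{L}(\bst,\bsth_{JS_{\hat{\ell}}})}{n} + \frac{\epsilon}{3} \leq \frac{\hat{L}(\bst,\bsth_{JS_{\ell^\star}})}{n} + \frac{\epsilon}{3} \leq \frac{\norm{\bst - \bsth_{JS_{\ell^\star}}}^2}{n} + \frac{2\epsilon}{3},
\end{equation*}
where the first and third inequalities use the closeness of loss and loss estimate on $\mathcal{E}$ and the middle inequality is the definition of the selection rule. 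Absorbing the factor $L$ into the constants then yields part~(1).

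For part~(2), define $Z_n \vcentcolon= \frac{1}{n}\norm{\bst - \bsth_{JS_{H,L}}}^2 - \min_{\ell} \frac{1}{n}\norm{\bst - \bsth_{JS_\ell}}^2 \geq 0$. Part~(1) gives $Z_n \overset{P}{\to} 0$. To upgrade this to $\mathbb{E}[Z_n] \to 0$, I would establish uniform integrability of $\{Z_n\}$ using the fact that each $\bsth_{JS_\ell}$ lies on the segment joining $\boldsymbol{\nu}_\ell$ to $\by$, so $\norm{\bst - \bsth_{JS_\ell}}^2 \leq 2\norm{\bst - \boldsymbol{\nu}_\ell}^2 + 2\norm{\by - \boldsymbol{\nu}_\ell}^2$. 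The right-hand side is polynomial in $\norm{\by}^2$ and $\norm{\bst}^2$, whose higher moments are bounded whenever $\norm{\bst}^2/n$ is bounded (by Gaussian moment estimates on $\by = \bst + \mathbf{w}$). Combining convergence in probability with uniform integrability gives $\mathbb{E}[Z_n] \to 0$, which is the assertion in part~(2).

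The only delicate point I anticipate is in part~(1): verifying that the loss estimate \eqref{eq_est_L_partition_risk} concentrates around the asymptotic value in \eqref{eq_est_L_partition_risk1} uniformly across $\ell$, with the same $\norm{\bst}^2/n$-adjusted exponential rate as in \eqref{notation_doteq}. This requires assembling the concentration of $\norm{\by - \boldsymbol{\nu}_\ell}^2/n$ from the proof of Theorem~\ref{thm3}, together with concentration of the indicator sums $\sum_i \mathsf{1}_{\{\vert y_i - s_j(\by)\vert \leq \delta\}}$ and the attractors $a_j$, via Slutsky-type manipulations that mirror the bookkeeping in the proofs of Theorems~\ref{thm1} and~\ref{thm2}. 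With those inputs, the union-bound and chaining arguments above are routine.
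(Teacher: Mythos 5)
Your proposal is correct and, for part (1), follows essentially the route the paper intends: the paper omits this proof precisely because it mirrors the proof of Theorem \ref{thm2}, i.e.\ each loss estimate $\frac{1}{n}\hat{L}(\bst,\bsth_{JS_\ell})$ concentrates around the corresponding true normalized loss (Corollary \ref{cor_pp_JS_Lindley} with \eqref{eq_est_1_partition_risk} for $\ell=1$, Theorem \ref{thm3} with \eqref{eq_est_L_partition_risk1} for $\ell\geq 2$), and the selection rule can only go wrong on an exponentially rare event; your $\epsilon/3$-chaining plus a union bound over the fixed number $L$ of candidates is just a reorganization of the paper's argument bounding $\mathbb{P}(\gamma_\by=0,\Delta_n>\epsilon)$, and absorbing $L$ and the $\kappa_n\delta+o(\delta)$ slop into the constants is consistent with the paper's own bookkeeping. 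Part (2) is where you genuinely diverge: the paper (in the proof of Theorem \ref{thm2}, which this theorem is meant to replicate) partitions the sample space according to the selected index and the size of the loss gap, and controls the contribution of the bad event by Cauchy--Schwarz, paying a factor $\bigl(\mathbb{E}\norm{\bsth_{JS_\ell}-\bst}^4\bigr)^{1/2}$ that is killed by the exponentially small probability; you instead set $Z_n\geq 0$ equal to the excess loss of the hybrid, invoke $\min_\ell \mathbb{E}\geq \mathbb{E}\min_\ell$, and upgrade $Z_n\overset{P}{\to}0$ to $\mathbb{E}[Z_n]\to 0$ via uniform integrability, in the spirit of the paper's proof of Theorem \ref{thm1}(2). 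Both routes work, but note that your justification of UI --- ``the right-hand side is polynomial in $\norm{\by}^2$ and $\norm{\bst}^2$'' --- is the one loose step: the attractors $a_j$ in \eqref{eq:a1aL} involve cluster-size denominators and the $\frac{\sigma^2}{2\delta}$ correction scaled by indicator counts, so $\norm{\bst-\boldsymbol{\nu}_\ell}^2$ and $\norm{\by-\boldsymbol{\nu}_\ell}^2$ are not controlled by $\norm{\by}^2$ alone; one must argue that near-empty clusters occur with probability small enough to keep $\sup_n \mathbb{E}[Z_n^2]$ (or the fourth moments the paper uses) finite, which is exactly the role the Cauchy--Schwarz step with the exponentially small bad-event probability plays in the paper's version.
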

  The proof of the theorem is omitted as it is along the same lines as the proof of Theorem \ref{thm3}. Thus with high probability, the $L$-hybrid estimator chooses the best  of the $\bsth_{JS_1}, \ldots, \bsth_{JS_L}$,  with the probability of choosing a worse estimator decreasing exponentially in $n$.

  \subsection{Obtaining the clusters} \label{subsec:obt_clusters}

 In this subsection, we present a simple method to obtain the $(L-1)$ partitioning points $s_j(\mathbf{y})$, $1\leq j \leq (L-1)$, for an $L$-cluster JS-estimator when $L = 2^a$ for an integer $a > 1$. We do this recursively, assuming that we already have a $2^{a-1}$-cluster estimator with its associated  partitioning points $s^\prime_j(\mathbf{y})$, $j=1,\cdots,2^{a-1}-1$.
 This means that for the $2^{a-1}$-cluster estimator, the real line is partitioned as
\begin{align*}
 \mathbb{R} = & \left(-\infty,s^\prime_{2^{a-1}-1}(\mathbf{y}) \right] \cup \left(s^\prime_{2^{a-1}-1}(\mathbf{y}), s^\prime_{2^{a-1}-2}(\mathbf{y})\right]  \cup \cdots \\
 & \quad \cup  \left(s^\prime_1(\mathbf{y}), \infty\right).
\end{align*}
Recall that Section \ref{sec_2_attractor} considered the case of $a=1$, with the single partitioning point being $\bar{y}$. 

The  new partitioning points $s_k(\mathbf{y})$, $k=1,\cdots,(2^a-1)$, are obtained as follows. For $j =1 ,\cdots, (2^{a-1}-1)$, define
\begin{align*}
 s_{2j}(\mathbf{y}) & = s^\prime_{j}(\mathbf{y}),   
 s_{2j-1}(\mathbf{y}) = \frac{\sum_{i=1}^n y_i \mathsf{1}_{\left\{s^\prime_{j}(\mathbf{y}) < y_i \leq s^\prime_{j-1}(\mathbf{y})\right \}}}{\sum_{i=1}^n\mathsf{1}_{\left\{s^\prime_{j}(\mathbf{y}) < y_i \leq s^\prime_{j-1}(\mathbf{y})\right \}}}, \\
 s_{2^{a}-1}(\mathbf{y}) & = \frac{\sum_{i=1}^n y_i \mathsf{1}_{\left\{-\infty  < y_i \leq s^\prime_{2^{a-1}-1}(\mathbf{y})\right \}}}{\sum_{i=1}^n\mathsf{1}_{\left\{-\infty < y_i \leq s^\prime_{2^{a-1}-1}(\mathbf{y})\right \}}}
\end{align*}
where $s^\prime_{0}(\mbf{y}) = \infty$. Hence, the partition for the $L$-cluster estimator is 
\begin{align*}
 \mathbb{R} = & \left(-\infty,s_{2^{a}-1}(\mathbf{y}) \right] \cup \left(s_{2^{a}-1}(\mathbf{y}), s_{2^{a}-2}(\mathbf{y})\right]  \cup \cdots \\
 &\cup  \left(s_1(\mathbf{y}), \infty\right).
\end{align*}
We use such a partition to construct a $4$-cluster estimator for our simulations in the next section.


\section{Simulation Results} \label{sec:simulations}

 \begin{figure}[p]
    \centering
    \begin{subfloat}[\label{subfig_n_10_0_5_50}]{
       \includegraphics[width= 3.25in]{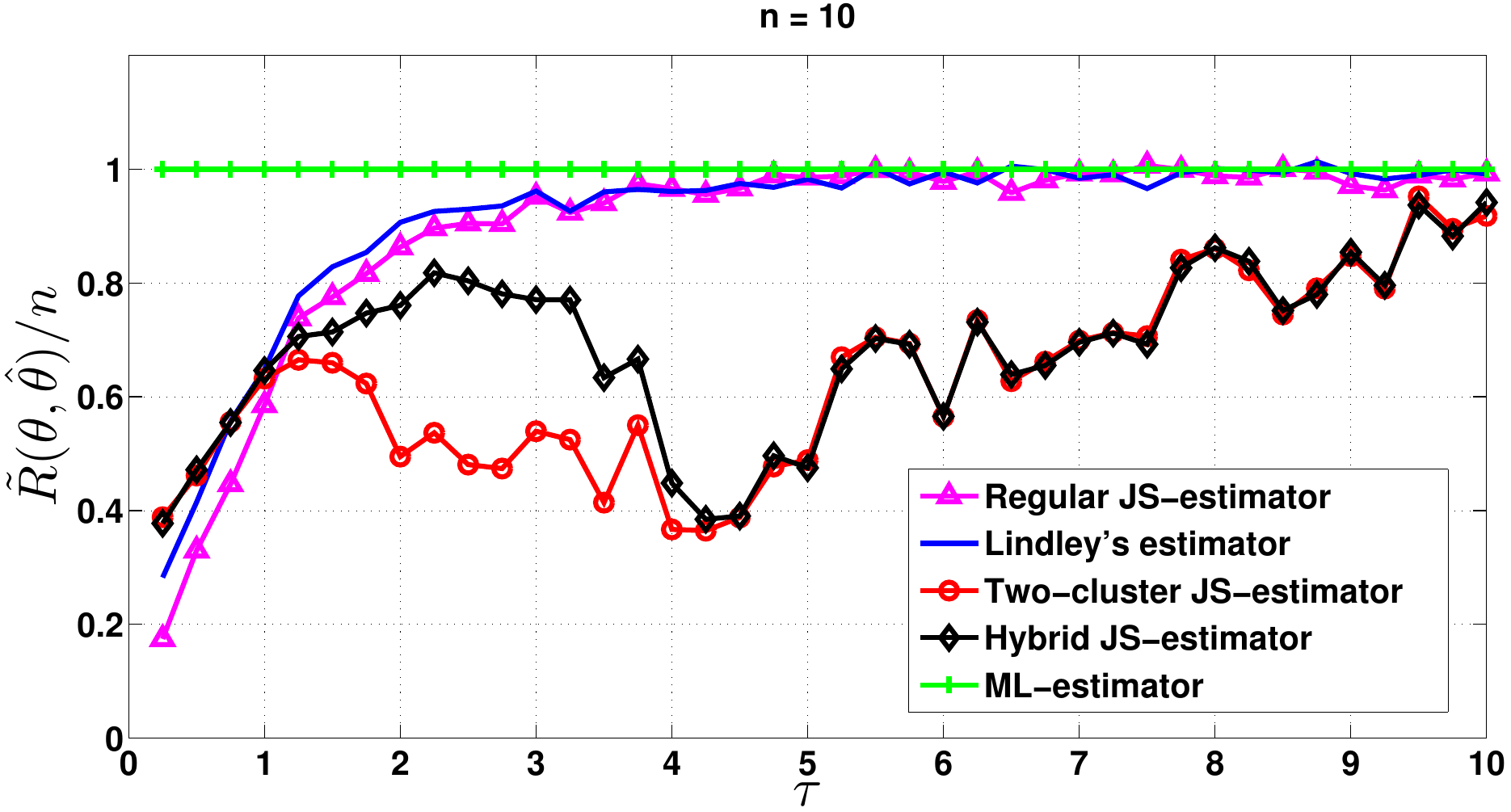}   
       }
    \end{subfloat}
  \quad
    \begin{subfloat}[\label{subfig_n_50_0_5_50}]{
       \includegraphics[width= 3.25in]{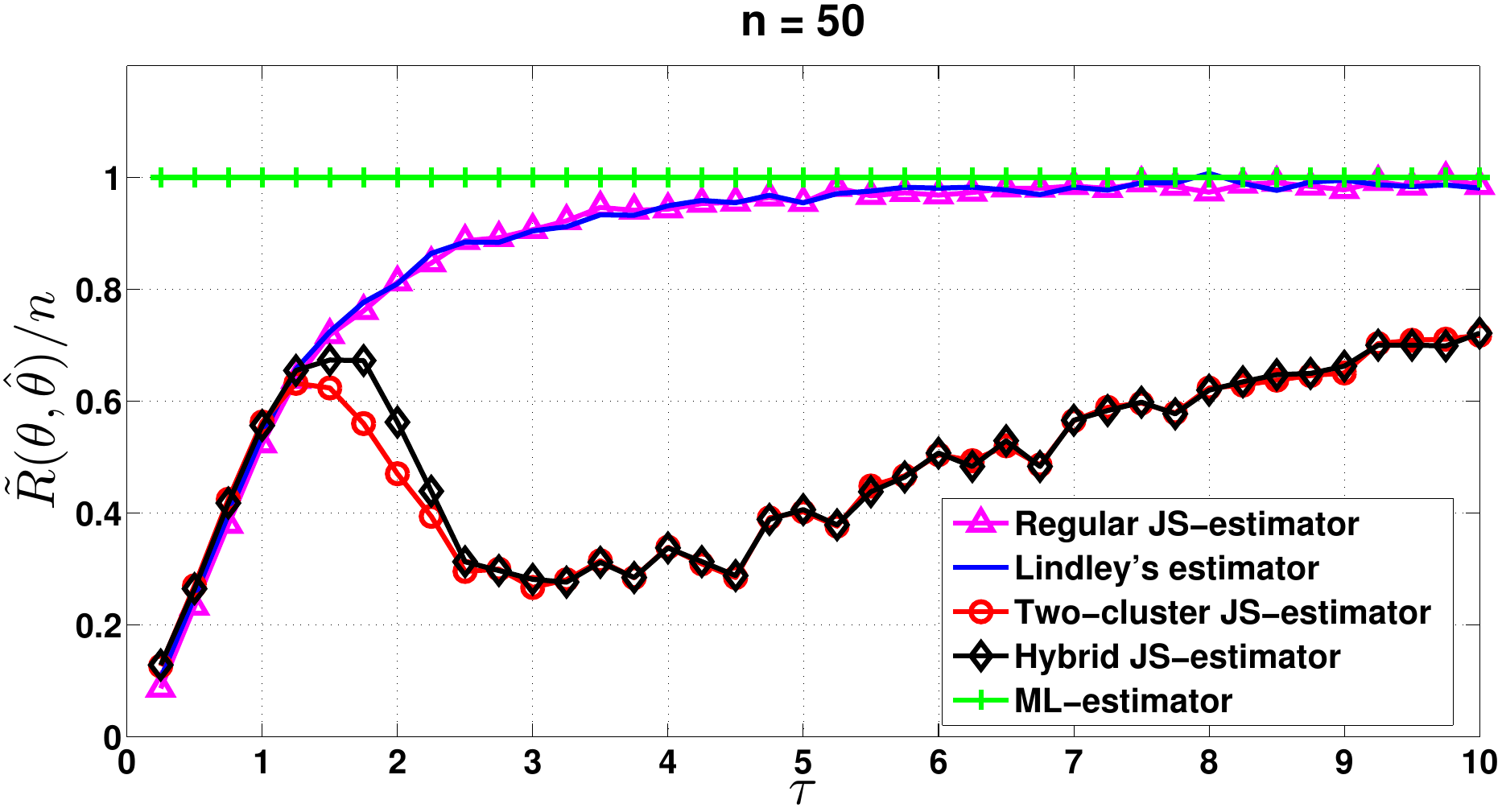}
       }
    \end{subfloat}
    \quad
    \begin{subfloat}[\label{subfig_n_100_0_5_30}]{
       \includegraphics[width= 3.25in]{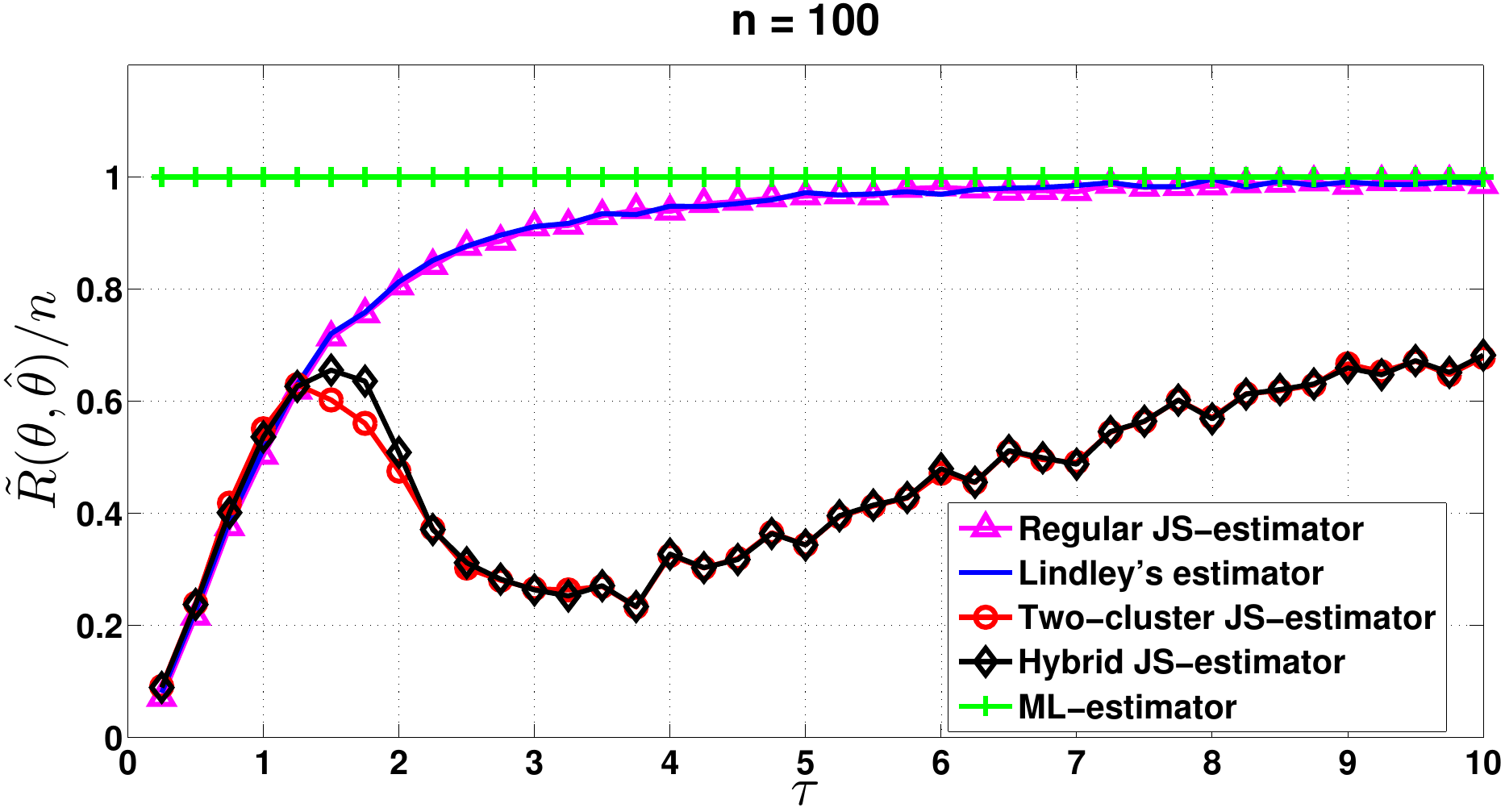}
       }
    \end{subfloat}
    \quad
    \begin{subfloat}[\label{subfig_n_100_0_5_50}]{
       \includegraphics[width= 3.25in]{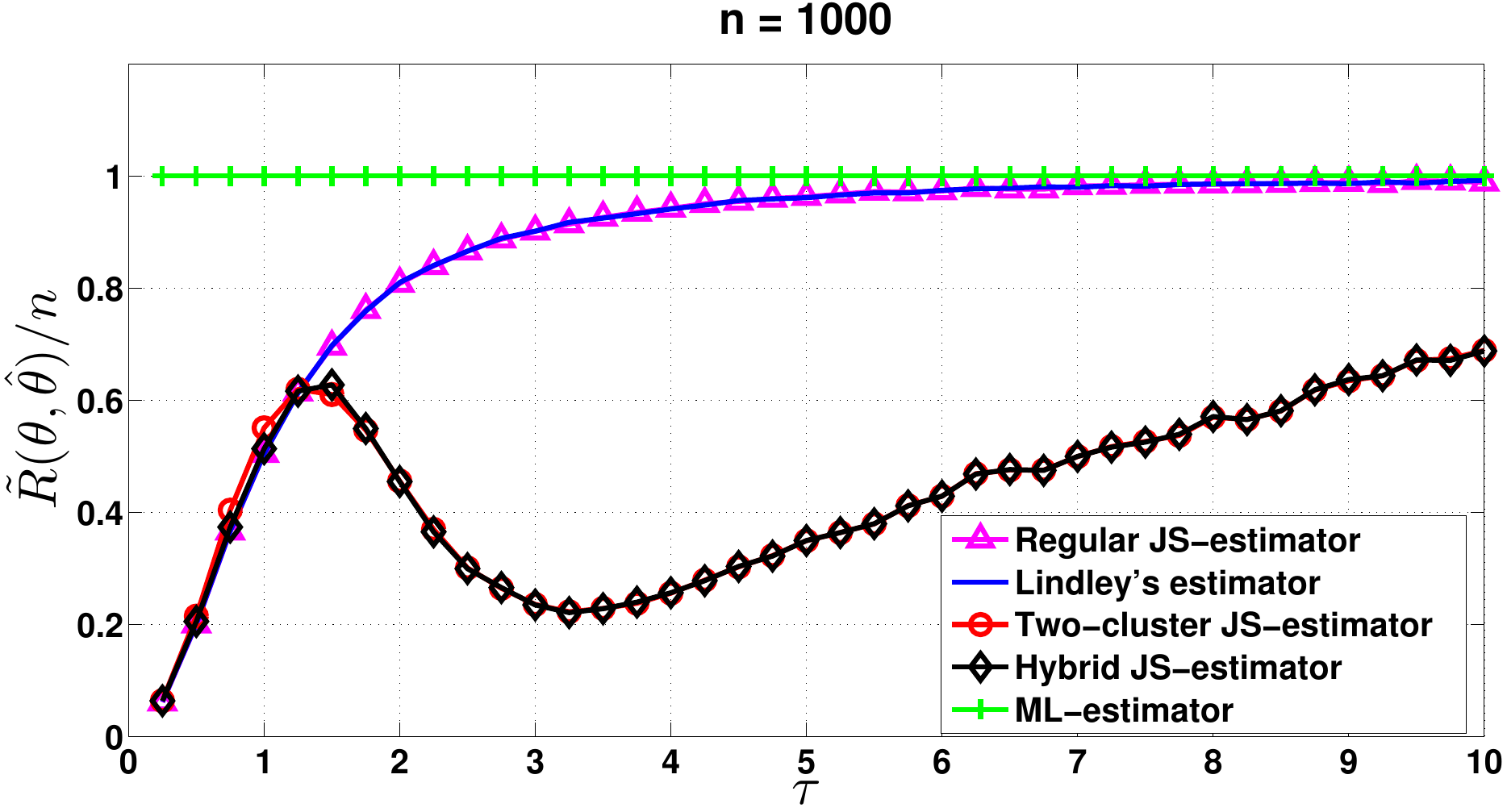}
       }
    \end{subfloat} 
    \caption{\small Average normalized loss of various estimators for different values of $n$. The $\{\theta_i\}_{i=1}^n$ are placed in two clusters, one centred at $\tau$ and another at $-\tau$. Each cluster  has width $0.5\tau$ and $n/2$ points. }
    \label{fig_2_hybrid_1}
\end{figure}
\begin{figure}[p]
    \centering
    \begin{subfloat}[\label{subfig_n_1000_0_25_30}]{
       \includegraphics[width= 3.25in]{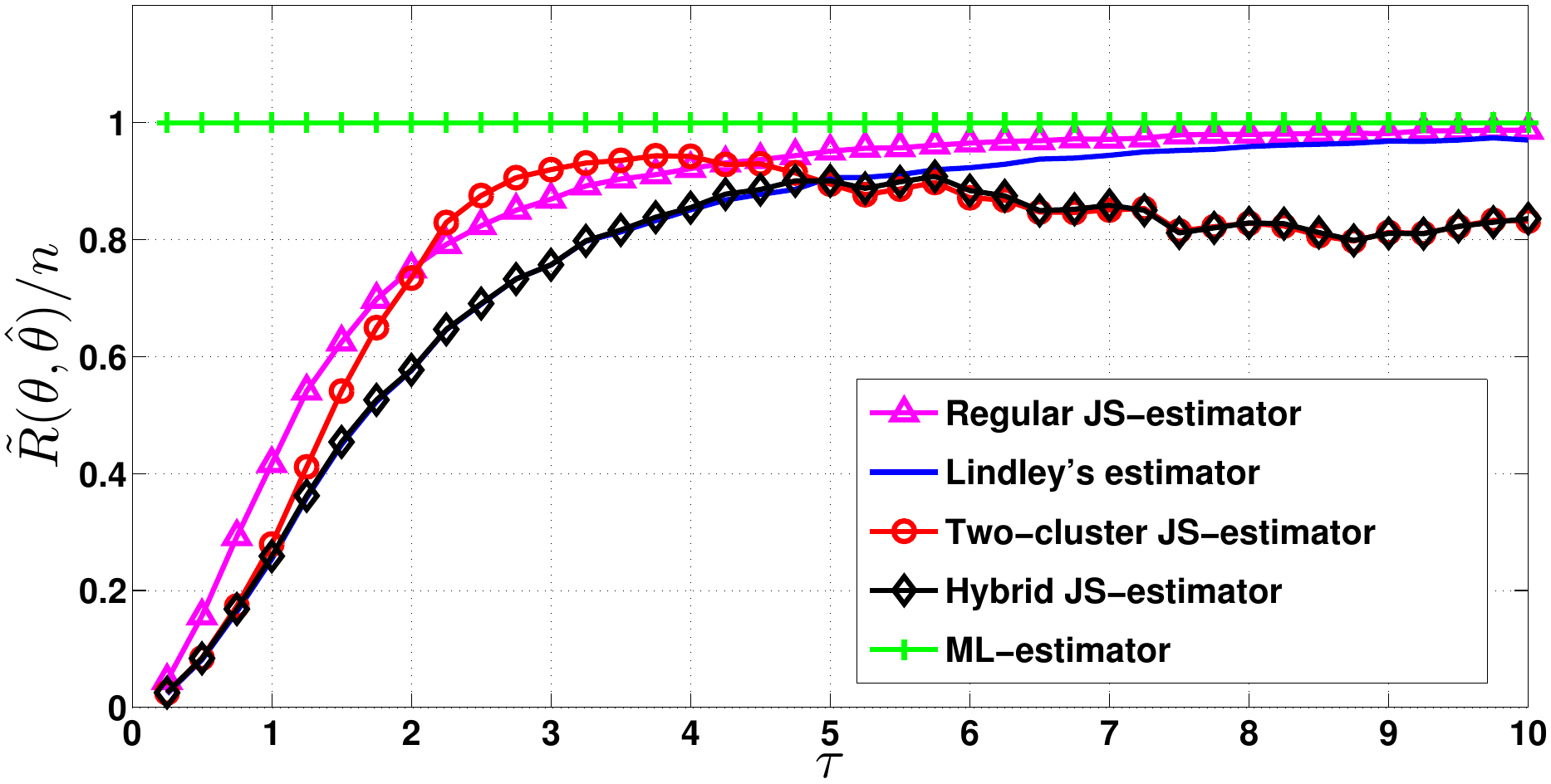}
       }
    \end{subfloat}
    \quad
    \begin{subfloat}[\label{subfig_n_1000_0_5_50}]{
       \includegraphics[width= 3.25in]
       {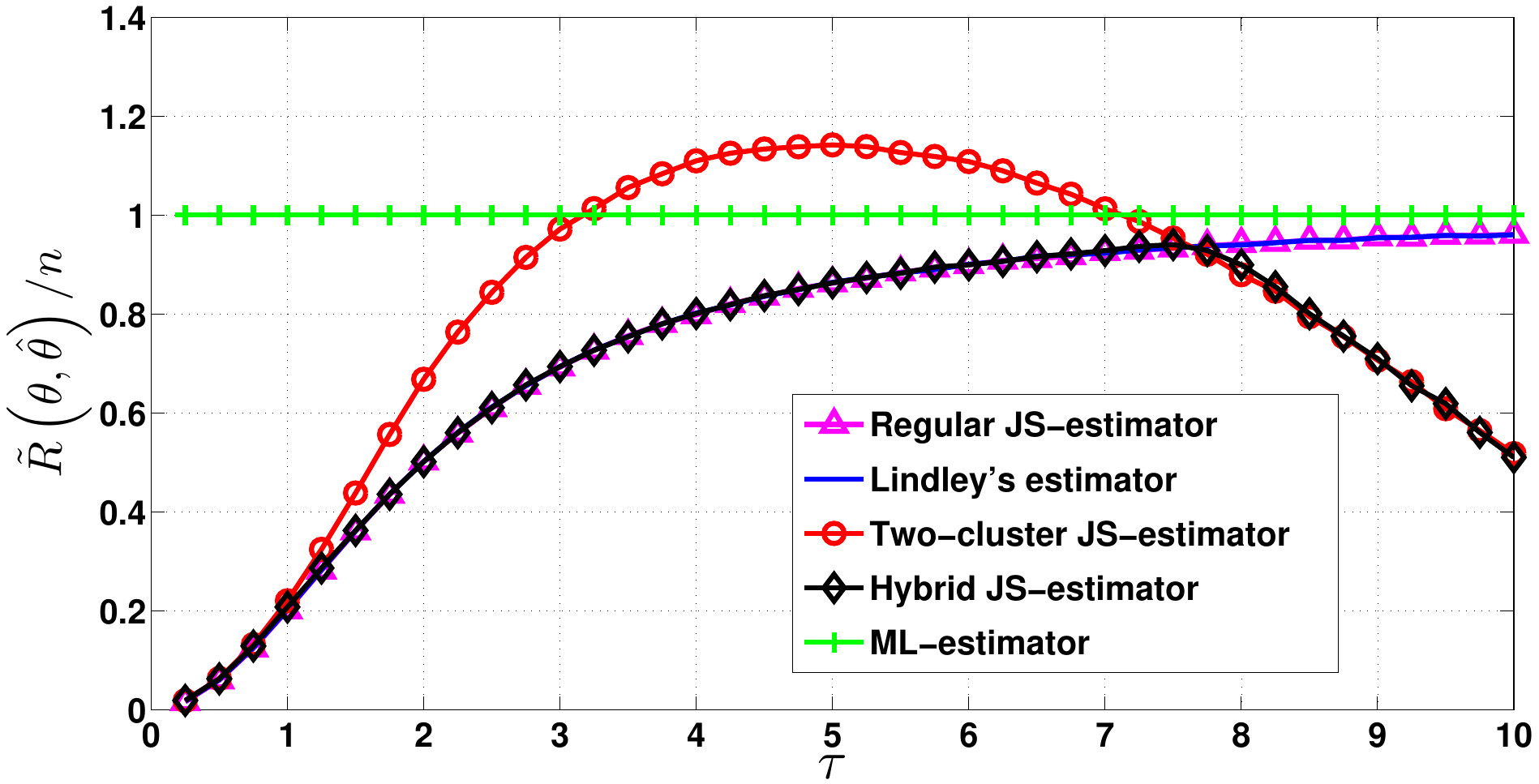}       
       }
    \end{subfloat}
    \quad
    \begin{subfloat}[\label{subfig_n_1000_0_125_30}]{
       \includegraphics[width= 3.25in]{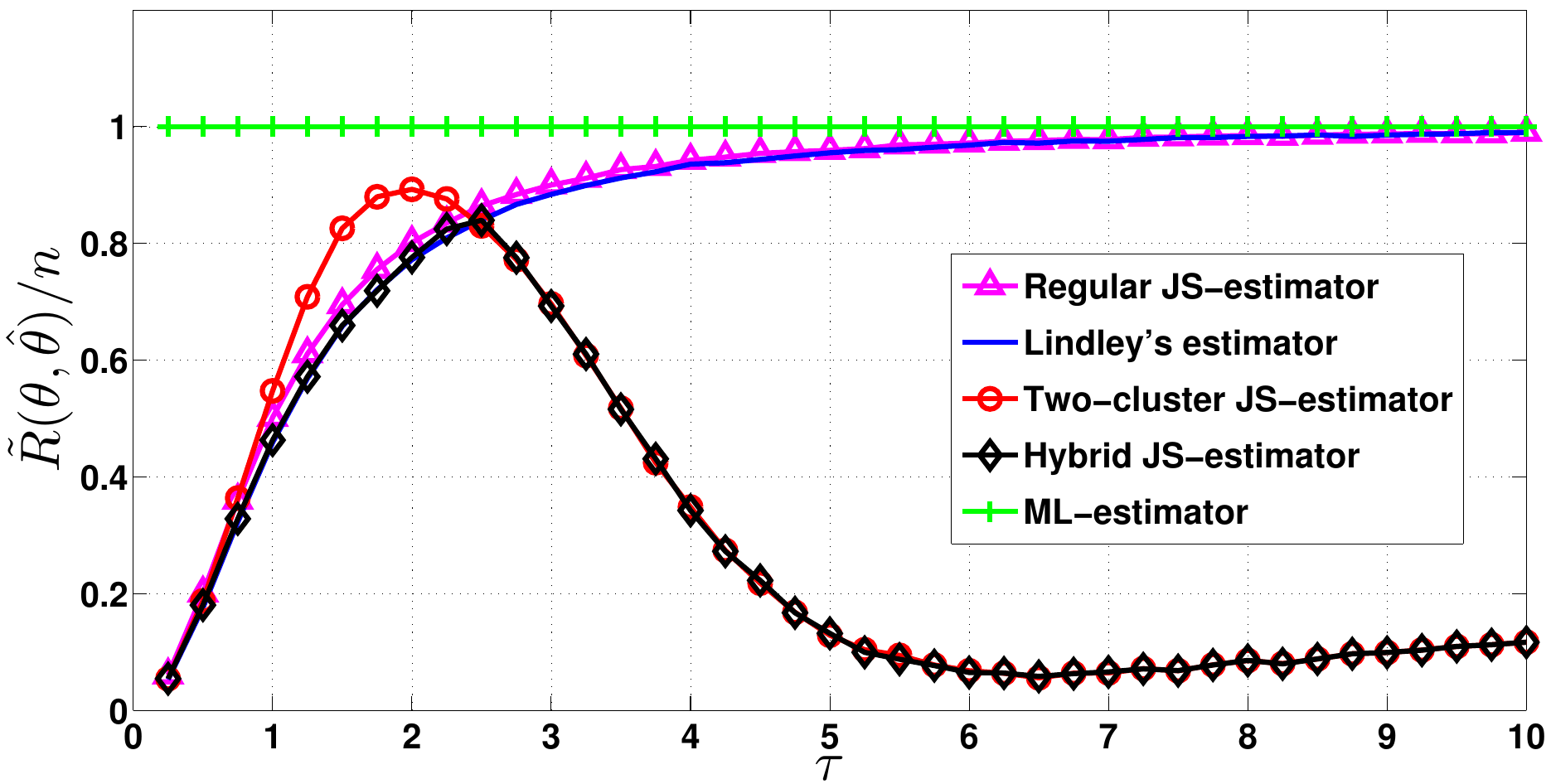}
       }
    \end{subfloat}
    \quad
    \begin{subfloat}[\label{subfig_n_1000_uniform}]{
       \includegraphics[width= 3.25in]{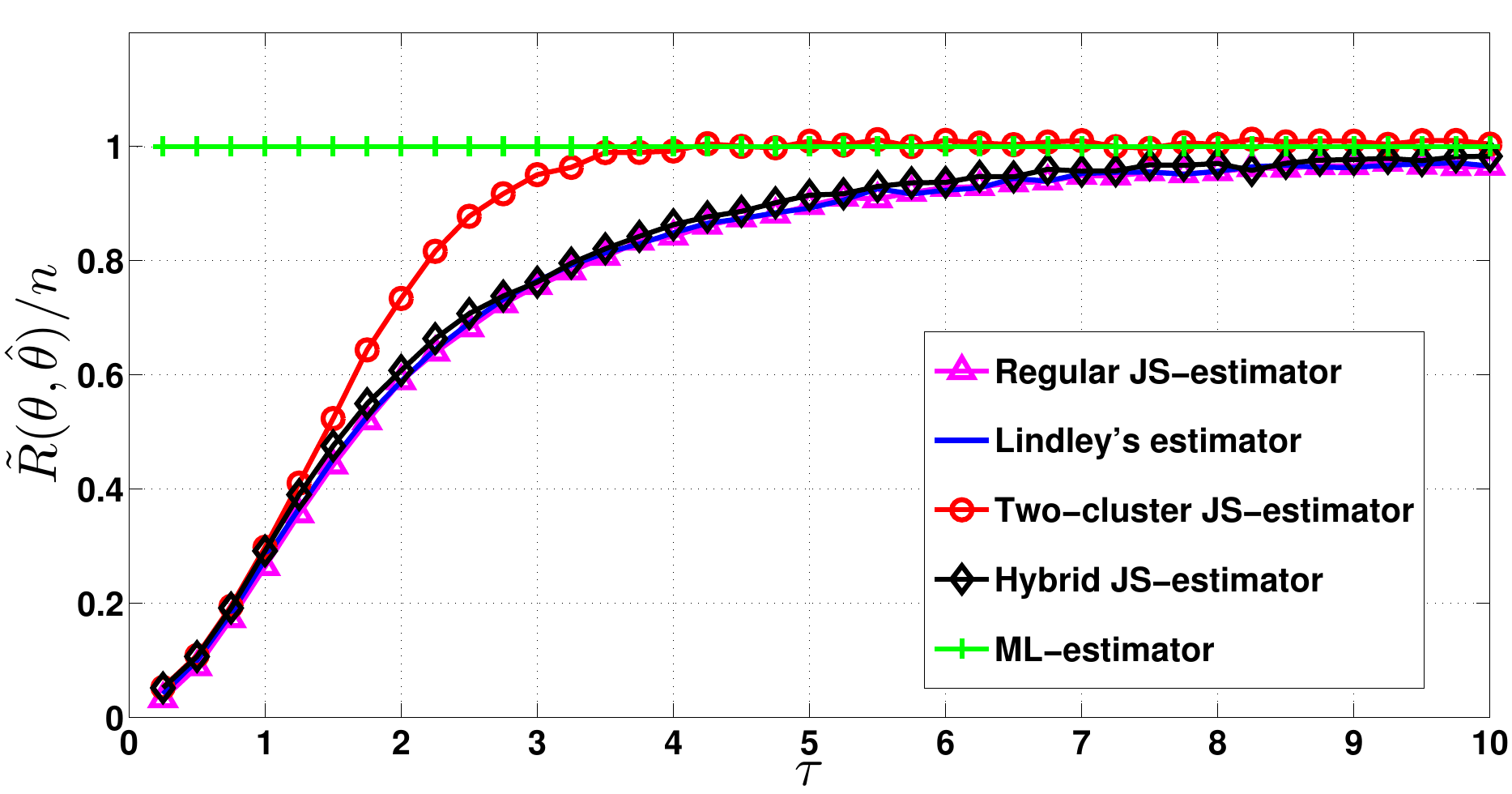}
       }
    \end{subfloat}    
    \caption{\small Average normalized loss of various estimators for different arrangements of the samples $\{\theta_i\}_{i=1}^n$, with $n=1000$. In  (a), there are 2 clusters each of width $0.5\tau$, one around $0.25\tau$ containing 300 points, and the other around $-\tau$ and containing 700 points. In (b), $\bst$ consists of $200$ components taking the value $\tau$ and the remaining $800$ taking the value $-0.25\tau$. In  (c), there are two clusters of  width  $0.125\tau$,  one  around $\tau$ containing $300$ points and another around $-\tau$ containing $700$ points. In (d), $\{\theta_i\}_{i=1}^n$ are arranged uniformly from $-\tau$ to $\tau$.}
    \label{fig_2_hybrid_2}
\end{figure}
\begin{figure}[p]
    \centering
    \begin{subfloat}[\label{subfig_risk_est_2_0_5_50}]{
       \includegraphics[width= 3.25in]{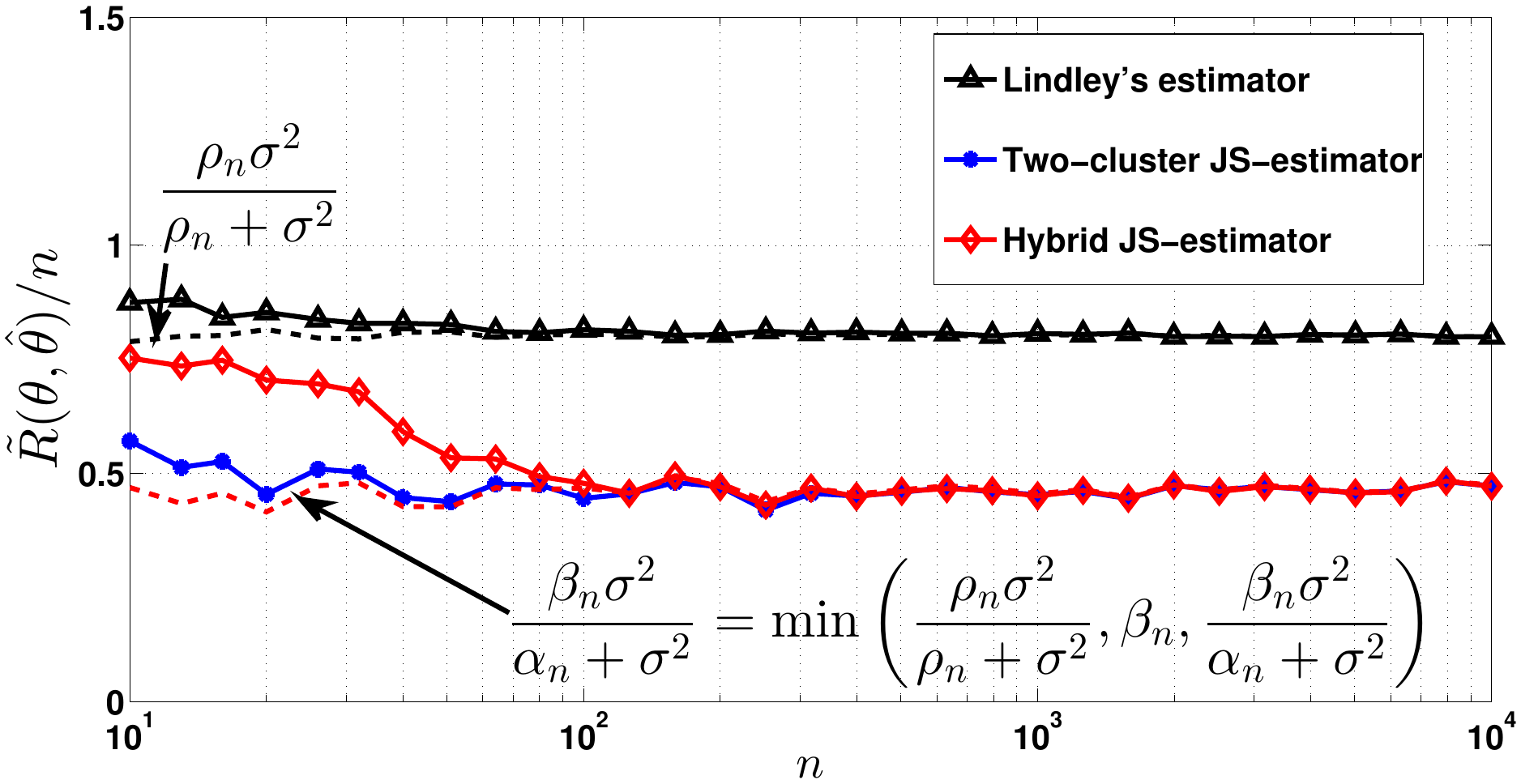}
       }
    \end{subfloat}
    \quad
    \begin{subfloat}[\label{subfig_risk_est_5_0_5_50}]{
       \includegraphics[width= 3.25in]{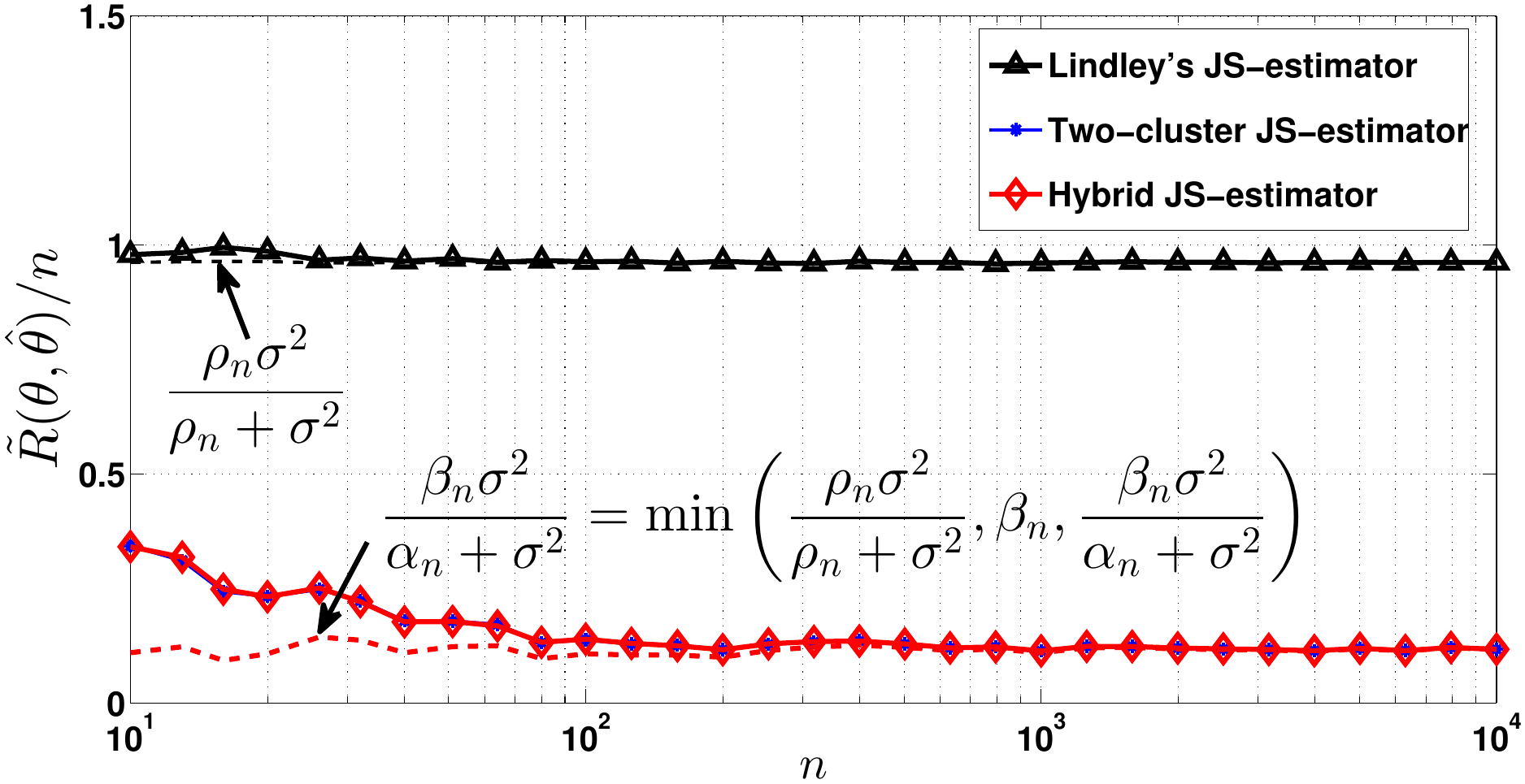}
       }
    \end{subfloat}
    \quad
    \begin{subfloat}[\label{subfig_risk_est_0_5_0_5_50}]{
       \includegraphics[width= 3.25in]{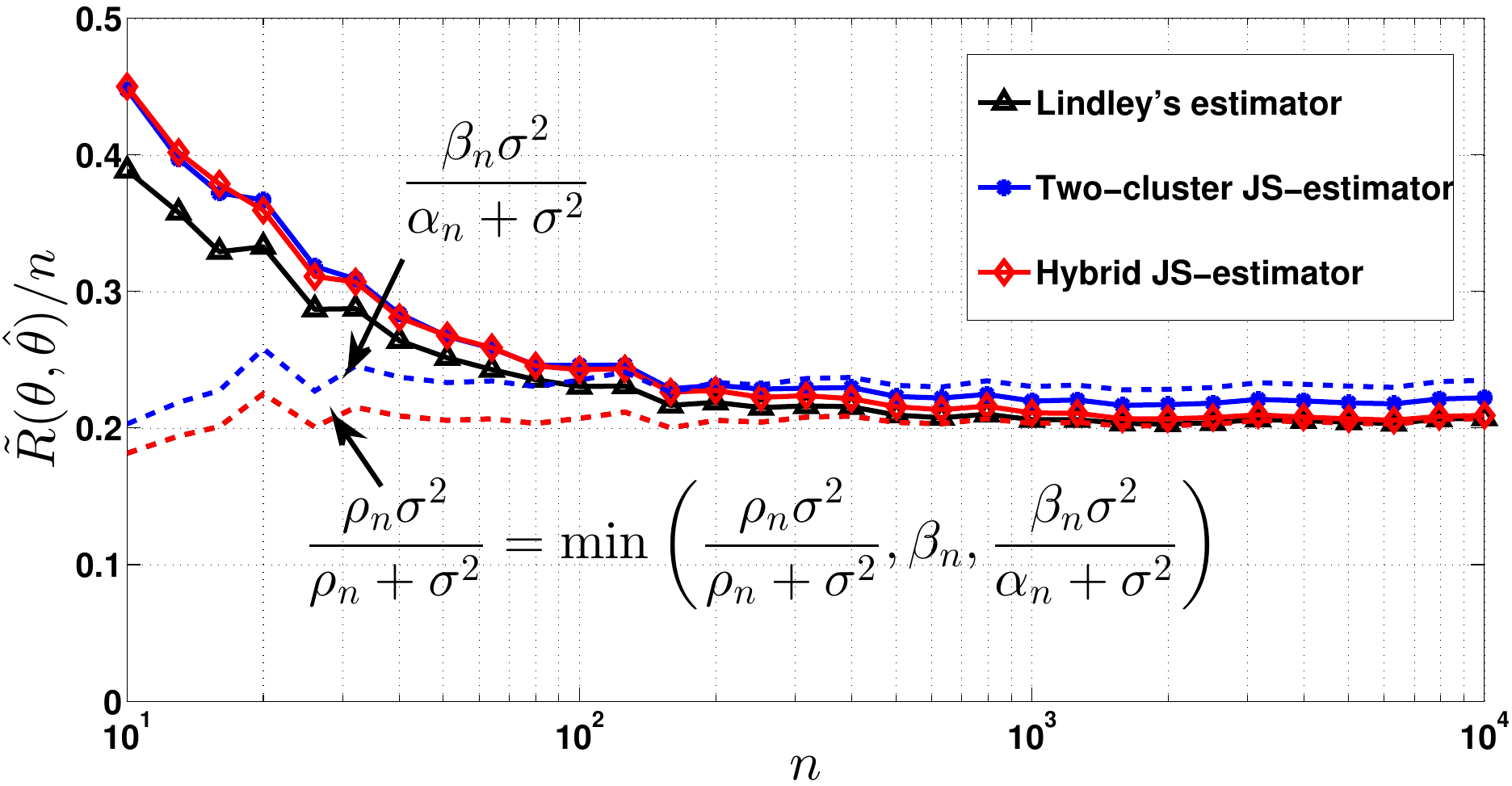}
       }
    \end{subfloat}
    \quad
    \begin{subfloat}[\label{subfig_risk_est_2_uniform}]{
       \includegraphics[width= 3.25in]{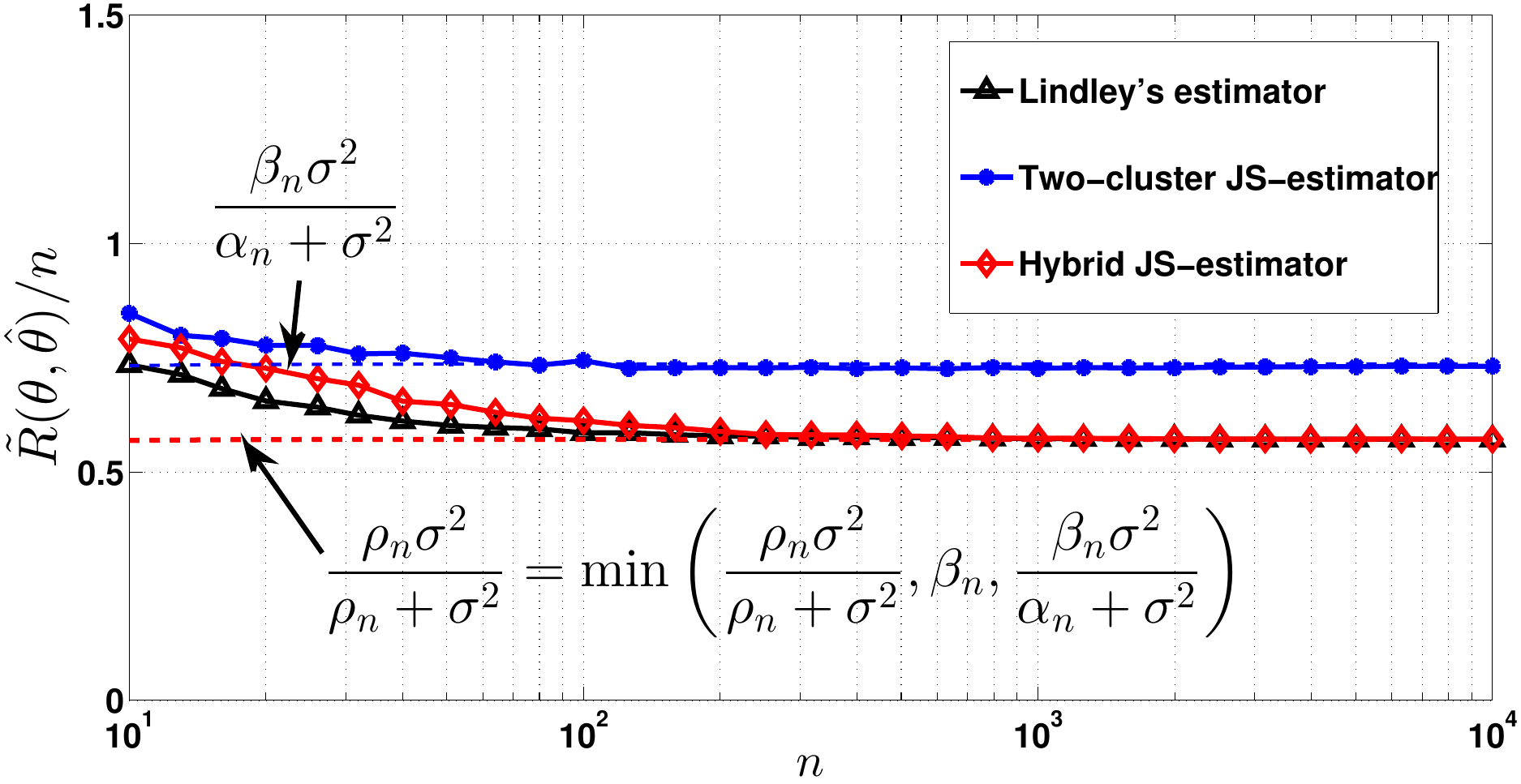}
       }
    \end{subfloat}  
    \caption{\small Average normalized loss of various estimators versus $\frac{\rho_n}{\rho_n + \sigma^2}$, $\frac{\beta_n}{\alpha_n + \sigma^2}$, $\min(\frac{\rho_n}{\rho_n + \sigma^2}, \beta_n, \frac{\beta_n}{\alpha_n + \sigma^2})$ as a function of $n$ for different arrangements of $\{\theta_i\}_{i=1}^n$. In (a), the $\{\theta_i\}_{i=1}^n$ are placed in two clusters of width $1$, one around $2$ and the other around $-2$, each containing an equal number of points. In (b), $\{\theta_i\}_{i=1}^n$ are placed in two clusters of width $1.25$, one around $5$ and the other around $-5$, each containing an equal number of points. In (c), $\{\theta_i\}_{i=1}^n$ are placed in two clusters of width $0.25$, one around $0.5$ and the other around $-0.5$, each containing an equal number of points. In (d), $\{\theta_i\}_{i=1}^n$ are placed uniformly between $-2$ and $2$. }
    \label{fig_risk_est}
\end{figure}

\begin{figure}[t]
    \centering
    \begin{subfloat}[\label{subfig4_n_1000_0_5}]{
       \includegraphics[width= 3.25in]{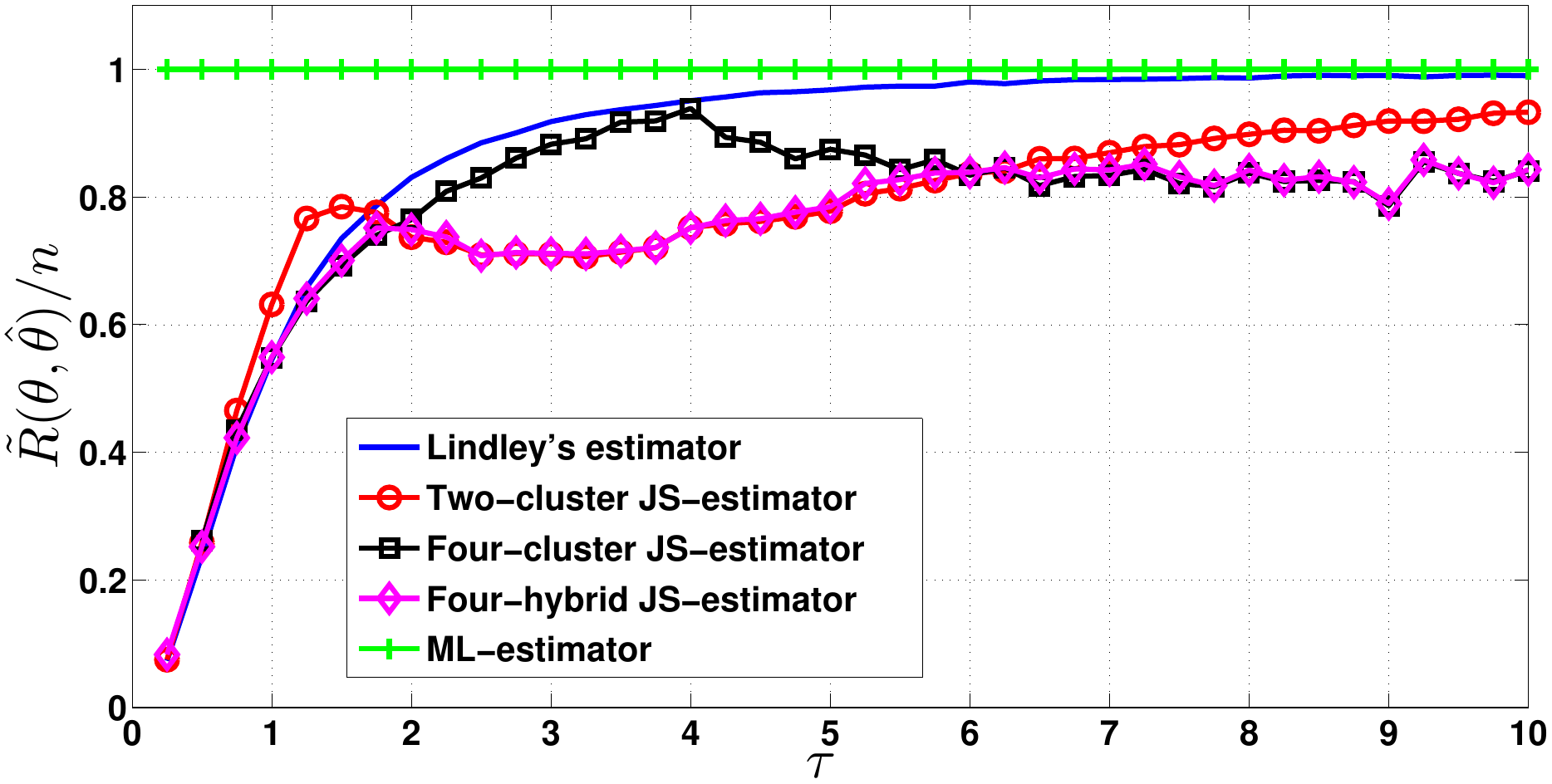}
       }
    \end{subfloat}
    \quad
    \begin{subfloat}[\label{subfig4_n_1000_0_25}]{
       \includegraphics[width= 3.25in]{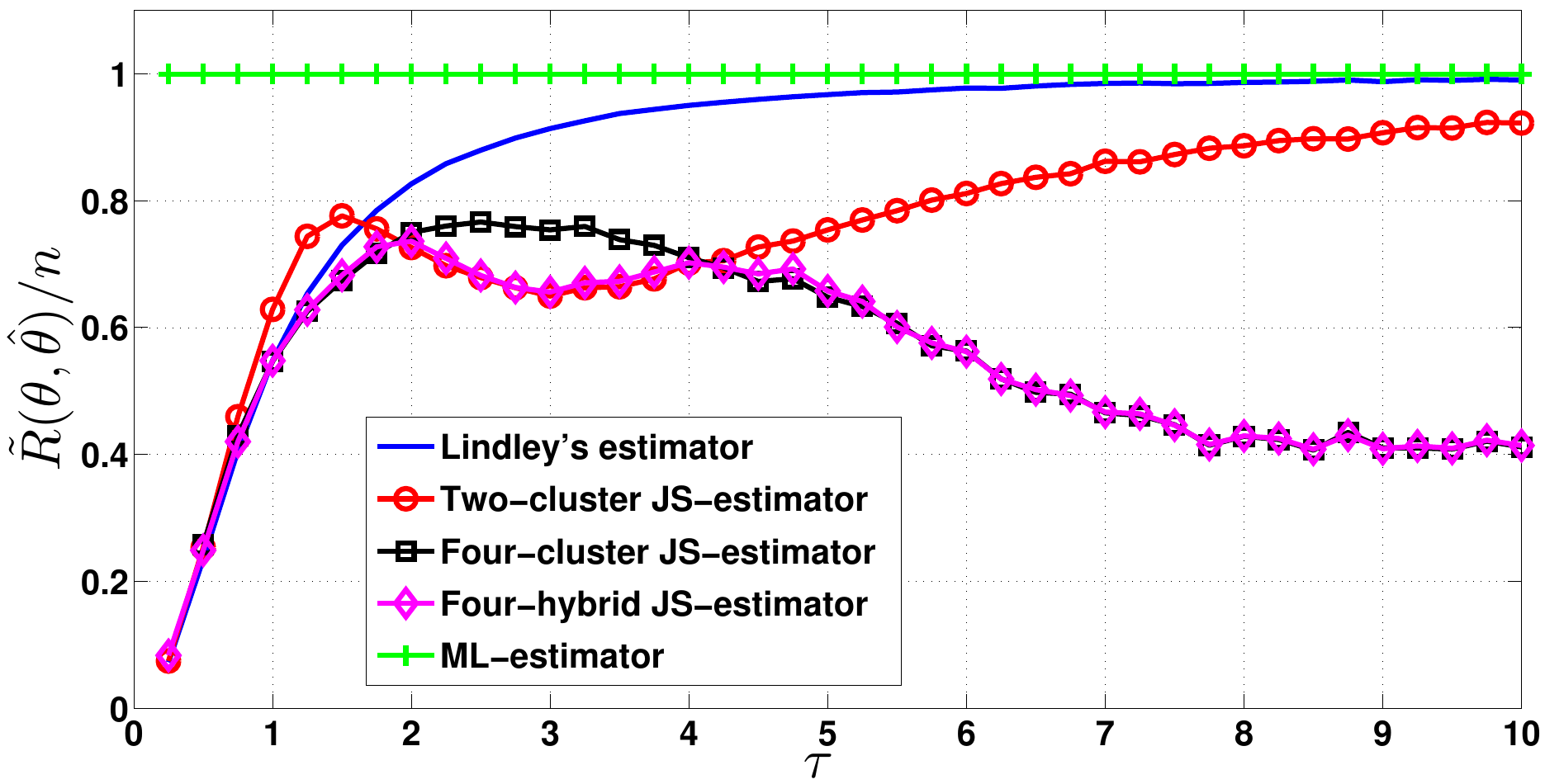}
       }
    \end{subfloat} 
    \caption{\small Average normalized loss of various estimators for $n = 1000$ and for different arrangements of the samples $\{\theta_i\}_{i=1}^n$. In (a), $\{\theta_i\}_{i=1}^n$ are placed in four equal-sized clusters of width $0.5\tau$ and in (b), the clusters are of width $0.25\tau$. In both cases, the clusters are centred at $1.5\tau$, $0.9\tau$, $-0.5\tau$ and $-1.25\tau$. }
    \label{fig_4_hybrid_2}
    \vspace{-5pt}
\end{figure}

In this section, we present simulation plots that compare the average normalized loss of the proposed estimators with that of the regular JS-estimator and Lindley's estimator, for various choices of $\bst$.  In each plot, the normalized loss, labelled 
$\frac{1}{n}\tilde{R}( \boldsymbol{\theta}, \hat{\boldsymbol{\theta}})$ on the $Y$-axis,  is computed by averaging over $1000$ realizations of $\mathbf{w}$.
 We use $\mathbf{w} \sim \mathcal{N}(0,\mathbf{I})$, i.e., the noise variance $\sigma^2=1$.  Both the regular JS-estimator $\hat{\boldsymbol{\theta}}_{JS}$ and Lindley's estimator $\hat{\boldsymbol{\theta}}_{JS_1}$ used are the positive-part versions, respectively given by \eqref{pp_JS} and \eqref{pp_lindley}. We choose $\delta = {5}/{\sqrt{n}}$ for our proposed estimators.
 
In Figs. \ref{fig_2_hybrid_1}--\ref{fig_4_hybrid_2}, we consider three different structures for $\bst$, representing varying degrees of clustering. In the first structure, the components $\{ \theta_i \}_{i=1}^n$
are arranged in two clusters.   In the second structure for $\bst$,  $\{ \theta_i \}_{i=1}^n$ are uniformly distributed within an interval whose length is varied.  In the third structure, $\{\theta_i \}_{i=1}^n$ are arranged in four clusters.  In both clustered structures, the locations and the widths of the clusters as well as the number of points within each cluster are varied; the locations of the points within each cluster are chosen uniformly at random.  The captions of the figures explain the details of each structure.

In Fig. \ref{fig_2_hybrid_1},  $\{\theta_i\}_{i=1}^n$ are arranged in two clusters, one centred at $-\tau$ and the other at $\tau$.  The plots show the average normalized loss as a function of $\tau$ for different values of $n$,  for four estimators: $\hat{\boldsymbol{\theta}}_{JS}$, $\hat{\boldsymbol{\theta}}_{JS_1}$, the two-attractor JS-estimator $\hat{\boldsymbol{\theta}}_{JS_2}$ given by \eqref{eq_two_part_estimator}, and the hybrid JS-estimator $\hat{\boldsymbol{\theta}}_{JS_{H}}$ given by \eqref{comb_estimator}. We observe that as $n$ increases, the average loss of 
$\hat{\boldsymbol{\theta}}_{JS_{H}}$ gets closer to the minimum of that of $\hat{\boldsymbol{\theta}}_{JS_{1}}$ and $\hat{\boldsymbol{\theta}}_{JS_{2}}$; 

Fig. \ref{fig_2_hybrid_2} shows the the average normalized loss for different arrangements of $\{\theta_i\}$, with $n$ fixed at $1000$. The plots illustrate a few cases where $\hat{\boldsymbol{\theta}}_{JS_{2}}$ has significantly lower risk than $\hat{\boldsymbol{\theta}}_{JS_{1}}$, and also the strength of $\hat{\boldsymbol{\theta}}_{JS_{H}}$ when $n$ is large. 

Fig. \ref{fig_risk_est} compares the average normalized losses of $\hat{\boldsymbol{\theta}}_{JS_1}$, $\hat{\boldsymbol{\theta}}_{JS_2}$, and $\hat{\boldsymbol{\theta}}_{JS_{H}}$ with their asymptotic risk values, obtained in Corollary \ref{cor_pp_JS_Lindley}, Theorem \ref{thm1}, and Theorem \ref{thm2}, respectively. Each subfigure considers a different arrangement of $\{\theta_i\}_{i=1}^n$, and shows how  the average losses converge to their respective theoretical values with growing $n$.

Fig. \ref{fig_4_hybrid_2}  demonstrates the effect of choosing four attractors when $\{ \theta_i \}_{i=1}^n$ form four clusters. The four-hybrid estimator $\hat{\boldsymbol{\theta}}_{JS_{H,4}}$ attempts to choose the best among $\hat{\boldsymbol{\theta}}_{JS_1}$, $\hat{\boldsymbol{\theta}}_{JS_2}$ and $\hat{\boldsymbol{\theta}}_{JS_4}$ based on the data $\mbf{y}$. It is clear that depending on the values of $\{\theta_i\}$, $\hat{\boldsymbol{\theta}}_{JS_{H,4}}$ reliably tracks the best of these. and can have significantly lower loss than both $\hat{\boldsymbol{\theta}}_{JS_1}$ and $\hat{\boldsymbol{\theta}}_{JS_{2}}$, especially for large values of $n$.


\section{Proofs} \label{sec:proofs}

\subsection{Mathematical preliminaries} \label{subsec:prelim}

Here we list some concentration results that are used in the proofs of the theorems. 

\begin{lemma}\label{prop_as}
 Let $\{X_n(\bst),\bst \in \mathbb{R}^n\}_{n=1}^\infty$ be a sequence of random variables such that $X_n(\bst) \doteq 0$, i.e., for any $\epsilon > 0$,
 \begin{equation*}
  \mathbb{P}(\vert X_n(\bst) \vert \geq \epsilon ) \leq Ke^{-\frac{nk\min(\epsilon^2,1)}{\max(\Vert \bst\Vert^2/n,1)}},
 \end{equation*}
where $K$ and $k$ are positive constants. If $C \vcentcolon= \limsup_{n \to \infty}\Vert \bst \Vert^2/n < \infty$, then $X_n \overset{a.s.}{\longrightarrow} 0$.
\end{lemma}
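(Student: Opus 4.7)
The plan is to deduce almost sure convergence via the first Borel--Cantelli lemma, exploiting the exponential concentration bound that is assumed.

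First I would use the hypothesis $C := \limsup_{n\to\infty} \|\bst\|^2/n < \infty$ to tame the denominator in the exponent. Since the limsup is finite, there exists $N_0$ such that $\|\bst\|^2/n \le C+1$ for all $n \ge N_0$. Setting $M := \max(C+1, 1)$, this gives $\max(\|\bst\|^2/n, 1) \le M$ uniformly in $n \ge N_0$. Thus for any fixed $\epsilon \in (0,1]$ and all sufficiently large $n$,
\begin{equation*}
\mathbb{P}\bigl(|X_n(\bst)| \ge \epsilon\bigr) \;\le\; K\,e^{-nk\epsilon^2/M}.
\end{equation*}

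Next, summing a geometric series in $n$ shows that $\sum_{n=1}^\infty \mathbb{P}(|X_n(\bst)| \ge \epsilon) < \infty$ for every fixed $\epsilon \in (0,1]$. By the first Borel--Cantelli lemma, $\mathbb{P}(|X_n(\bst)| \ge \epsilon \text{ i.o.}) = 0$, i.e.\ for almost every $\omega$ in the probability space, $|X_n(\bst)(\omega)| < \epsilon$ for all sufficiently large $n$.

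Finally, to upgrade from ``for each fixed $\epsilon$'' to ``$X_n \to 0$ almost surely'', I would apply the previous step along a countable sequence $\epsilon_j = 1/j$, $j \ge 1$. The union of the null sets on which $|X_n(\bst)| \ge \epsilon_j$ occurs infinitely often is still a null set. On the complement, for every $j$ there exists $N_j$ such that $|X_n(\bst)| < 1/j$ for all $n \ge N_j$, which is exactly the statement that $X_n(\bst) \to 0$. Hence $X_n \overset{a.s.}{\longrightarrow} 0$. There is no real obstacle here; the argument is a direct Borel--Cantelli application, and the only thing one must be careful about is using $\limsup$ (rather than $\sup$) of $\|\bst\|^2/n$, which is handled by throwing away the first $N_0$ terms of the series, which does not affect summability.
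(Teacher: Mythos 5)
Your argument is correct and is essentially the paper's own proof: bound $\max(\Vert\bst\Vert^2/n,1)$ by a constant for all large $n$ using the finite limsup, deduce summability of the tail probabilities, and invoke the first Borel--Cantelli lemma. Your explicit passage through a countable sequence $\epsilon_j = 1/j$ just spells out a step the paper leaves implicit, so there is no substantive difference.
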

\begin{proof}
For any $\tau > 0$, there exists a positive integer $M$ such that $\forall n \geq M$, $\Vert \bst\Vert^2/n < C + \tau$. Hence, we have, for any $\epsilon >0$, and for some $\tau > 1$,
\begin{align*}
 \sum_{n=1}^{\infty}\mathbb{P}(\vert X_n(\bst) \vert \geq \epsilon ) & \leq \sum_{n=1}^{\infty} Ke^{-\frac{nk\min(\epsilon^2,1)}{\max(\Vert \bst\Vert^2/n,1)}} \\
 &\leq C_0 + \sum_{n=M}^{\infty} Ke^{-\frac{nk\min(\epsilon^2,1)}{C + \tau}} < \infty.
\end{align*}
 Therefore, we can use the Borel-Cantelli lemma to conclude that $X_n \overset{a.s.}{\longrightarrow} 0$.  
\end{proof}

\begin{lemma}\label{lem_4}
For  sequences of random variables $\{X_n\}_{n=1}^\infty$, $\{Y_n\}_{n=1}^\infty$ such that $X_n \doteq 0 $, $Y_n  \doteq 0$, it follows that $ X_n + Y_n \doteq 0$.  
\end{lemma}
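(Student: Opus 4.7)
The plan is to invoke a union bound after splitting the event $\{|X_n + Y_n| \geq \epsilon\}$. Since by the triangle inequality $|X_n + Y_n| \geq \epsilon$ implies that either $|X_n| \geq \epsilon/2$ or $|Y_n| \geq \epsilon/2$, I would first write
\begin{equation*}
\mathbb{P}(|X_n + Y_n| \geq \epsilon) \leq \mathbb{P}(|X_n| \geq \epsilon/2) + \mathbb{P}(|Y_n| \geq \epsilon/2).
\end{equation*}

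Next, I would apply the hypotheses $X_n \doteq 0$ and $Y_n \doteq 0$ to each term. By definition, there exist positive constants $K_1, k_1, K_2, k_2$ (independent of $n$ and $\bst$) such that
\begin{equation*}
\mathbb{P}(|X_n| \geq \epsilon/2) \leq K_1 e^{-\frac{n k_1 \min(\epsilon^2/4,\, 1)}{\max(\|\bst\|^2/n,\, 1)}}, \qquad \mathbb{P}(|Y_n| \geq \epsilon/2) \leq K_2 e^{-\frac{n k_2 \min(\epsilon^2/4,\, 1)}{\max(\|\bst\|^2/n,\, 1)}}.
\end{equation*}

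Finally, I would consolidate the constants. Using the elementary inequality $\min(\epsilon^2/4,\, 1) \geq \min(\epsilon^2,\, 1)/4$, both exponents are bounded below by $\tfrac{n\, \min(k_1,k_2) \min(\epsilon^2,1)}{4\max(\|\bst\|^2/n,\, 1)}$, so the sum is at most
\begin{equation*}
(K_1 + K_2) \, e^{-\frac{n k \min(\epsilon^2,\, 1)}{\max(\|\bst\|^2/n,\, 1)}},
\end{equation*}
with $k := \min(k_1,k_2)/4$ and $K := K_1 + K_2$. This is exactly the form \eqref{notation_doteq} required for $X_n + Y_n \doteq 0$.

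There is no real obstacle here; the argument is a routine union bound plus bookkeeping of constants. The only minor technical point is handling the replacement of $\epsilon$ by $\epsilon/2$ inside $\min(\cdot, 1)$, which is resolved by the inequality $\min(\epsilon^2/4, 1) \geq \tfrac{1}{4}\min(\epsilon^2, 1)$ valid for all $\epsilon \geq 0$. Note also that the argument extends immediately by induction to finite sums of random variables each satisfying $\doteq 0$, which is useful elsewhere in the proofs.
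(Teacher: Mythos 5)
Your proof is correct and follows essentially the same route as the paper: a triangle-inequality/union-bound split into $\{|X_n|\geq \epsilon/2\}$ and $\{|Y_n|\geq \epsilon/2\}$, followed by consolidation of constants to $K=K_1+K_2$ and $k=\min(k_1,k_2)/4$. Your explicit handling of $\min(\epsilon^2/4,1)\geq \tfrac{1}{4}\min(\epsilon^2,1)$ just makes precise a step the paper leaves implicit.
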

 \begin{proof}
For $\epsilon >0$, if  $\mathbb{P}(\vert X_n \vert \geq \epsilon) \leq K_1e^{-\frac{nk_1\min(\epsilon^2,1)}{\max(\Vert \bst\Vert^2/n,1)}}$ and $\mathbb{P}(\vert Y_n \vert \geq \epsilon) \leq K_2e^{-\frac{nk_2\min(\epsilon^2,1)}{\max(\Vert \bst\Vert^2/n,1)}}$ for positive constants $K_1$, $K_2$, $k_1$ and $k_2$, then by the triangle inequality,
\begin{align*}
 \mathbb{P}(\vert X_n + Y_n \vert \geq \epsilon) & \leq \mathbb{P}\left(\vert X_n \vert \geq \frac{\epsilon}{2}\right) + \mathbb{P}\left(\vert  Y_n \vert \geq \frac{\epsilon}{2}\right) \\
 & \leq Ke^{-\frac{nk\min(\epsilon^2,1)}{\max(\Vert \bst\Vert^2/n,1)}}
\end{align*}
where $K=K_1+K_2$ and $k = \min\left(\frac{k_1}{4}.\frac{k_2}{4} \right)$.
 \end{proof}

 \begin{lemma}\label{lem1_app2}
 Let $X$ and $Y$ be random variables such that for any $\epsilon >0$, 
 \begin{equation*}
 \begin{split}
 \mathbb{P}\left( \left\vert X - a_X \right\vert \geq \epsilon \right) \leq K_1e^{-nk_1 \min\left(\epsilon^2,1\right)},\\
 \mathbb{P}\left( \left\vert Y - a_Y \right\vert  \geq \epsilon \right) \leq K_2e^{-nk_2 \min\left(\epsilon^2,1\right)}
 \end{split}
 \end{equation*}
 where $k_1$, $k_2$ are positive constants, and $K_1$, $K_2$ are positive integer constants. Then,
 \begin{equation*}
  \mathbb{P}\left( \left\vert XY - a_Xa_Y \right\vert \geq \epsilon \right) \leq Ke^{-nk\min\left(\epsilon^2,1\right)} 
  \end{equation*}
 where $K = 2(K_1 + K_2)$, and $k$ is a positive constant depending on $k_1$ and $k_2$.
\end{lemma}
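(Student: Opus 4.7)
The plan is to bound $|XY - a_X a_Y|$ by decomposing it into a cross term and two linear terms, handling each via a union bound. Starting from the identity
\[
XY - a_X a_Y \;=\; (X-a_X)(Y-a_Y) \;+\; a_X(Y-a_Y) \;+\; a_Y(X-a_X),
\]
the triangle inequality gives
\[
\mathbb{P}(|XY - a_X a_Y| \geq \epsilon) \;\leq\; \mathbb{P}(|X-a_X||Y-a_Y| \geq \epsilon/3) + \mathbb{P}(|a_Y||X-a_X| \geq \epsilon/3) + \mathbb{P}(|a_X||Y-a_Y| \geq \epsilon/3).
\]
The two linear terms are immediate: each event $\{|a_Y||X - a_X| \geq \epsilon/3\}$ is just $\{|X-a_X| \geq \epsilon/(3|a_Y|)\}$, so the given hypothesis yields a bound of the form $K_1 e^{-nk_1'\min(\epsilon^2,1)}$, where the constant $k_1'$ absorbs $|a_Y|$ (and similarly $k_2'$ absorbs $|a_X|$). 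These are already of the desired shape.

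The cross term is the one that requires the standard ``balance'' trick. For the event $\{|X-a_X||Y-a_Y| \geq \epsilon/3\}$ to occur, at least one of the factors must exceed $\sqrt{\epsilon/3}$, so
\[
\mathbb{P}\!\left(|X-a_X||Y-a_Y| \geq \epsilon/3\right) \;\leq\; \mathbb{P}\!\left(|X-a_X| \geq \sqrt{\epsilon/3}\right) + \mathbb{P}\!\left(|Y-a_Y| \geq \sqrt{\epsilon/3}\right),
\]
and each summand is at most $K_i e^{-n k_i \min(\epsilon/3,\,1)}$. For $\epsilon \leq 1$ we have $\epsilon/3 \geq \epsilon^2/3$, so $e^{-nk_i\min(\epsilon/3,1)} \leq e^{-(nk_i/3)\min(\epsilon^2,1)}$, and for $\epsilon > 1$ the exponent saturates at $1$ on both sides; in either regime the cross-term bound reduces to the target form $K_i e^{-n k_i''\min(\epsilon^2,1)}$ with $k_i'' \leq k_i/3$.

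Combining all three contributions and taking $k = \min(k_1', k_2', k_1'', k_2'')$, the three probabilities sum to at most $2K_1 e^{-nk\min(\epsilon^2,1)} + 2K_2 e^{-nk\min(\epsilon^2,1)} = 2(K_1+K_2)\,e^{-nk\min(\epsilon^2,1)}$, matching the claimed constants $K = 2(K_1+K_2)$. There is no real obstacle here beyond keeping the book-keeping clean; the only subtlety worth flagging in the write-up is the passage from the $\min(\epsilon/3,1)$ exponent in the cross-term bound to the required $\min(\epsilon^2,1)$ form, which exploits that $\epsilon/3 \geq \epsilon^2/3$ precisely on the regime $\epsilon \leq 1$ where $\min(\epsilon^2,1) = \epsilon^2$.
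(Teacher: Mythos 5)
Your proof is correct and follows essentially the same route as the paper's: the identity $XY - a_Xa_Y = (X-a_X)(Y-a_Y) + a_Y(X-a_X) + a_X(Y-a_Y)$, a union bound over the three terms, the square-root trick for the cross term, and the observation that $\min(\epsilon/c,1)$ can be absorbed into $\min(\epsilon^2,1)$ at the cost of a constant in the exponent. The only difference is cosmetic (splitting the deviation as $\epsilon/3+\epsilon/3+\epsilon/3$ rather than the paper's $\epsilon/2+\epsilon/4+\epsilon/4$), so no further comparison is needed.
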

\begin{proof}
We have
 \begin{align*}
&\mathbb{P}\left( \left\vert XY - a_Xa_Y \right\vert \geq \epsilon \right) \\
&= \mathbb{P}\left( \left\vert (X-a_X)(Y-a_Y) + Xa_Y +Ya_X- 2a_Xa_Y \right\vert \geq \epsilon \right)\\
& \leq \mathbb{P}\left( \left\vert (X-a_X)(Y-a_Y) \right\vert \geq \frac{\epsilon}{2} \right) \\
& \quad + \mathbb{P}\left(\left\vert Xa_Y +Ya_X- 2a_Xa_Y \right\vert \geq \frac{\epsilon}{2} \right)\\
& \leq \mathbb{P}\left( \left\vert (X-a_X)(Y-a_Y) \right\vert \geq \frac{\epsilon}{2} \right)  \\
& \quad + \mathbb{P}\left(\left\vert (X-a_X)a_Y \right\vert \geq \frac{\epsilon}{4}\right) + \mathbb{P}\left(\left\vert (Y-a_Y)a_X \right\vert \geq \frac{\epsilon}{4}\right)\\
&  \leq \mathbb{P}\left( \left\vert (X-a_X)\right\vert \geq \sqrt{\frac{\epsilon}{2}} \right) + \mathbb{P}\left( \left\vert (Y-a_Y) \right\vert \geq \sqrt{\frac{\epsilon}{2}} \right) \\
& \quad + \mathbb{P}\left(\left\vert (X-a_X)a_Y \right\vert \geq \frac{\epsilon}{4}\right) + \mathbb{P}\left(\left\vert (Y-a_Y)a_X \right\vert \geq \frac{\epsilon}{4}\right)\\
& \leq K_1e^{-nk_1^\prime \min\left(\epsilon,1\right)} + K_2e^{-nk_2^\prime \min\left(\epsilon,1\right)} + K_1e^{-nk_1^{\prime \prime} \min\left(\epsilon^2,1\right)}\\
& \quad + K_2e^{-nk_2^{\prime \prime} \min\left(\epsilon^2,1\right)} \leq Ke^{-nk \min\left(\epsilon^2,1\right)}
\end{align*}
where $k_1^\prime = \frac{k_1}{2}$, $k_2^\prime = \frac{k_2}{2}$, $k_1^{\prime \prime}= \frac{k_1}{16(a_Y)^2}$, $k_2^{\prime\prime} = \frac{k_2}{16(a_Y)^2}$, and $k = \frac{\min(k_1,k_2)}{16(a_Y)^2}$.
\end{proof}

\begin{lemma}\label{lem1_app}
 Let $Y$ be a non-negative random variable such that there exists $a_Y > 0$ such that for any $\epsilon >0$, 
 \begin{align*}
 \mathbb{P}\left( Y - a_Y  \leq -\epsilon \right) & \leq  K_1e^{-nk_1 \min\left(\epsilon^2,1\right)},\\
 \mathbb{P}\left( Y - a_Y  \geq \epsilon \right) & \leq  K_2e^{-nk_2 \min\left(\epsilon^2,1\right)} 
 \end{align*}
 where $k_1,k_2$ are positive constants, and $K_1,K_2$ are positive integer constants. Then, for any $\epsilon > 0$,
 \begin{equation*}
  \mathbb{P}\left( \left\vert \frac{1}{Y} - \frac{1}{a_Y} \right\vert \geq \epsilon \right) \leq Ke^{-nk \min\left(\epsilon^2,1\right)} 
  \end{equation*}
  where $K = K_1+K_2$, and $k$ is a positive constant.
\end{lemma}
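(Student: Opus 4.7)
The proof plan is to exploit the identity
\[
\frac{1}{Y} - \frac{1}{a_Y} \;=\; \frac{a_Y - Y}{a_Y\, Y},
\]
and then to control the $Y$ in the denominator by a union-bound argument that separates the ``good'' event $\{Y \geq a_Y/2\}$ from the ``bad'' event $\{Y < a_Y/2\}$. Since $a_Y$ is a fixed positive constant, the bad event has exponentially small probability by the one-sided lower-tail hypothesis with $\epsilon = a_Y/2$, namely $\mathbb{P}(Y < a_Y/2) \leq K_1 e^{-nk_1 \min(a_Y^2/4,\,1)}$, a quantity that can be absorbed into the leading constant $K$ (at the cost of enlarging it) or bounded by $K_1 e^{-n k'}$ for some $k'>0$ independent of $\epsilon$.

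On the good event $\{Y \geq a_Y/2\}$, we have the crude bound
\[
\left|\frac{1}{Y} - \frac{1}{a_Y}\right| \;=\; \frac{|Y - a_Y|}{a_Y\, Y} \;\leq\; \frac{2\,|Y - a_Y|}{a_Y^2},
\]
so $\{|1/Y - 1/a_Y| \geq \epsilon\} \cap \{Y \geq a_Y/2\} \subseteq \{|Y - a_Y| \geq \epsilon a_Y^2/2\}$. Applying a union bound on the two-sided deviation hypothesis gives
\[
\mathbb{P}\!\left(|Y - a_Y| \geq \tfrac{\epsilon a_Y^2}{2}\right) \;\leq\; (K_1 + K_2)\, e^{-n\, k_0 \min\bigl((\epsilon a_Y^2/2)^2,\,1\bigr)}
\]
for $k_0 = \min(k_1, k_2)$. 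Combining this with the probability of the bad event yields a bound of the form $K\, e^{-nk\, \min(\epsilon^2, 1)}$ where $k$ depends on $a_Y$, $k_1$, $k_2$ but not on $\epsilon$ or $n$; this uses the elementary observation that $\min((\epsilon a_Y^2/2)^2,1) \geq c_{a_Y} \min(\epsilon^2, 1)$ for a constant $c_{a_Y} > 0$.

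There is no genuine obstacle here; the only subtlety is that one must verify both a lower and an upper tail on $Y-a_Y$ (which is exactly why the hypothesis is stated with two separate bounds), and one must use the lower-tail hypothesis twice: once, with the specific value $\epsilon = a_Y/2$, to bound $\mathbb{P}(Y < a_Y/2)$, and once in the union-bound step for the event $|Y - a_Y| \geq \epsilon a_Y^2/2$. Tracking constants, we end up with $K = K_1 + K_2$ (after absorbing the bad-event contribution into one of the terms by choosing $k$ slightly smaller), matching the statement.
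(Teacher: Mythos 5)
Your proposal is correct in substance but follows a different route from the paper. The paper inverts the events exactly: it rewrites $\{1/Y - 1/a_Y \geq \epsilon\}$ as $\{Y - a_Y \leq -a_Y\epsilon a_Y/(1+\epsilon a_Y)\}$ and $\{1/Y - 1/a_Y \leq -\epsilon\}$ as $\{Y - a_Y \geq a_Y \epsilon a_Y/(1-\epsilon a_Y)\}$ (noting the latter event is empty when $\epsilon > 1/a_Y$), applies the lower-tail hypothesis to the first and the upper-tail hypothesis to the second, and checks that each resulting exponent is at least a constant times $\min(\epsilon^2,1)$. This uses each hypothesis exactly once, keeps the one-sided structure, and yields $K = K_1 + K_2$ with no extra terms. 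Your argument instead splits on the good event $\{Y \geq a_Y/2\}$, uses the crude bound $|1/Y-1/a_Y| \leq 2|Y-a_Y|/a_Y^2$ there, and pays for the bad event with a second application of the lower-tail bound at $\epsilon = a_Y/2$; this is a perfectly standard and robust alternative, and the key inequality $\min((\epsilon a_Y^2/2)^2,1) \geq c_{a_Y}\min(\epsilon^2,1)$ is right. The one inaccuracy is the bookkeeping at the end: your decomposition naturally gives $K = 2K_1 + K_2$, and the extra $K_1 e^{-nk_1\min(a_Y^2/4,1)}$ term cannot in general be absorbed into a bound with prefactor exactly $K_1+K_2$ merely by shrinking $k$ (take $n$ and $\epsilon$ small: the left side tends to $K_1+K_2+K_1e^{-n\beta}$ while the right side is at most $K_1+K_2$). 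Since the paper explicitly never uses the exact values of $K$ and $k$ downstream, this discrepancy is harmless, but you should either state $K = 2K_1+K_2$ or adopt the exact-inversion argument if you want the constant as stated in the lemma.
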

\begin{proof}
 We have 
\begin{align} \nonumber
&  \mathbb{P}\left(\frac{1}{Y} - \frac{1}{a_Y} \geq \epsilon \right)  = \mathbb{P}\left(\frac{1}{Y} \geq \frac{1}{a_Y} + \epsilon \right) \\ \nonumber
& = \mathbb{P}\left(Y \leq \frac{1}{\epsilon + {1}/{a_Y}}  \right) =  \mathbb{P}\left(Y - a_Y\leq \frac{1}{\epsilon + {1}/{a_Y}} - a_Y \right)\\ \nonumber
&=\mathbb{P}\left(Y - a_Y \leq -a_Y\left(\frac{\epsilon a_Y}{1 + \epsilon a_Y}\right)\right)\\ \label{lem1_app_eq1}
 & \leq K_1 e^{-nk_1\min\left(\left(\frac{\epsilon(a_Y)^2}{1 + \epsilon a_Y}\right)^2,1\right)} \leq K_1 e^{-nk_1^\prime\min\left(\epsilon^2,1\right)}
\end{align}
where $k_1^\prime = k_1\min\left(\left(\frac{(a_Y )^2}{1 + a_Y}\right)^2,1\right)$. Similarly
\begin{align*} 
 \mathbb{P}\left(\frac{1}{Y} - \frac{1}{a_Y} \leq -\epsilon \right) & = \mathbb{P}\left(\frac{1}{Y} \leq \frac{1}{a_Y} - \epsilon \right).
 \end{align*}
 Note that when $ \epsilon > \frac{1}{a_Y}$, $\mathbb{P}\left(\frac{1}{Y} - \frac{1}{a_Y} \leq -\epsilon \right) =  0$ because $Y > 0$. Therefore,
  \begin{align} \nonumber
  & \mathbb{P}\left(\frac{1}{Y} - \frac{1}{a_Y} \leq -\epsilon \right)
   = \mathbb{P}\left(Y \geq \frac{1}{ \frac{1}{a_Y} - \epsilon }  \right) \\\nonumber 
   &=  \mathbb{P}\left(Y - a_Y\geq \frac{1}{\frac{1}{a_Y}-\epsilon} - a_Y \right)\\ \nonumber
  &=\mathbb{P}\left(Y - a_Y \geq a_Y\left(\frac{\epsilon a_Y}{1 - \epsilon  a_Y}\right)\right)\\ \nonumber
  & \leq K_2 e^{-nk_2\min\left(\left(\frac{\epsilon(a_Y)^2}{1 - \epsilon a_Y}\right)^2,1\right)} \leq K_2 e^{-nk_2\min\left(\epsilon^2(a_Y)^4,1\right)}\\ \label{lem1_app_eq2} 
  &\leq K_2 e^{-nk_2^\prime\min\left(\epsilon^2,1\right)}
 \end{align}
where $k_2^\prime = k_2\min\left((a_Y)^4,1\right)$. Using \eqref{lem1_app_eq1} and \eqref{lem1_app_eq2}, we obtain, for any $\epsilon >0$,
 \begin{equation*}
  \mathbb{P}\left( \left\vert \frac{1}{Y} - \frac{1}{a_Y} \right\vert \geq \epsilon \right) \leq Ke^{-nk \min\left(\epsilon^2,1\right)} 
  \end{equation*}
  where $k = \min(k_1^\prime, k_2^\prime)$.
\end{proof}

\begin{lemma}\label{app_lemma_g}
 Let $\{X_n \}_{n=1}^\infty$ be a sequence of random variables and $X$ be another random variable (or a constant) such that for any $\epsilon > 0 $, $\mathbb{P}\left(\vert X_n - X \vert \geq \epsilon \right) \leq Ke^{-nk\min\left(\epsilon^2,1\right)}$ for positive constants $K$ and $k$. Then, for the function $g(x) \vcentcolon= \max(\sigma^2,x)$, we have
 \begin{equation*}
  \mathbb{P}\left(\vert g(X_n) - g(X) \vert \geq \epsilon \right) \leq Ke^{-nk\min\left(\epsilon^2,1\right)}.
 \end{equation*}
\end{lemma}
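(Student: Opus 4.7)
The plan is to observe that the function $g(x)=\max(\sigma^2,x)$ is $1$-Lipschitz, and then deduce the concentration bound for $g(X_n)$ as a trivial consequence of the concentration bound for $X_n$.

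First I would verify the Lipschitz property $|g(a)-g(b)|\le |a-b|$ for all $a,b\in\mathbb{R}$ by a short case analysis: if both $a,b\ge\sigma^2$ then $g(a)-g(b)=a-b$; if both $a,b\le\sigma^2$ then $g(a)-g(b)=0$; and in the mixed case (say $a\ge\sigma^2>b$) we have $0\le g(a)-g(b)=a-\sigma^2\le a-b$. In all cases the absolute difference is bounded by $|a-b|$, which is exactly what we need.

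Next I would use the Lipschitz property to obtain the set inclusion
\[
\{|g(X_n)-g(X)|\ge\epsilon\}\subseteq\{|X_n-X|\ge\epsilon\},
\]
from which monotonicity of probability and the hypothesis on $X_n$ give
\[
\mathbb{P}\bigl(|g(X_n)-g(X)|\ge\epsilon\bigr)\le\mathbb{P}\bigl(|X_n-X|\ge\epsilon\bigr)\le Ke^{-nk\min(\epsilon^2,1)},
\]
which is the desired bound with the same constants $K$ and $k$.

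There is no real obstacle here: the entire argument rests on the elementary fact that the pointwise maximum of the identity function and a constant is nonexpansive. The only subtle point to check is that the constants transfer without loss, which they do because the Lipschitz constant is exactly one; had $g$ been $L$-Lipschitz with $L\neq 1$, the exponent would have picked up a factor of $L^{-2}$ but the exponential form would be preserved.
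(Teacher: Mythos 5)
Your proof is correct and rests on the same core fact as the paper's: the case analysis you use to verify that $g$ is nonexpansive is exactly the case analysis the paper performs (there phrased as conditioning on $\{|X_n - X| < \epsilon\}$ and showing the conditional deviation probability vanishes). Your packaging via the Lipschitz property and the set inclusion $\{|g(X_n)-g(X)|\ge\epsilon\}\subseteq\{|X_n-X|\ge\epsilon\}$ is a cleaner way to state the same argument and yields the identical bound with the same constants.
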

\begin{proof}
Let $\mathcal{A}_n \vcentcolon = \{ \omega ~\big \vert~ \vert X_n(\omega) - X(\omega) \vert \geq \epsilon \}$. 
 We have 
 \begin{align} \nonumber
  &\mathbb{P}\left(\vert g(X_n) - g(X) \vert \geq \epsilon \right) \\ \nonumber
  & =  \mathbb{P}\left( \mathcal{A}_n \right) \mathbb{P}\left(\vert g(X_n) - g(X) \vert \geq \epsilon ~\big\vert~ \mathcal{A}_n \right)\\ \nonumber
  & \quad + \mathbb{P}\left(\mathcal{A}_n^c  \right) \mathbb{P}\left(\vert g(X_n) - g(X) \vert \geq \epsilon ~\big\vert~ \mathcal{A}_n^c \right)\\ \label{eq1_app_lemma_g}
  &\leq Ke^{-nk\min\left(\epsilon^2,1\right)} +  \mathbb{P}\left(\vert g(X_n) - g(X) \vert \geq \epsilon ~\big\vert~\mathcal{A}_n^c \right).\
 \end{align}
Now, when $X_n \geq \sigma^2$ and $X \geq \sigma^2$, it follows that $g(X_n) - g(X) = X_n -X$, and the second term of the RHS of \eqref{eq1_app_lemma_g} equals $0$, as it also does when $X_n < \sigma^2$ and $X < \sigma^2$. Let us consider the case where $X_n \geq \sigma^2$ and $X < \sigma^2$. Then, $g(X_n) - g(X) = X_n - \sigma^2 < X_n - X < \epsilon$, as we condition on the fact that $\vert X_n - X\vert < \epsilon$; hence in this case $\mathbb{P}\left(\vert g(X_n) - g(X) \vert \geq \epsilon ~\big\vert~ \mathcal{A}_n^c\right) = 0$. Finally, when $X_n < \sigma^2$ and $X \geq \sigma^2$, we have $   g(X) - g(X_n)  = X - \sigma^2 < X - X_n < \epsilon$; hence in this case also we have $\mathbb{P}\left(\vert g(X_n) - g(X) \vert \geq \epsilon ~\big\vert~ \mathcal{A}_n^c\right) = 0$. This proves the lemma.
\end{proof}

\begin{lemma}
(Hoeffding's Inequality \cite[Thm. 2.8]{boucheron2}). Let $X_1,\cdots, X_n$ be independent random variables such that $X_i \in [a_i, b_i]$ almost surely, for all $i \leq n$. Let $S_n = \sum_{i=1}^n (X_i - \mathbb{E}[X_i])$. Then for any $\epsilon > 0$, 
$\mathbb{P}\left( \frac{\abs{S_n}}{n} \geq \epsilon\right) \leq 2 e^{-\frac{2n^2 \epsilon^2}{\sum_{i=1}^n(b_i-a_i)^2}}$.
\end{lemma}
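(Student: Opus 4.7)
The plan is to follow the classical Chernoff-bound route, splitting the two-sided event and bounding each tail separately. First I would write, for any $\lambda > 0$ and any $\epsilon > 0$,
\[
\mathbb{P}(S_n \geq n\epsilon) \;=\; \mathbb{P}\bigl(e^{\lambda S_n} \geq e^{\lambda n \epsilon}\bigr) \;\leq\; e^{-\lambda n \epsilon}\,\mathbb{E}\!\left[e^{\lambda S_n}\right]
\]
by Markov's inequality. Since the $X_i$ are independent and $S_n = \sum_{i=1}^n (X_i - \mathbb{E}[X_i])$, the moment generating function factorizes: $\mathbb{E}[e^{\lambda S_n}] = \prod_{i=1}^n \mathbb{E}[e^{\lambda(X_i - \mathbb{E}[X_i])}]$.

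The heart of the argument is the single-variable MGF bound (Hoeffding's lemma): for a centered random variable $Y \in [a,b]$ almost surely, $\mathbb{E}[e^{\lambda Y}] \leq e^{\lambda^2 (b-a)^2/8}$. I would prove this by using convexity of $y \mapsto e^{\lambda y}$ to write $e^{\lambda y} \leq \frac{b-y}{b-a} e^{\lambda a} + \frac{y-a}{b-a} e^{\lambda b}$ on $[a,b]$, taking expectation (using $\mathbb{E}[Y]=0$) to obtain $\mathbb{E}[e^{\lambda Y}] \leq \frac{b}{b-a} e^{\lambda a} - \frac{a}{b-a} e^{\lambda b}$, and then setting $\psi(\lambda) := \log\bigl(\frac{b}{b-a} e^{\lambda a} - \frac{a}{b-a} e^{\lambda b}\bigr)$ and verifying $\psi(0)=\psi'(0)=0$ together with $\psi''(\lambda) \leq (b-a)^2/4$ (this last step is a symmetrization/AM-GM argument on the implied tilted distribution). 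Taylor expansion to second order then yields the claimed bound.

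Applying this lemma to each $Y_i := X_i - \mathbb{E}[X_i] \in [a_i - \mathbb{E}[X_i],\; b_i - \mathbb{E}[X_i]]$, whose range length is still $b_i - a_i$, gives
\[
\mathbb{P}(S_n \geq n\epsilon) \;\leq\; \exp\!\Bigl(-\lambda n \epsilon + \tfrac{\lambda^2}{8} \textstyle\sum_{i=1}^n (b_i - a_i)^2\Bigr).
\]
Optimizing over $\lambda > 0$ by choosing $\lambda = 4 n \epsilon / \sum_{i=1}^n (b_i - a_i)^2$ produces the one-sided tail bound $\exp\bigl(-2 n^2 \epsilon^2 / \sum_{i=1}^n (b_i - a_i)^2\bigr)$. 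A mirror argument applied to $-Y_i \in [-(b_i - \mathbb{E}[X_i]),\, -(a_i - \mathbb{E}[X_i])]$ (same range length) handles $\mathbb{P}(S_n \leq -n\epsilon)$, and a union bound over the two tails introduces the factor of $2$ in the final statement.

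The main obstacle is Hoeffding's lemma itself, specifically establishing the bound $\psi''(\lambda) \leq (b-a)^2/4$: the algebra is compact but requires recognizing the expression as the variance of a Bernoulli-type random variable under a tilted measure, whose variance is maximized at $1/4$ times the squared range. The rest of the argument, once this inequality is in hand, is a direct Chernoff-optimization computation.
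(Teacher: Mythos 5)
Your proof is correct: the Chernoff--Hoeffding route (Markov's inequality on the MGF, Hoeffding's lemma $\mathbb{E}[e^{\lambda Y}] \leq e^{\lambda^2(b-a)^2/8}$ for a centered bounded variable, optimization at $\lambda = 4n\epsilon/\sum_{i=1}^n(b_i-a_i)^2$, and a union bound over the two tails) gives exactly the stated bound. The paper itself offers no proof of this lemma --- it is quoted verbatim from the cited reference --- and your argument is precisely the standard proof given there, so there is nothing to compare beyond noting that you have supplied the details the paper outsources.
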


\begin{lemma}\label{lem_app_conc_ineq3} (Chi-squared concentration \cite{boucheron2}).
For i.i.d. Gaussian random variables $w_1, \ldots, w_n \sim \mathcal{N}(0, \sigma^2)$, we have for any $\epsilon > 0$,
\begin{align*}
\mathbb{P}\left(\left\vert\frac{1}{n}\sum_{i=1}^n w_i^2 - \sigma^2 \right \vert \geq \epsilon\right) \leq 2 e^{-nk\min(\epsilon,\epsilon^2)},
\end{align*}
where $k = \min\left(\frac{1}{4\sigma^4},\frac{1}{2\sigma^2}\right)$.
\end{lemma}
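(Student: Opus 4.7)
The plan is to use the standard Chernoff/Laplace-transform argument, which is the canonical route for chi-squared tails. Since the $w_i \sim \mc{N}(0,\sigma^2)$ are independent, the moment generating function of $S_n \vcentcolon= \sum_{i=1}^n (w_i^2 - \sigma^2)$ is
\[
\expec\!\left[e^{\lambda S_n}\right] = (1 - 2\lambda \sigma^2)^{-n/2} e^{-n\lambda \sigma^2}, \qquad \lambda < \frac{1}{2\sigma^2},
\]
with the constraint $\lambda < 1/(2\sigma^2)$ coming from the singularity of the chi-squared MGF. The finite-$\lambda$ regime is exactly what forces the $\min(\epsilon,\epsilon^2)$ behavior in the exponent rather than a pure Gaussian tail.

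For the upper tail $\mathbb{P}(S_n/n \geq \epsilon)$, I would apply Markov's inequality to $e^{\lambda S_n}$ and then bound the log-MGF. Setting $x = 2\lambda \sigma^2 \in [0,1)$ and using the elementary inequality $-\tfrac{1}{2}\log(1-x) - \tfrac{x}{2} \leq \tfrac{x^2}{4(1-x)}$ gives $\log \expec[e^{\lambda S_n}] \leq n\sigma^4 \lambda^2/(1-2\lambda\sigma^2)$. Optimizing the Chernoff exponent $\lambda \epsilon - \sigma^4 \lambda^2/(1-2\lambda\sigma^2)$ over $\lambda$ yields a Bernstein-type bound
\[
\mathbb{P}\!\left(\tfrac{1}{n}S_n \geq \epsilon\right) \leq \exp\!\left(-\frac{n\epsilon^2}{4\sigma^4 + 4\sigma^2 \epsilon}\right).
\]
Splitting into the cases $\epsilon \leq \sigma^2$ and $\epsilon > \sigma^2$ then absorbs the $\epsilon$ in the denominator to give an exponent of $-nk\min(\epsilon,\epsilon^2)$ with $k = \min(1/(4\sigma^4),1/(2\sigma^2))$.

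For the lower tail $\mathbb{P}(S_n/n \leq -\epsilon)$, the MGF is finite for all $\lambda < 0$, so the same Chernoff route with $\lambda<0$ and the (now global) bound $-\tfrac{1}{2}\log(1-x) - \tfrac{x}{2} \leq x^2/4$ for $x \leq 0$ produces a cleaner Gaussian tail $\exp(-n\epsilon^2/(4\sigma^4))$, which is subsumed by the same $\exp(-nk\min(\epsilon,\epsilon^2))$ form. Union-bounding the two one-sided tails gives the factor of $2$ and completes the proof. There is no real obstacle here — the whole argument is routine chi-squared concentration; the only care needed is in handling the MGF singularity in the upper-tail optimization, which is what introduces the $\min(\epsilon,\epsilon^2)$ form.
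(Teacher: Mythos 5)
The paper never proves this lemma internally --- it is quoted with a citation to Boucheron--Lugosi--Massart --- so there is no in-paper proof to compare against. Your route (exact chi-squared MGF, the bound $-\tfrac{1}{2}\log(1-x)-\tfrac{x}{2}\le \frac{x^2}{4(1-x)}$ for $0\le x<1$, Bernstein-type Chernoff optimization for the upper tail, a purely Gaussian bound for the lower tail, then a union bound) is exactly the standard sub-gamma argument from that reference, and each of those steps is correct: the intermediate bound $\mathbb{P}\bigl(\tfrac{1}{n}S_n\ge\epsilon\bigr)\le \exp\bigl(-\frac{n\epsilon^2}{4\sigma^4+4\sigma^2\epsilon}\bigr)$ and the lower-tail bound $\exp\bigl(-\frac{n\epsilon^2}{4\sigma^4}\bigr)$ are both right.

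The one genuine issue is the final constant bookkeeping. Splitting at $\epsilon=\sigma^2$ in your Bernstein bound gives an exponent of at least $n\epsilon^2/(8\sigma^4)$ when $\epsilon\le\sigma^2$ and at least $n\epsilon/(8\sigma^2)$ when $\epsilon>\sigma^2$, i.e.\ $k=\min\bigl(\tfrac{1}{8\sigma^4},\tfrac{1}{8\sigma^2}\bigr)$, not the $k=\min\bigl(\tfrac{1}{4\sigma^4},\tfrac{1}{2\sigma^2}\bigr)$ stated in the lemma: the denominator $4\sigma^4+4\sigma^2\epsilon$ cannot be absorbed into $4\sigma^4$ alone. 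In fact the lemma's literal constant is not attainable for the upper tail by any method, since the exact Cram\'er rate for $\mathbb{P}\bigl(\tfrac{1}{n}\sum_i w_i^2\ge\sigma^2+\epsilon\bigr)$ is $\tfrac{1}{2}\bigl[\epsilon/\sigma^2-\log(1+\epsilon/\sigma^2)\bigr]=\tfrac{\epsilon^2}{4\sigma^4}-\tfrac{\epsilon^3}{6\sigma^6}+\cdots$, which lies strictly below $\epsilon^2/(4\sigma^4)$ for small $\epsilon$; so this is a minor slip in the statement rather than in your method. Everywhere the paper uses the lemma it only needs ``some positive constant $k$'', which your argument does deliver; to be precise you should either report $k=\min\bigl(\tfrac{1}{8\sigma^4},\tfrac{1}{8\sigma^2}\bigr)$ or simply state the Bernstein form $\exp\bigl(-\frac{n\epsilon^2}{4\sigma^4+4\sigma^2\epsilon}\bigr)$ directly.
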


\begin{lemma}\label{lem_app_conc_ineq1}
For $i = 1, \cdots,n$, let $w_i \sim (0, \sigma^2)$ be independent, and $a_i$ be real-valued and finite constants.  We have for any $\epsilon > 0$,
\begin{align}
\nonumber
&\mathbb{P}\left(\frac{1}{n}\left\vert\sum_{i=1}^n w_i\mathsf{1}_{\left\{w_i > a_i\right\}} -\frac{\sigma}{\sqrt{2\pi}}\sum_{i=1}^ne^{-\frac{a_i^2}{2\sigma^2}}\right \vert \geq \epsilon\right) \\ \label{lem_app_conc_ineq1_eq1}
& \leq 2e^{-nk_1\min(\epsilon,\epsilon^2)} ,\\ 
\nonumber
&\mathbb{P}\left(\frac{1}{n}\left\vert\sum_{i=1}^n w_i\mathsf{1}_{\left\{w_i \leq a_i\right\}} +\frac{\sigma}{\sqrt{2\pi}}\sum_{i=1}^ne^{-\frac{a_i^2}{2\sigma^2}}\right \vert  \geq \epsilon\right) \\ \label{lem_app_conc_ineq1_eq2} 
&\leq 2e^{-nk_2\min(\epsilon,\epsilon^2)}
\end{align}
where $k_1$ and $k_2$ are positive constants.
\end{lemma}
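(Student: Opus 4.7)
My plan is to prove both bounds via a Chernoff argument applied to independent, centered, sub-Gaussian random variables, with the key technical content being a sub-Gaussian MGF bound that is uniform in $a_i$.

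The first step is to identify the centering constants by a direct Gaussian integration. Setting $X_i := w_i \mathsf{1}_{\{w_i > a_i\}}$, I would compute
\[
\mathbb{E}[X_i] \;=\; \int_{a_i}^{\infty} w\, \frac{e^{-w^2/(2\sigma^2)}}{\sqrt{2\pi \sigma^2}}\, dw \;=\; \frac{\sigma}{\sqrt{2\pi}}\, e^{-a_i^2/(2\sigma^2)},
\]
which is precisely the quantity being subtracted in \eqref{lem_app_conc_ineq1_eq1}. Since $\mathbb{E}[w_i] = 0$, one then also has $\mathbb{E}[w_i \mathsf{1}_{\{w_i \leq a_i\}}] = -\mathbb{E}[X_i]$, explaining the sign in \eqref{lem_app_conc_ineq1_eq2}.

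Next, let $\mu_i := \mathbb{E}[X_i]$ and $Y_i := X_i - \mu_i$. The $Y_i$ are independent and mean zero, and I would show that each $Y_i$ is sub-Gaussian with a variance proxy $\tau^2$ depending only on $\sigma$ (crucially, not on $a_i$). The cleanest route is the pointwise bound $|Y_i| \leq |w_i| + \sigma/\sqrt{2\pi}$, which follows from $|X_i| \leq |w_i|$ and $0 \leq \mu_i \leq \sigma/\sqrt{2\pi}$. Combined with the Gaussian tail estimate $\mathbb{P}(|w_i| > t) \leq 2 e^{-t^2/(2\sigma^2)}$, this gives a sub-Gaussian tail for $Y_i$ with constants independent of $a_i$, which is equivalent (up to absolute constants, since $Y_i$ is centered) to an MGF bound $\mathbb{E}[e^{\lambda Y_i}] \leq e^{\lambda^2 \tau^2 / 2}$ for all $\lambda \in \mathbb{R}$. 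Alternatively, one can compute the MGF of $X_i$ explicitly by splitting the integral at $w_i = a_i$ and completing the square, obtaining
\[
\mathbb{E}[e^{\lambda X_i}] \;=\; Q^c(a_i/\sigma) + e^{\lambda^2 \sigma^2 /2}\, Q\!\left(\frac{a_i - \lambda \sigma^2}{\sigma}\right),
\]
and checking that this is at most $e^{\lambda \mu_i + c \lambda^2 \sigma^2}$ uniformly in $a_i$ for some absolute constant $c$.

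With the uniform MGF bound in hand, the last step is the standard Chernoff argument applied to $S_n := \sum_{i=1}^n Y_i$: by independence, $\mathbb{E}[e^{\lambda S_n}] \leq e^{n \lambda^2 \tau^2 / 2}$, so for $t > 0$ Markov's inequality gives $\mathbb{P}(S_n \geq nt) \leq e^{-\lambda n t + n \lambda^2 \tau^2 / 2}$; optimizing at $\lambda = t/\tau^2$ yields $\mathbb{P}(S_n \geq nt) \leq e^{-n t^2 / (2 \tau^2)}$. The symmetric argument bounds the lower tail, so
\[
\mathbb{P}\!\left( \tfrac{1}{n}\, |S_n| \geq \epsilon\right) \;\leq\; 2\, e^{-n k_1 \epsilon^2},
\]
with $k_1 = 1/(2 \tau^2)$. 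Since $\epsilon^2 \geq \min(\epsilon, \epsilon^2)$ for every $\epsilon \geq 0$, this is in fact stronger than the claimed bound \eqref{lem_app_conc_ineq1_eq1}. The inequality \eqref{lem_app_conc_ineq1_eq2} follows by the same argument applied to $\widetilde{X}_i := w_i \mathsf{1}_{\{w_i \leq a_i\}}$, or directly by replacing $w_i$ with $-w_i$ (and $a_i$ with $-a_i$), since $-w_i \sim \mathcal{N}(0, \sigma^2)$ as well.

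The principal obstacle is establishing the MGF bound with a variance proxy that does not depend on $a_i$. A tempting decomposition $X_i = (w_i - a_i)_+ + a_i \mathsf{1}_{\{w_i > a_i\}}$ treats the first summand as a $1$-Lipschitz function of a Gaussian (sub-Gaussian with parameter $\sigma^2$) and the second via Hoeffding's lemma on the bounded indicator; but the latter contributes a sub-Gaussian parameter proportional to $a_i^2$, which blows up with $|a_i|$ and therefore will not yield a universal constant $k_1$. The domination $|Y_i| \leq |w_i| + \sigma/\sqrt{2\pi}$ bypasses this issue because the tail of $|w_i|$ depends only on $\sigma$. Once this uniformity is secured, the Chernoff step is routine.
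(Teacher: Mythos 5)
Your proposal is correct, and it takes a genuinely different route from the paper's proof. The paper starts from the same centering identity $\mathbb{E}[w_i\mathsf{1}_{\{w_i>a_i\}}]=\frac{\sigma}{\sqrt{2\pi}}e^{-a_i^2/(2\sigma^2)}$ and the same per-summand MGF formula you quote, but it establishes the bound $\mathbb{E}\left[e^{\lambda X}\right]\le e^{n\lambda^2\sigma^2/2}$ only for $\lambda>0$, via a bespoke analysis showing $\sup_x f(x;b)=e^{b^2/2}$ for $f(x;b)=e^{-\frac{b}{\sqrt{2\pi}}e^{-x^2/2}}\left[e^{b^2/2}Q(x-b)+1-Q(x)\right]$; this yields the upper tail with the sharp constant $1/(2\sigma^2)$. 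For the lower tail the paper switches to a Bernstein-type inequality for sums of variables bounded below, using $X_i\ge\min(0,a_i)\ge -M$ and $\mathbb{E}[X_i^2]\le\sigma^2$ — which is exactly where the weaker $\min(\epsilon,\epsilon^2)$ rate enters and where the exponent's constant picks up a dependence on $M=-\min_i\min(0,a_i)$. You instead handle both tails symmetrically by showing each centered summand is sub-Gaussian uniformly in $a_i$ through the domination $|Y_i|\le|w_i|+\sigma/\sqrt{2\pi}$, then using the standard equivalence between a sub-Gaussian tail and a two-sided MGF bound for centered variables, followed by one Chernoff step; this gives a pure $2e^{-nk\epsilon^2}$ bound with $k$ depending only on $\sigma$, which is both stronger in form than the stated inequality and, unlike the paper's lower-tail constant, genuinely uniform in the $a_i$. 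What the paper's route buys is a self-contained elementary verification with the sharp Gaussian constant on the upper tail; what yours buys is brevity, two-sidedness in one stroke, and uniformity. The only step you leave to citation is the tail-to-MGF equivalence for centered sub-Gaussian variables, which is textbook (and your explicit-MGF fallback would need the bound $Q^c(a_i/\sigma)+e^{\lambda^2\sigma^2/2}Q\left(\frac{a_i-\lambda\sigma^2}{\sigma}\right)\le e^{\lambda\mu_i+c\lambda^2\sigma^2}$ checked for both signs of $\lambda$, which the domination argument sidesteps), so I see no gap; your remark on why the decomposition $(w_i-a_i)_++a_i\mathsf{1}_{\{w_i>a_i\}}$ fails to give $a_i$-uniform constants is also apt.
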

The proof is given in Appendix \ref{pf_lem_app_conc_ineq1}.

\begin{lemma} \label{lem_app_small_O_P}
 Let $\mbf{y} \sim \mc{N}\left(\bst,\sigma^2\mbf{I}\right)$, and let $f: \mathbb{R}^n \to \mathbb{R}$ be a function such that for any $\e >0$,
 \[\mathbb{P}\left(\left \vert f({\mathbf{y}}) - a \right\vert \geq \epsilon \right) \leq 2e^{-{nk\epsilon^2}} \]
 for some constants $a,k,$ such that $k >0$.  Then for any $\e >0$, we have
\begin{align}
\label{eq:1yi_conc}
 &\mathbb{P}\left(\frac{1}{n}\left \vert\sum_{i=1}^n \mathsf{1}_{\left\{ y_i > f({\mathbf{y}})\right\}} - \sum_{i=1}^n \mathsf{1}_{\left\{ y_i > a\right\}}\right\vert  \geq \epsilon  \right)  \leq 4e^{-{nk_1\epsilon^2}},\\
 \label{eq:thetaiyi_conc}
  &\mathbb{P}\left(\frac{1}{n}\left \vert\sum_{i=1}^n  \theta_i \mathsf{1}_{\left\{ y_i > f({\mathbf{y}})\right\}} - \sum_{i=1}^n \theta_i \mathsf{1}_{\left\{y_i > a\right\}}\right\vert  \geq \epsilon  \right)  \leq 4e^{-\frac{nk_1 \epsilon^2}{\Vert \bst\Vert^2/n}}, \\ \nonumber 
  &\mathbb{P}\left(\frac{1}{n}\left \vert\sum_{i=1}^n w_i\mathsf{1}_{\left\{ y_i > f({\mathbf{y}}) \right\}} -\sum_{i=1}^n w_i\mathsf{1}_{\left\{ y_i > a\right\}}\right\vert  \geq \epsilon  \right) \\  \label{eq:wiyi_conc}
  & \leq 4e^{-{nk_1 \min(\epsilon^2,\epsilon)}}
\end{align}
where $k_1$ is a positive constant.
\end{lemma}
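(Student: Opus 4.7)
The overall strategy is to introduce a buffer parameter $\delta' > 0$ and split on the good event $\mathcal{E}_{\delta'} \vcentcolon= \{|f(\mathbf{y}) - a| \leq \delta'\}$. By the assumption on $f$, $\mathbb{P}(\mathcal{E}_{\delta'}^c) \leq 2 e^{-nk(\delta')^2}$. On $\mathcal{E}_{\delta'}$, for every index $i$ the two indicators $\mathsf{1}_{\{y_i > f(\mathbf{y})\}}$ and $\mathsf{1}_{\{y_i > a\}}$ agree unless $y_i$ lies in the interval $[a - \delta', a + \delta']$. This gives the pointwise domination
\begin{equation*}
\left|\frac{1}{n}\sum_{i=1}^n g_i\bigl(\mathsf{1}_{\{y_i > f(\mathbf{y})\}} - \mathsf{1}_{\{y_i > a\}}\bigr)\right| \leq \frac{1}{n}\sum_{i=1}^n |g_i| \mathsf{1}_{\{|y_i - a| \leq \delta'\}}
\end{equation*}
for $g_i \in \{1, \theta_i, w_i\}$ respectively. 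The remaining work is to control these three averages of ``neighbourhood'' summands, and to choose $\delta'$ so that the union bound with $\mathbb{P}(\mathcal{E}_{\delta'}^c)$ yields the stated tails.

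For \eqref{eq:1yi_conc}, set $\delta' = c\epsilon$ for a small constant $c$. Using the Gaussian density bound $\tfrac{1}{\sigma\sqrt{2\pi}}$, the mean satisfies $\mathbb{E}\bigl[\tfrac{1}{n}\sum_i \mathsf{1}_{\{|y_i - a| \leq \delta'\}}\bigr] \leq \tfrac{2\delta'}{\sigma\sqrt{2\pi}}$. Applying Hoeffding's inequality to the i.i.d.-like bounded indicators gives concentration around the mean at sub-Gaussian rate $e^{-2n\epsilon^2/C}$, and combining with the bound on $\mathbb{P}(\mathcal{E}_{\delta'}^c)$ produces the claimed $4e^{-nk_1\epsilon^2}$.

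For \eqref{eq:thetaiyi_conc}, the summands $\theta_i \mathsf{1}_{\{|y_i - a| \leq \delta'\}}$ are bounded by $|\theta_i|$. The mean is bounded by $\tfrac{2\delta'}{\sigma\sqrt{2\pi}} \cdot \tfrac{\|\bst\|_1}{n} \leq \tfrac{2\delta'}{\sigma\sqrt{2\pi}} \cdot \tfrac{\|\bst\|}{\sqrt{n}}$ via Cauchy--Schwarz on $\|\bst\|_1$. Hoeffding's inequality with $\sum_i (2|\theta_i|)^2 = 4\|\bst\|^2$ gives deviation tail $2 e^{-n\epsilon^2/(C \|\bst\|^2/n)}$. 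Choosing $\delta' = c'\epsilon/\sqrt{\|\bst\|^2/n}$ makes the mean at most $\epsilon/2$, and simultaneously bounds $\mathbb{P}(\mathcal{E}_{\delta'}^c)$ by $2e^{-nk\epsilon^2/(\|\bst\|^2/n)}$. A union bound delivers the stated exponent.

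For \eqref{eq:wiyi_conc}, the summands $|w_i| \mathsf{1}_{\{|y_i - a| \leq \delta'\}}$ are unbounded, but they inherit sub-Gaussian tails from $w_i$. A direct Gaussian integral shows $\mathbb{E}\bigl[|w_i|\mathsf{1}_{\{|y_i - a| \leq \delta'\}}\bigr] \leq (|\theta_i - a| + \delta') \cdot \tfrac{2\delta'}{\sigma\sqrt{2\pi}}$, so the mean of the average is $O(\delta')$ with a constant depending on $\|\bst\|^2/n$ (which is absorbed by choosing $\delta'$ proportionally). For the deviation, a Bernstein-type inequality for sub-exponential random variables (as in Lemma \ref{lem_app_conc_ineq3}) yields a mixed tail of the form $e^{-n k_1 \min(\epsilon^2, \epsilon)}$, where the $\min$ reflects the sub-exponential regime of the Gaussian-weighted indicators. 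A final union bound with $\mathbb{P}(\mathcal{E}_{\delta'}^c)$ completes the proof.

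\textbf{Main obstacle.} The delicate step is \eqref{eq:wiyi_conc}, where the summands are neither bounded (ruling out pure Hoeffding) nor independent of the cut-point region (making a naive decomposition loose). One must either split $w_i = w_i \mathsf{1}_{\{|w_i| \leq T\}} + w_i\mathsf{1}_{\{|w_i| > T\}}$ with $T = O(\sqrt{\log n})$ and handle the tail via Gaussian concentration, or invoke a Bernstein inequality for sub-exponential variables, which accounts for the $\min(\epsilon, \epsilon^2)$ exponent. Balancing the choice of $\delta'$ in each of the three inequalities so that the buffer event $\mathcal{E}_{\delta'}^c$ contributes the correct exponent (in particular matching the $\|\bst\|^2/n$-dependence in \eqref{eq:thetaiyi_conc} and \eqref{eq:wiyi_conc}) is the main technical bookkeeping.
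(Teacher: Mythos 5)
Your treatment of \eqref{eq:1yi_conc} and \eqref{eq:thetaiyi_conc} follows essentially the paper's own route: condition on the buffer event $\{|f(\mathbf{y})-a|\leq \delta'\}$ (the paper uses a one-sided split with width $t$, which is equivalent), dominate the indicator difference by window indicators $\mathsf{1}_{\{|y_i-a|\leq\delta'\}}$, bound the mean via the Gaussian density, apply Hoeffding, and couple $\delta'$ to $\epsilon$ (with the $\|\bst\|_1/n\leq\sqrt{\|\bst\|^2/n}$ Cauchy--Schwarz step) so that the buffer term matches the stated exponent. Those two parts are correct.

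The gap is in \eqref{eq:wiyi_conc}. Your mean bound $\mathbb{E}\big[|w_i|\mathsf{1}_{\{|y_i-a|\leq\delta'\}}\big]\leq(|\theta_i-a|+\delta')\cdot\tfrac{2\delta'}{\sigma\sqrt{2\pi}}$ carries a factor $|\theta_i-a|$, so the mean of the average is of order $\delta'\big(\tfrac{1}{n}\sum_i|\theta_i-a|\big)$, and to force it below $\epsilon/2$ you must take $\delta'$ inversely proportional to a quantity of order $\sqrt{\|\bst\|^2/n}$. But then the buffer-event probability $2e^{-nk(\delta')^2}$ becomes $\bst$-dependent, and what you prove is a bound of the form $e^{-nk\epsilon^2/\max(\|\bst\|^2/n,1)}$ rather than the stated $\bst$-free tail $4e^{-nk_1\min(\epsilon^2,\epsilon)}$; "absorbing the constant into $\delta'$" does not remove this dependence, it merely relocates it into the exponent. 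The fix is to sharpen the mean bound so that it is uniform in $\theta_i$ and $a$: bound the integrand of $\int_{a-\theta_i-\delta'}^{a-\theta_i+\delta'}\tfrac{|w|}{\sqrt{2\pi\sigma^2}}e^{-w^2/2\sigma^2}\,dw$ by its maximum $\sup_x \tfrac{|x|}{\sqrt{2\pi\sigma^2}}e^{-x^2/2\sigma^2}=\tfrac{1}{\sqrt{2\pi e}}$, giving $\mathbb{E}[|w_i|\mathsf{1}_{\{|y_i-a|\leq\delta'\}}]\leq \tfrac{2\delta'}{\sqrt{2\pi e}}$ regardless of $\theta_i$ (this is exactly what the paper does via the mean value theorem). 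With that, $\delta'$ can be chosen free of $\bst$; the paper then sets the window width proportional to $\sqrt{\epsilon_1}$ with $\epsilon_1+\sqrt{\epsilon_1}=\epsilon$, which is where its $\min(\epsilon,\epsilon^2)$ comes from, whereas your sub-exponential Bernstein step is a legitimate alternative for the deviation term and would also produce the mixed exponent --- but only after the mean is handled uniformly as above.
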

 The proof is given in Appendix \ref{pf_lem_app_small_O_P}.

\begin{lemma}
With the assumptions of Lemma \ref{lem_app_small_O_P},  let $h: \mathbb{R}^n \to \mathbb{R}$ be a function such that $b > a$ and $\mathbb{P}\left(\left \vert h({\mathbf{y}}) - b \right\vert \geq \epsilon \right) \leq 2e^{-nl\epsilon^2}$ for some  $l>0$.
Then for any $\e >0$, we have
\begin{align*}
 & \mathbb{P}\left(\frac{1}{n}\left \vert\sum_{i=1}^n \mathsf{1}_{\left\{h({\mathbf{y}}) \geq y_i > f({\mathbf{y}})\right\}} - \sum_{i=1}^n \mathsf{1}_{\left\{b \geq  y_i > a\right\}}\right\vert  \geq \epsilon  \right) \\
 & \leq 8e^{-{nk \epsilon^2}{}}. 
\end{align*}
 \label{lem:fhy_conc}
\end{lemma}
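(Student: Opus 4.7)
The plan is to reduce interval indicators to differences of half-line indicators, so that the desired bound follows from two applications of Lemma~\ref{lem_app_small_O_P}, equation~\eqref{eq:1yi_conc}. The key identity is that, provided $B \geq A$,
\[
\mathsf{1}_{\{B \geq y_i > A\}} \;=\; \mathsf{1}_{\{y_i > A\}} - \mathsf{1}_{\{y_i > B\}}.
\]
Since we are given $b > a$, and by hypothesis $h(\mathbf{y})$ concentrates around $b$ while $f(\mathbf{y})$ concentrates around $a$, the event $\mathcal{E} := \{h(\mathbf{y}) \geq f(\mathbf{y})\}$ holds with probability at least $1 - 4e^{-nk'(b-a)^2}$ (by a union bound using the two concentration hypotheses applied at level $(b-a)/2$). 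On $\mathcal{E}$ the identity above is valid for both thresholds, and on $\mathcal{E}^c$ the full sum deviation is at most $n$ almost surely while the event itself has exponentially small probability, so this contribution is absorbed into the final bound.

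On the event $\mathcal{E}$, I would apply the triangle inequality to get
\[
\Bigl|\sum_{i=1}^n \mathsf{1}_{\{h(\mathbf{y}) \geq y_i > f(\mathbf{y})\}} - \sum_{i=1}^n \mathsf{1}_{\{b \geq y_i > a\}}\Bigr| \leq \Bigl|\sum_{i=1}^n \mathsf{1}_{\{y_i > f(\mathbf{y})\}} - \sum_{i=1}^n \mathsf{1}_{\{y_i > a\}}\Bigr| + \Bigl|\sum_{i=1}^n \mathsf{1}_{\{y_i > h(\mathbf{y})\}} - \sum_{i=1}^n \mathsf{1}_{\{y_i > b\}}\Bigr|.
\]
Each term on the right is of the form treated by \eqref{eq:1yi_conc}: the first directly, and the second by applying Lemma~\ref{lem_app_small_O_P} with $(f,a)$ replaced by $(h,b)$, which is permissible because $h$ satisfies the same type of concentration hypothesis around $b$ with rate $nl$. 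A union bound at level $\epsilon/2$ for each term then gives $\mathbb{P}(\cdot \geq n\epsilon/2) \leq 4 e^{-n k_1 \epsilon^2 /4}$ for each, for a combined bound of $8 e^{-n k \epsilon^2}$ with $k = k_1/4$ after collecting constants.

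I do not expect any deep obstacles; the argument is essentially two invocations of the preceding lemma, glued by the identity above. The only mildly delicate point is handling $\mathcal{E}^c$ carefully, since the decomposition of the interval indicator as a difference of half-line indicators is only a sign-preserving identity on $\mathcal{E}$. A clean way is to write $\mathsf{1}_{\{h(\mathbf{y}) \geq y_i > f(\mathbf{y})\}} = \mathsf{1}_{\mathcal{E}} \bigl(\mathsf{1}_{\{y_i > f(\mathbf{y})\}} - \mathsf{1}_{\{y_i > h(\mathbf{y})\}}\bigr)$, then split the event $\{\text{total deviation} \geq n\epsilon\}$ according to whether $\mathcal{E}$ holds, absorbing the $\mathbb{P}(\mathcal{E}^c)$ term into the final constant $K=8$ since $(b-a)^2$ is a constant independent of $n$. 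This yields the required bound with $k$ equal to the minimum of $k_1/4$ and a constant multiple of $\min(k,l)(b-a)^2$.
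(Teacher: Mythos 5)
Your proof is correct and follows essentially the same route as the paper: write both interval indicators as differences of half-line indicators, apply Lemma \ref{lem_app_small_O_P} (specifically \eqref{eq:1yi_conc}) once with $(f,a)$ and once with $(h,b)$, and combine via the triangle inequality and a union bound at level $\epsilon/2$. Your explicit handling of the event $\{h(\mathbf{y}) < f(\mathbf{y})\}$, where the difference identity fails pointwise, is a small extra precaution that the paper's one-line proof glosses over, and it only affects the unspecified constants.
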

\begin{proof}
The result follows from Lemma \ref{lem_app_small_O_P} by noting that  
$\mathsf{1}_{\left\{h({\mathbf{y}}) \geq y_i > f({\mathbf{y}})\right\}}  =  \mathsf{1}_{\left\{ y_i > f({\mathbf{y}})\right\}}  - \mathsf{1}_{\left\{ y_i >h({\mathbf{y}})\right\}}$, and  $\mathsf{1}_{\left\{ b \geq y_i >  a \right\}}  =  \mathsf{1}_{\left\{ y_i > a \right\}}  - \mathsf{1}_{\left\{ y_i > b\right\}}$.
\end{proof}

\subsection{Proof of Theorem \ref{thm1} } \label{subsec:thm1_proof}

We have, 
  \begin{align}\nonumber
  &\frac{1}{n} \Vert\boldsymbol{\theta} - \hat{\boldsymbol{\theta}}_{JS_{2}} \Vert^2\\ \nonumber
  & =\frac{1}{n} \left\Vert \boldsymbol{\theta} - \boldsymbol{\nu}_{2} - \left[1-\frac{\sigma^2}{g\left({\Vert\mathbf{y}-\boldsymbol{\nu}_{2} \Vert^2}/{n}\right)} \right](\mathbf{y}-\boldsymbol{\nu}_{2})  \right\Vert^2 \\ \nonumber
  &= \frac{1}{n}\left\Vert \mathbf{y} - \boldsymbol{\nu}_{2} - \left[1-\frac{\sigma^2}{g\left({\Vert\mathbf{y}-\boldsymbol{\nu}_{2} \Vert^2}/{n}\right)} \right](\mathbf{y}-\boldsymbol{\nu}_{2})  - \mathbf{w}\right\Vert^2 \\\nonumber 
  & = \frac{1}{n}\left\Vert \left(\frac{\sigma^2}{g\left({\Vert\mathbf{y}-\boldsymbol{\nu}_{2} \Vert^2}/{n}\right)} \right)(\mathbf{y}-\boldsymbol{\nu}_{2})  - \mathbf{w}\right\Vert^2 \\ \nonumber
  &=\frac{\sigma^4{\Vert\mathbf{y}-\boldsymbol{\nu}_{2} \Vert^2}/{n}}{\left(g\left({\Vert\mathbf{y}-\boldsymbol{\nu}_{2} \Vert^2}/{n}\right)\right)^2}  +\frac{\left\Vert \mathbf{w}\right\Vert^2 }{n} \\ \label{eq1_thm1} 
  &\quad - \frac{2}{n} \left(\frac{\sigma^2}{g\left({\Vert\mathbf{y}-\boldsymbol{\nu}_{2} \Vert^2}/{n}\right)} \right)\left\langle\mathbf{y}-\boldsymbol{\nu}_{2}, \mathbf{w}\right \rangle.
  \end{align}
  We also have
\begin{align*}
\frac{1}{n}\left\Vert \boldsymbol{\theta} -  \boldsymbol{\nu}_{2} \right\Vert^2  &= \frac{1}{n}\left\Vert \mathbf{y} -  \boldsymbol{\nu}_{2} - \mathbf{w} \right\Vert^2 \\
&=\frac{1}{n}\left\Vert \mathbf{y} -  \boldsymbol{\nu}_{2} \right\Vert^2  + \frac{1}{n}\left\Vert \mathbf{w} \right\Vert^2  - \frac{2}{n}\left\langle \mathbf{y}-\boldsymbol{\nu}_{2}, \mathbf{w}\right \rangle 
\end{align*}
and so,
\begin{equation}\label{eq2_thm1}
- \frac{2}{n}\left\langle \mathbf{y}-\boldsymbol{\nu}_{2}, \mathbf{w}\right \rangle =  \frac{1}{n}\left\Vert \boldsymbol{\theta} -  \boldsymbol{\nu}_{2} \right\Vert^2  - \frac{1}{n}\left\Vert \mathbf{y} -  \boldsymbol{\nu}_{2} \right\Vert^2  - \frac{1}{n}\left\Vert \mathbf{w} \right\Vert^2 .
\end{equation}
Using \eqref{eq2_thm1} in \eqref{eq1_thm1}, we obtain
\begin{align} \nonumber
& \frac{\Vert\boldsymbol{\theta} - \hat{\boldsymbol{\theta}}_{JS_{2}} \Vert^2}{n}   \\ \nonumber
&= \frac{\sigma^4{\Vert\mathbf{y}-\boldsymbol{\nu}_{2} \Vert^2}/{n}}{\left(g\left({\Vert\mathbf{y}-\boldsymbol{\nu}_{2} \Vert^2}/{n}\right)\right)^2}   +\frac{\left\Vert \mathbf{w}\right\Vert^2 }{n} +  \left(\frac{\sigma^2}{g\left({\Vert\mathbf{y}-\boldsymbol{\nu}_{2} \Vert^2}/{n}\right)} \right)\\ \label{eq3_thm1}
 & \quad  \times  \left(\frac{1}{n}\left\Vert \boldsymbol{\theta} -  \boldsymbol{\nu}_{2} \right\Vert^2  - \frac{1}{n}\left\Vert \mathbf{y} -  \boldsymbol{\nu}_{2} \right\Vert^2  - \frac{1}{n}\left\Vert \mathbf{w} \right\Vert^2\right).
\end{align}
We now use the following results whose proofs are given in Appendix \ref{pf_lem3} and Appendix \ref{pf_lem_theta}.

\begin{lemma}\label{lem3}
\begin{equation}
\label{eq5_thm1}
\begin{split}
 \frac{1}{n}\left\Vert \mathbf{y}-\boldsymbol{\nu}_{2} \right\Vert^2  &\doteq \alpha_n + \sigma^2  +  \kappa_n\delta + o(\delta).
\end{split}
\end{equation}
where  $\alpha_n$ is given by \eqref{eq_alpha2}.
\end{lemma}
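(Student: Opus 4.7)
The plan is to expand $\Vert \mathbf{y} - \boldsymbol{\nu}_2\Vert^2$ via the orthogonality of the two indicator basis vectors defining $\boldsymbol{\nu}_2$, and analyse each piece using the concentration tools already in hand. Writing $N_j := \sum_i \mathsf{1}_{\{y_i \in \mathcal{C}_j\}}$ and $S_j := \sum_i y_i \mathsf{1}_{\{y_i \in \mathcal{C}_j\}}$ for $j=1,2$, orthogonality gives
\[
\frac{\Vert\mathbf{y} - \boldsymbol{\nu}_2\Vert^2}{n} \;=\; \frac{\Vert\mathbf{y}\Vert^2}{n} \;-\; 2 a_1 \frac{S_1}{n} \;-\; 2 a_2 \frac{S_2}{n} \;+\; a_1^2 \frac{N_1}{n} \;+\; a_2^2 \frac{N_2}{n}.
\]
The first summand satisfies $\Vert \mathbf{y}\Vert^2/n \doteq \Vert \bst\Vert^2/n + \sigma^2$ by writing $y_i = \theta_i + w_i$ and combining chi-squared concentration (Lemma~\ref{lem_app_conc_ineq3}) on $\tfrac{1}{n}\sum w_i^2$ with Hoeffding applied to $\tfrac{1}{n}\sum \theta_i w_i$. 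Lemma~\ref{prop_y1_y2} then directly handles the remaining empirical quantities: writing $E := \tfrac{\sigma}{n\sqrt{2\pi}}\sum_i e^{-(\bar\theta-\theta_i)^2/(2\sigma^2)}$, $Q_1 := \tfrac{1}{n}\sum_i Q((\bar\theta-\theta_i)/\sigma)$, and $Q_2 := 1 - Q_1$, we get $S_1/n \doteq c_1 Q_1 + E$, $S_2/n \doteq c_2 Q_2 - E$, $N_1/n \doteq Q_1$, and $N_2/n \doteq Q_2$, where I used the identity $c_j Q_j = \tfrac{1}{n}\sum_i \theta_i Q^{(c)}((\bar\theta-\theta_i)/\sigma)$ that follows from the definition \eqref{eq_c_1_c_2}.

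The crux is the concentration $a_j \doteq c_j + \kappa_n \delta + o(\delta)$. For $a_1$, the numerator in \eqref{eq_attractor1} divided by $n$ equals $S_1/n - \tfrac{\sigma^2}{2n\delta}\sum_i \mathsf{1}_{\{|y_i-\bar y|\leq \delta\}}$; by Lemma~\ref{prop_y1_y2} this concentrates around $c_1 Q_1 + E$ while by Lemma~\ref{prop_bias} the bias-correction term concentrates around $E + \kappa_n \delta$. The $E$ contributions cancel by design, leaving the numerator $\doteq c_1 Q_1 + \kappa_n \delta + o(\delta)$. Dividing by the denominator $N_1/n \doteq Q_1$ via the reciprocal inequality (Lemma~\ref{lem1_app}) and combining through the product inequality (Lemma~\ref{lem1_app2}) yields the claim, with a symmetric argument for $a_2$. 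The reciprocal step requires $Q_j$ to be bounded away from zero, which holds because $\bar\theta$ is the empirical mean of $\bst$ (forcing $Q_1, Q_2 \in (0,1)$ in the non-degenerate case, and $Q_1 = Q_2 = 1/2$ when all $\theta_i$ coincide).

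Substituting these concentrations back and collecting terms by Lemmas~\ref{lem_4} and~\ref{lem1_app2}, the contribution $-2 a_j (S_j/n) + a_j^2 (N_j/n)$ telescopes to $-c_1^2 Q_1 - c_2^2 Q_2 - 2E(c_1 - c_2) + \kappa_n \delta + o(\delta)$; adding $\Vert\mathbf{y}\Vert^2/n$ gives
\[
\frac{\Vert\mathbf{y}-\boldsymbol{\nu}_2\Vert^2}{n} \;\doteq\; \frac{\Vert\bst\Vert^2}{n} + \sigma^2 - c_1^2 Q_1 - c_2^2 Q_2 - 2E(c_1 - c_2) + \kappa_n \delta + o(\delta),
\]
which equals $\alpha_n + \sigma^2 + \kappa_n \delta + o(\delta)$ by the definitions \eqref{eq_beta2}--\eqref{eq_alpha2}.

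The main obstacle is bookkeeping of the $\delta$-dependent error terms: the bias correction in $a_j$ is engineered so the $O(1)$ term $E$ from Lemma~\ref{prop_y1_y2} is exactly cancelled by the leading contribution of Lemma~\ref{prop_bias}, but the surviving $\kappa_n \delta + o(\delta)$ residual must be propagated consistently through the ratio and product steps while preserving a concentration inequality of the form \eqref{notation_doteq}—in particular, one must check that the $\Vert\bst\Vert^2/n$ factor in the exponent denominator is inherited at each step. Once this accounting is in place, the derivation is a mechanical combination of the Section~\ref{subsec:prelim} toolkit.
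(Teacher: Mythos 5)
Your proposal is correct and follows essentially the same route as the paper's proof: expand $\Vert \mathbf{y}-\boldsymbol{\nu}_2\Vert^2/n$ into $\Vert\mathbf{y}\Vert^2/n$ plus the attractor cross/square terms (the paper's per-cluster expansion in Appendix B.3 reduces to exactly this expression), establish $a_j \doteq c_j + \kappa_n\delta + o(\delta)$ via Lemmas \ref{prop_y1_y2} and \ref{prop_bias} combined with the reciprocal and product concentration lemmas, and recombine using Lemma \ref{lem_4} and the chi-squared/Gaussian bounds to identify $\alpha_n + \sigma^2$. The only difference is cosmetic bookkeeping of the expansion, and your side remark about the denominators $Q_j$ is a reasonable (if loosely argued) flag of a point the paper itself leaves implicit.
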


\begin{lemma}\label{lem_theta}
 \begin{equation}
 \label{eq6_thm1}
\begin{split}
 \frac{1}{n}\left\Vert \boldsymbol{\theta}-\boldsymbol{\nu}_{2} \right\Vert^2  \doteq \beta_n + \kappa_n\delta+o(\delta)\\
\end{split}
\end{equation}
where $\beta_n$ is given by \eqref{eq_beta2}.
\end{lemma}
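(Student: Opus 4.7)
\textbf{Proof proposal for Lemma \ref{lem_theta}.} My plan is to expand $\|\boldsymbol{\theta}-\boldsymbol{\nu}_2\|^2$ in the scalars $a_1,a_2$ and the cluster-size/sum statistics, and then invoke the concentration machinery of Lemmas \ref{prop_y1_y2} and \ref{prop_bias} together with the algebra-of-concentration lemmas in Section \ref{subsec:prelim}. Because the indicator vectors $[\mathsf{1}_{\{y_i>\bar y\}}]$ and $[\mathsf{1}_{\{y_i\le\bar y\}}]$ are orthogonal, I would first write
\begin{equation*}
\frac{\|\boldsymbol{\theta}-\boldsymbol{\nu}_2\|^2}{n}
\;=\;\frac{\|\boldsymbol{\theta}\|^2}{n}
\;-\;2a_1\widetilde A_1-2a_2\widetilde A_2
\;+\;a_1^2\widetilde B_1+a_2^2\widetilde B_2,
\end{equation*}
where $\widetilde A_j:=\frac1n\sum_i\theta_i\mathsf{1}_{\{y_i\in\mathcal C_j\}}$ and $\widetilde B_j:=\frac1n\sum_i\mathsf{1}_{\{y_i\in\mathcal C_j\}}$. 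By \eqref{eq:lem1_theti_yi}--\eqref{eq:lem1_1yi1} in Lemma \ref{prop_y1_y2}, $\widetilde A_1\doteq\frac1n\sum_i\theta_iQ(\tfrac{\bar\theta-\theta_i}{\sigma})$ and $\widetilde B_1\doteq\frac1n\sum_iQ(\tfrac{\bar\theta-\theta_i}{\sigma})$, with analogous statements for $\widetilde A_2,\widetilde B_2$ involving $Q^c$. Note that $\widetilde A_1=c_1\widetilde B_1$ and $\widetilde A_2=c_2\widetilde B_2$ in the limit, where $c_1,c_2$ are defined in \eqref{eq_c_1_c_2}.

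The second step is to show $a_1\doteq c_1+\kappa_n\delta+o(\delta)$ and similarly for $a_2$. The numerator of $a_1$ in \eqref{eq_attractor1} is $\frac1n\sum y_i\mathsf{1}_{\{y_i>\bar y\}}-\frac{\sigma^2}{2n\delta}\sum\mathsf{1}_{\{|y_i-\bar y|\le\delta\}}$. By \eqref{eq:lem1_yigeq} the first sum concentrates around $\frac1n\sum\theta_i\mathsf{1}_{\{y_i>\bar y\}}+\frac{\sigma}{n\sqrt{2\pi}}\sum e^{-(\bar\theta-\theta_i)^2/(2\sigma^2)}$, and by Lemma \ref{prop_bias} the correction sum concentrates around $\frac{\sigma}{n\sqrt{2\pi}}\sum e^{-(\bar\theta-\theta_i)^2/(2\sigma^2)}+\kappa_n\delta$. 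The two Gaussian-bias contributions cancel, so the numerator concentrates around $c_1\widetilde B_1^{\mathrm{lim}}+\kappa_n\delta$, where $\widetilde B_1^{\mathrm{lim}}=\frac1n\sum Q(\tfrac{\bar\theta-\theta_i}{\sigma})$. Dividing by $\widetilde B_1$ (which by \eqref{eq:lem1_1yi} is bounded away from $0$ unless all $\theta_i$ are far to one side, a degenerate case that can be handled separately) and using Lemmas \ref{lem_4}, \ref{lem1_app2}, \ref{lem1_app} together with the bookkeeping rule in Note \ref{note_delta_bound}, I obtain $a_1\doteq c_1+\kappa_n\delta+o(\delta)$. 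The same argument, with opposite sign on the $\delta$-correction in \eqref{eq_attractor1}, gives $a_2\doteq c_2+\kappa_n\delta+o(\delta)$.

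Finally, I combine these concentrations via Lemma \ref{lem1_app2} (which propagates $\doteq$ through products). Substituting into the quadratic expansion gives
\begin{equation*}
\frac{\|\boldsymbol{\theta}-\boldsymbol{\nu}_2\|^2}{n}
\;\doteq\;\frac{\|\boldsymbol{\theta}\|^2}{n}-2c_1^2\widetilde B_1^{\mathrm{lim}}-2c_2^2\widetilde B_2^{\mathrm{lim}}+c_1^2\widetilde B_1^{\mathrm{lim}}+c_2^2\widetilde B_2^{\mathrm{lim}}+\kappa_n\delta+o(\delta),
\end{equation*}
and the cross terms collapse to exactly $-c_1^2\widetilde B_1^{\mathrm{lim}}-c_2^2\widetilde B_2^{\mathrm{lim}}$, which is the definition of $\beta_n$ in \eqref{eq_beta2}. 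The step I expect to be the main obstacle is the rigorous accounting of the $\delta$-dependent constants: every application of Lemma \ref{lem1_app} introduces a prefactor depending on the concentrating value of the denominator, and one has to verify that the per-term $\kappa_n\delta+o(\delta)$ errors do not blow up when multiplied by $a_1^2$ or $a_2^2$ (which themselves carry $\delta$ errors). This amounts to checking that $c_1,c_2$ are uniformly bounded in terms of $\|\boldsymbol{\theta}\|^2/n$, so the resulting constants absorb cleanly into the generic symbol $\kappa_n$ per Note \ref{note_delta_bound}.
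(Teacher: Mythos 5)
Your proposal is correct and follows essentially the same route as the paper: expand $\|\boldsymbol{\theta}-\boldsymbol{\nu}_2\|^2/n$ over the two orthogonal indicator vectors, establish $a_1\doteq c_1+\kappa_n\delta+o(\delta)$ and $a_2\doteq c_2+\kappa_n\delta+o(\delta)$ via Lemmas \ref{prop_y1_y2} and \ref{prop_bias} (with the Gaussian-bias cancellation) and the product/reciprocal concentration lemmas, and then let the quadratic and cross terms collapse to $\beta_n$. Your extra care about the denominator being bounded away from zero and about absorbing the $\delta$-errors into $\kappa_n$ only makes explicit what the paper leaves implicit.
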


Using Lemma \ref{app_lemma_g} together with \eqref{eq5_thm1}, we have 
\begin{equation}\label{eq7_thm1}
  g\left(\frac{\left\Vert \mathbf{y} -  \boldsymbol{\nu}_{2} \right\Vert^2}{n}\right) \doteq g(\alpha_n + \sigma^2) + \kappa_n\delta + o(\delta).
\end{equation}
Using \eqref{eq5_thm1}, \eqref{eq6_thm1} and \eqref{eq7_thm1} together with Lemmas \ref{lem1_app2}, \ref{lem1_app}, and \ref{lem_app_conc_ineq3}, we obtain 
\begin{align} \nonumber
 \frac{\Vert\boldsymbol{\theta} - \hat{\boldsymbol{\theta}}_{JS_{2}} \Vert^2}{n} & \doteq  \frac{\sigma^4\left(\alpha_n +\sigma^2\right)}{\left(g\left(\alpha_n +\sigma^2\right)\right)^2}  +\sigma^2 +  \left(\frac{\sigma^2}{g\left(\alpha_n +\sigma^2\right)} \right)\\ \label{eq4_thm1}
 & \times\left(\beta^2  - \left(\alpha_n +\sigma^2\right)  - \sigma^2\right) + \kappa_n\delta + o(\delta)\\ \nonumber
 & = \left\{ \begin{array}{cc}
              \beta_n + \kappa_n\delta + o(\delta), & \textrm{if }\alpha_n < 0,\\
              \frac{\beta_n\sigma^2}{\alpha_n+\sigma^2} + \kappa_n\delta + o(\delta) & \textrm{otherwise}.\\
             \end{array}\right.
\end{align}
Therefore, for any $\epsilon > 0$,
\begin{align*}
 & \mathbb{P}\left(\left\vert \frac{\Vert\boldsymbol{\theta} - \hat{\boldsymbol{\theta}}_{JS_{2}} \Vert^2}{n} - \min\left(\beta_n,\frac{\beta_n\sigma^2}{\alpha_n+\sigma^2}\right) \right. \right.\\
 &\hspace{0.2 in}+ \kappa_n\delta + o(\delta) \bigg\vert \geq \epsilon \bigg) \leq Ke^{-\frac{nk \min(\epsilon^2,1)}{\Vert \bst\Vert^2/n}}.
\end{align*}
This proves \eqref{eq1_thm1_statement} and hence, the first part of the theorem.

 To prove the second part of the theorem, we use the following definition and result.
\begin{defi}\label{def_UI}
(Uniform Integrability \cite[p. 81]{wasserman}) A sequence $\{X_n\}_{n=1}^{\infty}$ is said to be uniformly integrable (UI) if
\begin{equation}\label{eq_UI}
 \lim_{K \to \infty}\left(\limsup_{n \to \infty} \mathbb{E}\left[\vert X_n \vert \mathbf{1}_{\{\vert X_n \vert \geq K\}} \right]\right) = 0.
\end{equation}
\end{defi}

\begin{fact}\label{fact_UI}\cite[Sec. 13.7]{williams}
Let  $\{X_n\}_{n=1}^{\infty}$ be a sequence in $\mathcal{L}^1$, equivalently $\mathbb{E}\vert X_n \vert < \infty$, $\forall n$. Also, let $X\in \mathcal{L}^1$. Then $X_n \overset{\mathcal{L}^1}{\longrightarrow}  X$, i.e., $\mathbb{E}(\vert X_n - X\vert) \to 0$, if and only if the following two conditions are satisfied:
\begin{enumerate}
 \item $X_n \overset{P}{\longrightarrow}  X$,
 \item The sequence $\{X_n\}_{n=1}^{\infty}$ is UI.
\end{enumerate}
\end{fact}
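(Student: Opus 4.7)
The plan is to prove both directions of the equivalence separately, following the classical measure-theoretic argument referenced in Williams.

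For the forward direction ($L^1$ convergence $\Rightarrow$ convergence in probability and UI): Convergence in probability is immediate from Markov's inequality, since $\mathbb{P}(|X_n - X| \geq \epsilon) \leq \mathbb{E}|X_n - X|/\epsilon \to 0$. For UI, I would exploit the bound $|X_n| \leq |X_n - X| + |X|$. Because $X \in \mathcal{L}^1$, dominated convergence gives $\mathbb{E}[|X| 1_{\{|X| \geq K\}}] \to 0$ as $K \to \infty$, so $\{X\}$ is UI. Given $\epsilon > 0$, I choose $N$ such that $\mathbb{E}|X_n - X| < \epsilon/2$ for $n > N$; the finite collection $\{X_1, \ldots, X_N\}$ is trivially UI, and for $n > N$ I bound
\begin{equation*}
\mathbb{E}[|X_n| 1_{\{|X_n| \geq K\}}] \leq \mathbb{E}|X_n - X| + \mathbb{E}[|X| 1_{\{|X_n| \geq K\}}] \leq \tfrac{\epsilon}{2} + \mathbb{E}[|X| 1_{\{|X_n| \geq K\}}].
\end{equation*}
The remaining term is controlled by uniform absolute continuity of $|X|$, combined with $\sup_{n > N} \mathbb{E}|X_n| \leq \mathbb{E}|X| + \epsilon/2$ and Markov's inequality applied to $\mathbb{P}(|X_n| \geq K)$.

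For the reverse direction (convergence in probability and UI $\Rightarrow \mathcal{L}^1$ convergence): First I would show $X \in \mathcal{L}^1$ by extracting a subsequence $X_{n_k} \to X$ almost surely (which exists by convergence in probability) and applying Fatou's lemma to $|X_{n_k}|$, using the uniform $L^1$-bound $\sup_n \mathbb{E}|X_n| < \infty$ guaranteed by UI. Once $X \in \mathcal{L}^1$, the family $\{X_n - X\}$ is itself UI. Fixing $\epsilon > 0$ and $K > 0$, I split
\begin{equation*}
\mathbb{E}|X_n - X| \leq \epsilon + K \cdot \mathbb{P}(|X_n - X| > \epsilon) + \mathbb{E}\bigl[|X_n - X| 1_{\{|X_n - X| > K\}}\bigr],
\end{equation*}
where the three pieces come from bounding $|X_n - X|$ by $\epsilon$, by $K$, and by itself on $\{|X_n - X| \leq \epsilon\}$, $\{\epsilon < |X_n - X| \leq K\}$, and $\{|X_n - X| > K\}$, respectively. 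UI of $\{X_n - X\}$ lets me choose $K$ large so the third term is below $\epsilon$ for all $n$, and then convergence in probability drives the middle term below $\epsilon$ for $n$ large. Hence $\limsup_n \mathbb{E}|X_n - X| \leq 3\epsilon$, and sending $\epsilon \to 0$ closes the argument.

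The main obstacle --- really the only non-routine step --- is verifying that $\{X_n - X\}$ is UI whenever $\{X_n\}$ is UI and $X \in \mathcal{L}^1$. This rests on the standard equivalent characterization of UI as uniform $L^1$-boundedness together with uniform absolute continuity of the integrals, both of which are preserved under finite sums via the pointwise inequality $|X_n - X| \leq |X_n| + |X|$. Every other step is bookkeeping in the style of the classical $\epsilon/3$ argument.
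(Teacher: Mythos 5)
Your proof is correct, but note that the paper does not prove this statement at all: it is quoted as a known result from Williams \cite[Sec.~13.7]{williams} and used as a black box. Your argument (Markov for convergence in probability, the $|X_n|\leq|X_n-X|+|X|$ bound plus uniform absolute continuity for UI in the forward direction, and the Fatou/subsequence step followed by the $\epsilon$--$K$ truncation split in the reverse direction) is essentially the standard textbook proof of this $\mathcal{L}^1$ convergence theorem, and the one non-routine step you flag --- that $\{X_n-X\}$ inherits UI from UI of $\{X_n\}$ and $X\in\mathcal{L}^1$ --- is handled correctly via the equivalent characterization of UI as uniform $\mathcal{L}^1$-boundedness together with uniform absolute continuity.
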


Now, consider the individual terms of the RHS of \eqref{eq3_thm1}. Using Lemmas \ref{lem1_app2}, \ref{lem1_app} and \ref{app_lemma_g}, we obtain
\begin{equation*}
 \frac{\sigma^4{\Vert\mathbf{y}-\boldsymbol{\nu}_{2} \Vert^2}/{n}}{\left(g\left({\Vert\mathbf{y}-\boldsymbol{\nu}_{2} \Vert^2}/{n}\right)\right)^2} \doteq \frac{\sigma^4\left(\alpha_n +\sigma^2\right)}{\left(g\left(\alpha_n +\sigma^2\right)\right)^2} + \kappa_n\delta + o(\delta),
\end{equation*}
and so, from Lemma  \ref{prop_as},
\begin{align}\nonumber 
 S_n & \vcentcolon=\frac{\sigma^4{\Vert\mathbf{y}-\boldsymbol{\nu}_{2} \Vert^2}/{n}}{\left(g\left({\Vert\mathbf{y}-\boldsymbol{\nu}_{2} \Vert^2}/{n}\right)\right)^2} \\ \label{eq_S_n}
 &\quad - \left[\frac{\sigma^4\left(\alpha_n +\sigma^2\right)}{\left(g\left(\alpha_n +\sigma^2\right)\right)^2} + \kappa_n\delta + o(\delta)\right] \overset{a.s.}{\longrightarrow} 0.
\end{align}
Similarly, we obtain
\begin{align}
\begin{split} \label{eq_thm1_part2}
 T_n  &\vcentcolon = \frac{\sigma^2{\left\Vert \boldsymbol{\theta} -  \boldsymbol{\nu}_{2} \right\Vert^2}/{n}}{g\left({\Vert\mathbf{y}-\boldsymbol{\nu}_{2} \Vert^2}/{n}\right)} \\ 
 & \quad -\left[\frac{\beta_n\sigma^2}{g(\alpha_n+\sigma^2)}+ \kappa_n\delta + o(\delta)\right] \overset{a.s.}{\longrightarrow} 0, \\  
  U_n  &\vcentcolon = \frac{\sigma^2{\left\Vert \mathbf{y} -  \boldsymbol{\nu}_{2} \right\Vert^2}/{n}}{g\left({\Vert\mathbf{y}-\boldsymbol{\nu}_{2} \Vert^2}/{n}\right)}\\
   & \quad - \left[\frac{\sigma^2(\alpha_n+\sigma^2)}{g(\alpha_n+\sigma^2)} + \kappa_n\delta + o(\delta)\right] \overset{a.s.}{\longrightarrow} 0,\\
   V_n  &\vcentcolon =\frac{\sigma^2{\left\Vert \mathbf{w} \right\Vert^2}/{n}}{g\left({\Vert\mathbf{y}-\boldsymbol{\nu}_{2} \Vert^2}/{n}\right)}\\
    & \quad - \left[\frac{\sigma^4}{g(\alpha_n+\sigma^2)}+ \kappa_n\delta + o(\delta)\right] \overset{a.s.}{\longrightarrow} 0.
    \end{split}
 \end{align}
Now, using \eqref{eq3_thm1} and \eqref{eq4_thm1}, we can write
\begin{align*} 
 &\frac{\Vert\boldsymbol{\theta} - \hat{\boldsymbol{\theta}}_{JS_{2}} \Vert^2}{n} - \left(\min\left(
              \beta_n , \frac{\beta_n\sigma^2}{\alpha_n+\sigma^2} \right) + \kappa_n\delta + o(\delta)\right) \\
              &= S_n + T_n - U_n - V_n + \left(\frac{\left\Vert \mathbf{w} \right\Vert^2}{n}- \sigma^2\right). 
\end{align*}
 Note from Jensen's inequality that $\vert \mathbb{E}[X_n] - \mathbb{E}X \vert \leq \mathbb{E}(\vert X_n - X\vert)$.  We therefore have
\begin{align}\nonumber
  &\left \vert\frac{1}{n}R( \boldsymbol{\theta}, \hat{\boldsymbol{\theta}}_{JS_{2}})- \left[\min\left(\beta_n,\frac{\beta_n\sigma^2}{\alpha_n+\sigma^2}\right) + \kappa_n\delta + o(\delta)\right]\right \vert \\ \nonumber
 &\leq  \mathbb{E}\left \vert\frac{\| \boldsymbol{\theta} - \hat{\boldsymbol{\theta}}_{JS_{2}} \|^2}{n}- \left[\min\left(\beta_n,\frac{\beta_n\sigma^2}{\alpha_n+\sigma^2}\right) + \kappa_n\delta + o(\delta)\right]\right \vert \\  \nonumber
 &= \mathbb{E}\left \vert S_n + T_n - U_n-  V_n +\frac{\| \mathbf{w} \|^2}{n} - \sigma^2 \right \vert \\ \label{thm1_eq1} 
 &\leq  \mathbb{E}\left \vert S_n  \right \vert + \mathbb{E}\left \vert T_n\right \vert - \mathbb{E}\left \vert U_n \right \vert - \mathbb{E}\left \vert V_n\right \vert + \mathbb{E}\left \vert \frac{\| \mathbf{w} \|^2}{n} - \sigma^2 \right \vert.
\end{align}

We first show that $\frac{\left\Vert \mathbf{w} \right\Vert^2}{n} \overset{\mathcal{L}^1}{\longrightarrow} \sigma^2$, i.e., 
\begin{equation}\label{eq_chi_ui}
 \lim_{n \to \infty} \mathbb{E}\left[\left\vert \frac{\| \mathbf{w} \|^2}{n} - \sigma^2 \right\vert \right] = 0.
\end{equation}
 This holds because
\begin{align*}
&  \mathbb{E}\left[\left\vert \frac{\| \mathbf{w} \|^2}{n} - \sigma^2 \right\vert \right] = \int_{0}^\infty \mathbb{P}\left( \left\vert \frac{\| \mathbf{w} \|^2}{n} - \sigma^2 \right\vert > x\right)dx \\
 &\overset{(i)}{\leq} \int_{0}^1 2e^{-nkx^2}dx + \int_{1}^\infty 2e^{-nkx}dx\\
 & \leq \int_{0}^{\infty}2e^{-nkx^2}dx + \int_{0}^\infty 2e^{-nkx}dx \\
 &  = \frac{2}{\sqrt{nk}}\int_{0}^{\infty}e^{-t^2}dt + \frac{2}{nk}\int_{0}^\infty e^{-t}dt  \overset{n\to \infty}{\longrightarrow}0,
\end{align*} 
where inequality $(i)$ is due to Lemma \ref{lem_app_conc_ineq3}.

Thus, from  \eqref{thm1_eq1}, to prove \eqref{eq2_thm1_statement}, it is sufficient to show that $\mathbb{E}\left \vert S_n  \right \vert$, $\mathbb{E}\left \vert T_n\right \vert$, $\mathbb{E}\left \vert U_n \right \vert$ and $\mathbb{E}\left \vert V_n\right \vert$ all converge to $0$ as $n \to \infty$.  From Fact \ref{fact_UI} and \eqref{eq_S_n}, \eqref{eq_thm1_part2}, this implies that we need to show that $\{S_n\}_{n=1}^{\infty}, \{T_n\}_{n=1}^{\infty}, \{U_n\}_{n=1}^{\infty}, \{V_n\}_{n=1}^{\infty}$ are UI.  Considering $S_n$, we have
\begin{equation*}
\frac{\sigma^4{\Vert\mathbf{y}-\boldsymbol{\nu}_{2} \Vert^2}/{n}}{\left(g\left({\Vert\mathbf{y}-\boldsymbol{\nu}_{2} \Vert^2}/{n}\right)\right)^2} \leq \sigma^2,~~~~~  \frac{\sigma^4\left(\alpha_n +\sigma^2\right)}{\left(g\left(\alpha_n +\sigma^2\right)\right)^2} \leq \sigma^2,~~~\forall n,
\end{equation*}
and since the sum of the terms in \eqref{eq_S_n} that involve $\delta$ have bounded absolute value for a chosen and fixed $\delta$ (see Note \ref{note_delta_bound}), there exists $M > 0$ such that $\forall n$, $\vert S_n \vert \leq 2\sigma^2 +M$. Hence, from Definition \ref{def_UI}, $\{S_n\}_{n=1}^{\infty}$ is UI. By a similar argument, so is $\{U_n\}_{n=1}^{\infty}$. Next, considering $V_n$, we have 
\begin{equation*}
 \frac{\sigma^2{\left\Vert \mathbf{w} \right\Vert^2}/{n}}{g\left({\Vert\mathbf{y}-\boldsymbol{\nu}_{2} \Vert^2}/{n}\right)} \leq \frac{\left\Vert \mathbf{w} \right\Vert^2}{n}, ~~~~~~ \frac{\sigma^4}{g(\alpha_n+\sigma^2)} \leq \sigma^2, ~~~~\forall n,
 \end{equation*}
and hence, $\vert V_n \vert \leq \frac{\left\Vert \mathbf{w} \right\Vert^2}{n} + \sigma^2 + M$, $\forall n$. Note from \eqref{eq_chi_ui} and Fact \ref{fact_UI}  that $\{\Vert \mathbf{w} \Vert^2/n\}_{n=1}^\infty$ is UI. To complete the proof, we use  the following result whose proof is provided in Appendix \ref{pf_lem_ui}.

\begin{lemma}\label{lem_ui}
 Let $\{Y_n \}_{n=1}^{\infty}$ be a UI sequence of positive-valued random variables, and let $\{X_n \}_{n=1}^{\infty}$ be a sequence of random variables such that $\vert X_n \vert \leq c Y_n + a$, $\forall n$, where $c$ and $a$ are positive constants. Then, $\{X_n \}_{n=1}^{\infty}$ is also UI.
\end{lemma}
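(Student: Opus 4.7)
The plan is to verify the UI condition \eqref{eq_UI} for $\{X_n\}$ by dominating the tail expectation $\mathbb{E}[|X_n|\mathbf{1}_{\{|X_n|\geq K\}}]$ by a corresponding quantity in $Y_n$, and then invoking the UI hypothesis on $\{Y_n\}$. The key observation is that $|X_n| \geq K$ forces $Y_n \geq (K-a)/c$, because $|X_n| \leq cY_n + a$. For any $K > a$ this yields the pointwise bound
\[|X_n|\,\mathbf{1}_{\{|X_n|\geq K\}} \;\leq\; (cY_n + a)\,\mathbf{1}_{\{Y_n \geq (K-a)/c\}},\]
so taking expectations splits the task into controlling $c\,\mathbb{E}[Y_n\,\mathbf{1}_{\{Y_n \geq (K-a)/c\}}]$ and $a\,\mathbb{P}(Y_n \geq (K-a)/c)$ separately.

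For the first piece I would apply the UI hypothesis on $\{Y_n\}$ directly: since $(K-a)/c \to \infty$ as $K \to \infty$, Definition \ref{def_UI} gives $\limsup_n \mathbb{E}[Y_n\,\mathbf{1}_{\{Y_n \geq (K-a)/c\}}] \to 0$. For the second piece I would use Markov's inequality after first recording the standard fact that every UI family is bounded in $\mathcal{L}^1$: splitting $\mathbb{E}[Y_n]$ at a threshold $K_0$ chosen so that $\sup_n \mathbb{E}[Y_n\,\mathbf{1}_{\{Y_n \geq K_0\}}] \leq 1$ gives $M := \sup_n \mathbb{E}[Y_n] \leq K_0 + 1 < \infty$. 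Hence $a\,\mathbb{P}(Y_n \geq (K-a)/c) \leq acM/(K-a) \to 0$ as $K \to \infty$.

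Combining the two estimates,
\[\limsup_{n\to\infty} \mathbb{E}\!\left[|X_n|\,\mathbf{1}_{\{|X_n|\geq K\}}\right] \;\leq\; c \limsup_{n\to\infty} \mathbb{E}\!\left[Y_n\,\mathbf{1}_{\{Y_n \geq (K-a)/c\}}\right] + \frac{acM}{K-a},\]
and letting $K \to \infty$ sends both terms on the right to zero, which is exactly \eqref{eq_UI} for $\{X_n\}$. The argument is essentially a monotonicity-plus-Markov calculation, and I do not anticipate any real obstacle; the one ingredient that could be overlooked is the $\mathcal{L}^1$-boundedness of UI sequences, which is needed to dispose of the constant term $a$ in the upper bound.
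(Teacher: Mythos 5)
Your proposal is correct and follows essentially the same route as the paper: bound $|X_n|\mathbf{1}_{\{|X_n|\geq K\}}$ by $cY_n\mathbf{1}_{\{Y_n\geq (K-a)/c\}} + a\mathbf{1}_{\{Y_n\geq (K-a)/c\}}$, handle the first term by the UI hypothesis on $\{Y_n\}$, and let the second vanish as $K\to\infty$. The only difference is that you justify the probability term explicitly via Markov's inequality and the $\mathcal{L}^1$-boundedness of UI families, a detail the paper leaves implicit.
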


\noindent Hence, $\{V_n \}_{n=1}^{\infty}$ is UI. Finally, considering $T_n$ in \eqref{eq_thm1_part2}, we see that
 \begin{align*}
  \frac{\sigma^2\frac{\Vert\boldsymbol{\theta}-\boldsymbol{\nu}_{2} \Vert^2}{n}}{g\left(\frac{\Vert\mathbf{y}-\boldsymbol{\nu}_{2} \Vert^2}{n}\right)} = \frac{\sigma^2\frac{\Vert\mathbf{y}-\boldsymbol{\nu}_{2} - \mathbf{w} \Vert^2}{n}}{g\left(\frac{\Vert\mathbf{y}-\boldsymbol{\nu}_{2} \Vert^2}{n}\right)}  &\leq \frac{2\sigma^2\left(\frac{\Vert\mathbf{y}-\boldsymbol{\nu}_{2}\Vert^2}{n} + \frac{\Vert \mathbf{w} \Vert^2}{n}\right)}{g\left(\frac{\Vert\mathbf{y}-\boldsymbol{\nu}_{2} \Vert^2}{n}\right)}\\
  &\leq 2\left(\sigma^2 + \frac{\Vert \mathbf{w} \Vert^2}{n} \right), \\
  \frac{\beta_n\sigma^2}{g(\alpha_n+\sigma^2)} & \leq \beta_n < \infty.
 \end{align*}
 Note that the last inequality is due to the assumption that $\limsup_{n \to \infty}\Vert \bst \Vert^2/n < \infty$. Therefore, $\vert T_n \vert \leq 2{\left\Vert \mathbf{w} \right\Vert^2}/{n} + 2\sigma^2 + M$, $\forall n$, where $M$ is some finite constant.  Thus, by Lemma  \ref{lem_ui}, $T_n$ is UI. Therefore, each of the terms of the RHS of \eqref{thm1_eq1} goes to $0$ as $n\to \infty$, and this completes the proof of the theorem.

\subsection{Proof of Theorem \ref{thm2} } \label{subsec:thm2_proof}

Let 
\begin{align*}
 \mathcal{E}_n &\vcentcolon= \left\{ \mathbf{y} \in \mathbb{R}^n  \ : \  \Vert \bst - \bsth_{JS_1} \Vert^2 <  \Vert \bst - \bsth_{JS_2} \Vert^2 \right\}, \\
  \Delta_n &\vcentcolon= \frac{\norm{\boldsymbol{\theta} - \hat{\boldsymbol{\theta}}_{JS_{2}}}^2}{n} -\frac{\norm{\boldsymbol{\theta} - \hat{\boldsymbol{\theta}}_{JS_{1}}}^2}{n}.
\end{align*}
Without loss of generality, for a given $\epsilon > 0$, we can assume that $\vert \Delta_n \vert > \epsilon$ because if not, it is clear that 
\[ \frac{\norm{\boldsymbol{\theta} - \hat{\boldsymbol{\theta}}_{JS_{H}}}^2}{n} -  \min\left(\frac{\norm{\boldsymbol{\theta} - \hat{\boldsymbol{\theta}}_{JS_{1}}}^2}{n}, \frac{\norm{\boldsymbol{\theta} - \hat{\boldsymbol{\theta}}_{JS_{2}}}^2}{n}\right) \leq \epsilon. \]

From \eqref{eq:gyybar_conv} and Lemma \ref{lem1_app}, we obtain the following concentration inequality for the loss estimate in \eqref{eq_est_1_partition_risk}:
\ben 
 \label{eq_1_est1_part_risk}
 \frac{1}{n}\hat{L}(\boldsymbol{\theta},  \hat{\boldsymbol{\theta}}_{JS_{1}}) \doteq \frac{\rho_n\sigma^2}{\rho_n+\sigma^2}. 
 \een
Using this together with Corollary \ref{cor_pp_JS_Lindley}, we obtain
 \be 
 \label{eq_est1_part_risk}
 \frac{1}{n}\hat{L}(\boldsymbol{\theta},  \hat{\boldsymbol{\theta}}_{JS_{1}}) \doteq \frac{1}{n}\norm{\boldsymbol{\theta} - \hat{\boldsymbol{\theta}}_{JS_{1}}}^2. 
 \ee
Following steps similar to those in the proof of Lemma \ref{lem3}, we obtain the following  for the loss estimate in \eqref{eq_est_2_partition_risk}:
\begin{equation}\label{eq_2_est_2_partition_risk1}
 \frac{1}{n}\hat{L}(\boldsymbol{\theta},  \hat{\boldsymbol{\theta}}_{JS_{2}}) \doteq \frac{\beta_n\sigma^2}{g(\alpha_n+\sigma^2)} + \kappa_n\delta + o(\delta).
\end{equation}
Combining this with Theorem \ref{thm1}, we have
\begin{equation}\label{eq_est_2_partition_risk1}
 \frac{1}{n}\hat{L}(\boldsymbol{\theta},  \hat{\boldsymbol{\theta}}_{JS_{2}}) \doteq \frac{1}{n}\norm{\boldsymbol{\theta} - \hat{\boldsymbol{\theta}}_{JS_{2}}}^2.
\end{equation}

\noindent Then, from \eqref{eq_est1_part_risk}, \eqref{eq_est_2_partition_risk1}, and Lemma \ref{lem_4}, we have $
\frac{1}{n}\hat{L}(\boldsymbol{\theta},  \hat{\boldsymbol{\theta}}_{JS_{1}}) - \frac{1}{n}\hat{L}(\boldsymbol{\theta},  \hat{\boldsymbol{\theta}}_{JS_{2}}) \doteq -\Delta_n
$. We therefore have, for any $\epsilon > 0$,
\begin{align}\nonumber 
 \mathbb{P}\left(\frac{1}{n}\hat{L}(\boldsymbol{\theta},  \hat{\boldsymbol{\theta}}_{JS_{1}}) - \frac{1}{n}\hat{L}(\boldsymbol{\theta},  \hat{\boldsymbol{\theta}}_{JS_{2}}) -(-\Delta_n) \geq \epsilon \right) \\ \label{eq_3_thm2}
 \leq Ke^{-\frac{nk\min(\epsilon^2,1)}{\max(\Vert \bst\Vert^2/n,1)}}
\end{align}
for some positive constants $k$ and $K$. Let $\mathbb{P}\left( \gamma_\by =0, \Delta_n > \e \right)$ denote the probability that $\gamma_\by = 0$ and $\Delta_n > \e$ for a chosen $\e > 0$. Therefore,
\begin{align}\nonumber 
 &\mathbb{P}\left( \gamma_\by =0, \Delta_n >\e   \right) \\ \nonumber
&=  \mathbb{P}\left(  \frac{1}{n}\hat{L}(\boldsymbol{\theta},  \hat{\boldsymbol{\theta}}_{JS_{1}}) > \frac{1}{n}\hat{L}(\boldsymbol{\theta},  \hat{\boldsymbol{\theta}}_{JS_{2}}) \ , \ \Delta_n >\e   \right) \\ \nonumber 
 & \leq \mathbb{P}\left( \frac{1}{n}\hat{L}(\boldsymbol{\theta},  \hat{\boldsymbol{\theta}}_{JS_{1}}) +  \Delta_n > \frac{1}{n}\hat{L}(\boldsymbol{\theta},  \hat{\boldsymbol{\theta}}_{JS_{2}}) + \e  \right) \\ \label{eq1_thm3}
  & \leq Ke^{-\frac{nk\min(\epsilon^2,1)}{\max(\Vert \bst\Vert^2/n,1)}},
\end{align}
where the last inequality is obtained from  \eqref{eq_3_thm2}. So for any $\epsilon > 0$, we have
\begin{align*}
 &\mathbb{P}\left( \left. \frac{\norm{\boldsymbol{\theta} - \hat{\boldsymbol{\theta}}_{JS_{H}}}^2}{n} - \frac{\norm{\boldsymbol{\theta} - \hat{\boldsymbol{\theta}}_{JS_{1}}}^2}{n}\geq  \epsilon ~ \right \vert ~ \by \in  \mathcal{E}_n\right) \\
 & \leq \mathbb{P}\left( \gamma_\by =0 , \Delta_n >\e \right) \leq Ke^{-\frac{nk\min(\epsilon^2,1)}{\max(\Vert \bst\Vert^2/n,1)}}.
 \end{align*}
In a similar manner, we obtain for any $\epsilon > 0$,
\begin{align*}
 &  \mathbb{P}\left( \left. \frac{\norm{\boldsymbol{\theta} - \hat{\boldsymbol{\theta}}_{JS_{H}}}^2}{n} - \frac{\norm{\boldsymbol{\theta} - \hat{\boldsymbol{\theta}}_{JS_{2}}}^2}{n}\geq  \epsilon ~ \right \vert ~ \by \in \mathcal{E}_n^c\right) \\
 &  \leq \mathbb{P}\left( \gamma_\by =1 , -\Delta_n >\e \right) \leq Ke^{-\frac{nk\min(\epsilon^2,1)}{\max(\Vert \bst\Vert^2/n,1)}}.
 \end{align*}
Therefore, we arrive at
\begin{align*}
  & \mathbb{P}\left( \frac{\norm{\boldsymbol{\theta} - \hat{\boldsymbol{\theta}}_{JS_{H}}}^2}{n} - \min_{i=1,2}\left(\frac{\norm{\boldsymbol{\theta} - \hat{\boldsymbol{\theta}}_{JS_{i}}}^2}{n}\right)\geq  \epsilon \right) \\
  &  \leq Ke^{-\frac{nk\min(\epsilon^2,1)}{\max(\Vert \bst\Vert^2/n,1)}}.
 \end{align*}
This proves the first part of the theorem.

For the second part, fix $\epsilon > 0$.  First suppose that $\hat{\boldsymbol{\theta}}_{JS_{1}}$ has lower risk. 
For a given  $\bst$, let 
\begin{align*} 
\mathcal{A}_{JS_{1}}(\bst) & := \{ \by \in \mathbb{R}^n: \gamma_\by =1  \},  \\
\mathcal{A}_{JS_{2a}}(\bst) & := \left \{ \by \in \mathbb{R}^n: \gamma_\by =0,  \ \text{ and} \ 
\Delta_n \leq \e \right\}, 
\\
\mathcal{A}_{JS_{2b}}(\bst) &:=\mathbb{R}^n \backslash (\mathcal{A}_{JS_{1}}(\bst) \cup  \mathcal{A}_{JS_{2a}}(\bst)) \\
&= \left \{ \by \in \mathbb{R}^n: \gamma_\by =0,  \ \text{and} \ \Delta_n > \e   \right \}.
\end{align*} Denoting $\frac{1}{\sqrt{2\pi \sigma^2}}\exp\left(-\frac{\Vert \by - \bst \Vert^2}{2\sigma^2}\right)$ by $\phi(\by;\bst)$, we have
\begin{align}\nonumber
 R& (\boldsymbol{\theta}, \hat{\bst}_{JS_{H}})  = \int_{\mathcal{A}_{JS_{1}}(\bst)} \hspace{-1pt} \phi(\by; \bst) \norm{\bsth_{JS_{1}} - \bst}^2 \ d \by  \\ \nonumber
  & \quad +  \int_{ \mathcal{A}_{JS_{2a}}(\bst) \,  \cup \, \mathcal{A}_{JS_{2b}}(\bst) }   \hspace{-1pt}  \phi(\by; \bst) \norm{\bsth_{JS_{2}} - \bst}^2 \ d\by \\ \nonumber 
  & \overset{(a)}{\leq}  \int_{\mathcal{A}_{JS_{1}}(\bst)} \hspace{-1pt} \phi(\by; \bst) \norm{\bsth_{JS_{1}} - \bst}^2 \ d \by \\ \nonumber
  & \quad + \int_{\mathcal{A}_{JS_{2a}} (\bst)} \hspace{-1pt} \phi(\by; \bst) \, (\norm{\bsth_{JS_{1}} - \bst}^2 + n \e) \ d \by \\ \nonumber 
  & \quad + \int_{\mathcal{A}_{JS_{2b}}(\bst)} \hspace{-1pt} \phi(\by; \bst) \norm{\bsth_{JS_{2}} - \bst}^2 \ d \by   \\ \nonumber
  & \overset{(b)}{\leq} R(\boldsymbol{\theta}, \hat{\bst}_{JS_{1}}) + n\e+ \left(\mathbb{P}\left( \gamma_\by =0, \Delta_n >\e \right) \right)^{1/2}\\ \nonumber
  & \quad \times \left(\int_{\mathcal{A}_{JS_{2b}}(\bst)}   \hspace{-5pt}  \phi(\by; \bst) \norm{\bsth_{JS_{2}} - \bst}^4 \ d\by \right)^{1/2} \nonumber \\ \nonumber 
  & \overset{(c)}{\leq} R(\boldsymbol{\theta}, \hat{\bst}_{JS_{1}}) + n\e \\ \label{eq_5_thm2}
  & \quad + Ke^{-\frac{nk\min(\epsilon^2,1)}{\max(\Vert \bst\Vert^2/n,1)}} \left(\mathbb{E} \norm{\bsth_{JS_{2}} - \bst}^4  \right)^{1/2}
  \end{align}
where step $(a)$ uses the definition of $\mathcal{A}_{JS_{2a}}$, in step $(b)$ the last term  is obtained using  the Cauchy-Schwarz inequality on the product of the functions  $\sqrt{\phi(\by; \bst)}$, and $\sqrt{\phi(\by; \bst)} \norm{\bsth_{JS_{2}} - \bst}^2$. Step $(c)$ is from \eqref{eq1_thm3}.

Similarly, when $\hat{\boldsymbol{\theta}}_{JS_{2}}$ has lower risk, we get
\begin{align}\nonumber
 R( \boldsymbol{\theta}, \hat{\boldsymbol{\theta}}_{JS_{H}}) & \leq  R( \boldsymbol{\theta}, \hat{\boldsymbol{\theta}}_{JS_{2}})  + n\e \\ \label{eq_6_thm2}
 &+Ke^{-\frac{nk\min(\epsilon^2,1)}{\max(\Vert \bst\Vert^2/n,1)}} \left(\mathbb{E} \norm{\bsth_{JS_{1}} - \bst}^4  \right)^{1/2}.
\end{align}
Hence, from \eqref{eq_5_thm2}-\eqref{eq_6_thm2}, we obtain
\begin{align*}
  &\frac{1}{n}R( \boldsymbol{\theta}, \hat{\boldsymbol{\theta}}_{JS_{H}}) \leq  \frac{1}{n}\left[\min_{i=1,2}\left(R( \boldsymbol{\theta}, \hat{\boldsymbol{\theta}}_{JS_{i}})\right) + n\e \right.\\
 & +Ke^{-\frac{nk\min(\epsilon^2,1)}{\max(\Vert \bst\Vert^2/n,1)}} \max_{i=1,2}\left(\left(\mathbb{E} \norm{\bsth_{JS_{i}} - \bst}^4  \right)^{1/2}\right) \bigg].
\end{align*}
Now, noting that by assumption, $\limsup_{n \to \infty}\left(\mathbb{E} \norm{\bsth_{JS_{i}} - \bst}^4  \right)^{1/2}/n$ is finite, we get 
\begin{align*}
 \limsup_{n \to \infty}\frac{1}{n}\left[R( \boldsymbol{\theta}, \hat{\boldsymbol{\theta}}_{JS_{H}}) - \min_{i=1,2}\left(R( \boldsymbol{\theta}, \hat{\boldsymbol{\theta}}_{JS_{i}})\right) - \e \right] &\leq 0.
\end{align*}
Since this is true for every $\epsilon > 0$, we therefore have
\begin{equation}\label{eq5_thm2}
 \limsup_{n \to \infty}\frac{1}{n}\left[R( \boldsymbol{\theta}, \hat{\boldsymbol{\theta}}_{JS_{H}})- \min_{i=1,2}\left(R( \boldsymbol{\theta}, \hat{\boldsymbol{\theta}}_{JS_{i}})\right)\right] \leq 0.
\end{equation}
\noindent This completes the proof of the theorem.
\begin{note}
 Note that in the best case scenario, $\Vert \bst -\bsth_{JS_{H}} \Vert^2 = \min \left(\Vert \bst -\bsth_{JS_{1}} \Vert^2, \Vert \bst -\bsth_{JS_{2}} \Vert^2 \right)$, which occurs when for each realization of $\by$, the hybrid estimator picks the better of the two rival estimators $\bsth_{JS_{1}}$ and $\bsth_{JS_{2}}$. In this case, the inequality in \eqref{eq5_thm2} is strict, provided that there are realizations of $\by$ with non-zero probability measure for which one estimator is strictly better than the other.
\end{note}

\subsection{Proof of Theorem \ref{thm3} } \label{subsec:thm3_proof}

The proof is similar to that of Theorem \ref{thm1}, so we only provide a sketch. Note that for $a_i,b_i$, real-valued and finite, $i = 1, \cdots,n$, with $a_i < b_i$,
\begin{align*}
 \frac{1}{n}\sum_{i=1}^n \mathbb{E}\left[\mathsf{1}_{\left\{a_i < w_i \leq b_i\right\}} \right] = \frac{1}{n}\sum_{i=1}^n\left[Q\left(\frac{a_i}{\sigma}\right) -Q\left(\frac{b_i}{\sigma}\right)\right], \\
 \frac{1}{n}\sum_{i=1}^n  \mathbb{E}\left[w_i\mathsf{1}_{\left\{a_i < w_i \leq b_i\right\}}\right] = \frac{\sigma}{n\sqrt{2\pi}}\sum_{i=1}^n\left(e^{-\frac{a_i^2}{2\sigma^2}} - e^{-\frac{b_i^2}{2\sigma^2}}\right).
\end{align*}
Since $\mathsf{1}_{\left\{a_i < w_i \leq b_i\right\}} \in [0,1]$, it follows that $w_i\mathsf{1}_{\left\{a_i < w_i \leq b_i\right\}} \in [m_i,n_i]$ where $m_i = \min(0,a_i)$, $n_i = \max(0,b_i)$. So, from Hoeffding's inequality, we obtain
\begin{align*}
&\mathbb{P}\left(\frac{1}{n}\left\vert\sum_{i=1}^n \mathsf{1}_{\left\{a_i < w_i \leq b_i\right\}} -\sum_{i=1}^n\left[Q\left(\frac{a_i}{\sigma}\right) -Q\left(\frac{b_i}{\sigma}\right)\right]\right \vert \geq \epsilon\right)\\ & \leq 2e^{- 2n\epsilon^2}, \\
 & \mathbb{P}\left(\frac{1}{n}\left\vert\sum_{i=1}^n w_i\mathsf{1}_{\left\{a_i < w_i \leq b_i\right\}} -\frac{\sigma}{\sqrt{2\pi}}\sum_{i=1}^n\left(e^{-\frac{a_i^2}{2\sigma^2}} - e^{-\frac{b_i^2}{2\sigma^2}}\right)\right \vert \right. \\ 
 & \hspace{0.2 in } \geq \epsilon \bigg) \leq 2e^{- \frac{2n\epsilon^2}{\sum_{i=1}^n\left(n_i-m_i\right)^2}} .
\end{align*}
Subsequently, the steps of Lemma \ref{lem3} are used to obtain 
\begin{align} \nonumber
  &\frac{1}{n}\| \mathbf{y}- \boldsymbol{\nu}_{\mathbf{y}_{L}} \|^2 \\ \nonumber 
  & \doteq \frac{\Vert \boldsymbol{\theta} \Vert^2}{n} - \sum_{j=0}^L\frac{ c_j^2}{n}\sum_{i=1}^n\left[Q\left(\frac{\mu_{j}-\theta_i}{\sigma}\right) -Q\left(\frac{\mu_{j-1}-\theta_i}{\sigma}\right)\right] \\ \nonumber
 &\quad- \left(\frac{2}{n}\right)\left(\frac{\sigma}{\sqrt{2\pi}} \right)\sum_{j=1}^Lc_j\sum_{i=1}^n\left[e^{-\frac{\left(\mu_{j}-\theta_i\right)^2}{2\sigma^2}}-e^{-\frac{\left(\mu_{j-1}-\theta_i\right)^2}{2\sigma^2}}\right]\\ \label{eq_thm3}
 &\quad +\kappa_n\delta + o(\delta).
\end{align}
Finally,  employing the steps of Lemma \ref{lem_theta}, we get 
\begin{align*}
 & \frac{1}{n}\left\Vert \boldsymbol{\theta}- \boldsymbol{\nu}_{\mathbf{y}_{L}} \right\Vert^2 \\
& \doteq  \frac{\Vert \boldsymbol{\theta} \Vert^2}{n}  - \sum_{j=0}^L\frac{c_j^2}{n}\sum_{i=1}^n\left[Q\left(\frac{\mu_{j}-\theta_i}{\sigma}\right) -Q\left(\frac{\mu_{j-1}-\theta_i}{\sigma}\right)\right] \\
& \quad + \kappa_n \delta + o(\delta).
\end{align*}
The subsequent steps of the proof are along the lines of that of Theorem \ref{thm1}. 

\section{Concluding remarks}\label{sec_conc}

In this paper, we presented a class of shrinkage estimators that take advantage of the large dimensionality  to infer the clustering structure of the parameter values from the data. This structure is then used to construct an  attracting vector for the shrinkage estimator. A good cluster-based attracting vector enables significant risk reduction over the ML-estimator even when $\bst$ is composed of several inhomogeneous quantities.

We obtained concentration bounds for the squared-error loss of the constructed estimators and convergence results for the risk.  The estimators have significantly smaller risks than the regular JS-estimator for a wide range of $\boldsymbol{\theta} \in \mathbb{R}^n$, even though they do not dominate the regular (positive-part) JS-estimator for finite $n$. 

An important next step is to test the performance of the proposed estimators on real data sets.  It would be interesting to  adapt these estimators and analyze their risks when the sample values are bounded by a known value, i.e., when $\vert \theta_i \vert \leq \tau$, $\forall i = 1,\cdots,n$, with $\tau$ known. Another open question is how one should decide the maximum number of clusters to be considered for the hybrid estimator.

An interesting direction for future research is to study confidence sets centered on the estimators in this paper, and compare them to  confidence sets  centered on the positive-part JS-estimator, which were studied in \cite{hwangC82,samworth05small}.

The James-Stein estimator for colored Gaussian noise, i.e., for $\mathbf{w} \sim \mathcal{N}\left(\mathbf{0},\boldsymbol{\varSigma} \right)$ with $\boldsymbol{\varSigma}$ known, has been studied in \cite{bock}, and  variants have been proposed in \cite{manton}, \cite{eldar1}. It would be interesting to extend the ideas in this paper to the case of colored Gaussian noise, and to noise that has a general sub-Gaussian distribution.  Yet another research direction is to construct multi-dimensional target subspaces from the data that are more general than the cluster-based subspaces proposed here. The goal is to obtain greater risk savings for a wider range of $\boldsymbol{\theta} \in \mathbb{R}^n$, at the cost of having a more complex attractor. 


\appendix

\section{Proofs of General Lemmas}\label{sec_appendix}

\subsection{Proof of Lemma \ref{lem_app_conc_ineq1}}\label{pf_lem_app_conc_ineq1}
 Note that $\mathbb{E}\left[ w_i\mathsf{1}_{\left\{w_i > a_i\right\}}\right] = \frac{\sigma}{\sqrt{2\pi}}e^{-\frac{a_i^2}{2\sigma^2}}$. So, with
 \begin{equation*}
 X \vcentcolon= \sum_{i=1}^n w_i\mathsf{1}_{\left\{w_i > a_i\right\}} -\frac{\sigma}{\sqrt{2\pi}}\sum_{i=1}^ne^{-\frac{a_i^2}{2\sigma^2}},
 \end{equation*}
 we have $\mathbb{E}X =0$. Let $m_i \vcentcolon = \frac{\sigma}{\sqrt{2\pi}}e^{-\frac{a_i^2}{2\sigma^2}}$, and consider the moment generating function (MGF) of $X$. We have
 \begin{equation}
 \begin{split}
 &  \mathbb{E}\left[e^{\lambda X}\right] = \prod_{i=1}^n\frac{e^{-\lambda m_i}}{\sqrt{2\pi\sigma^2}}\int_{-\infty}^{\infty}e^{\left(\lambda w_i\mathsf{1}_{\left\{w_i > a_i\right\}}\right)}e^{-\frac{w_i^2}{2\sigma^2}}dw_i \\
 & = \prod_{i=1}^n\frac{e^{-\lambda m_i}}{\sqrt{2\pi\sigma^2}}\left[\int_{a_i}^{\infty}e^{\lambda w_i}e^{-\frac{w_i^2}{2\sigma^2}}dw_i + \int_{-\infty}^{a_i}e^{-\frac{w_i^2}{2\sigma^2}}dw_i\right] \\
  & = \prod_{i=1}^ne^{-\lambda m_i}\left[\frac{1}{\sqrt{2\pi\sigma^2}}\int_{a_i}^{\infty}e^{\lambda w_i}e^{-\frac{w_i^2}{2\sigma^2}}dw_i + 1- Q\left(\frac{a_i}{\sigma}\right)\right] \\
  &  = \prod_{i=1}^ne^{-\lambda m_i}\left[e^{\frac{\lambda^2\sigma^2}{2}}Q\left(\frac{a_i}{\sigma}-\lambda\sigma\right) +1- Q\left(\frac{a_i}{\sigma}\right)\right]. 
  \end{split}
    \label{eq_lem_con_ineq0}
 \end{equation}
 Now, for any positive real number $b$, consider the function
 \begin{equation*}
  f(x;b) = e^{-\frac{b}{\sqrt{2\pi}}e^{-\frac{x^2}{2}}}\left[e^{\frac{b^2}{2}}Q\left(x-b\right) +1- Q\left(x\right)\right].
 \end{equation*}
 Note that the RHS of \eqref{eq_lem_con_ineq0} can be written as $\prod_{i=1}^nf( \frac{a_i}{\sigma}; \lambda \sigma)$. We will bound the MGF in \eqref{eq_lem_con_ineq0} by bounding $f(x;b)$. 
 
Clearly, $f(-\infty;b) = e^{\frac{b^2}{2}}$, and since $b > 0$, we have for $x \leq 0$, 
\begin{align*}
&  f(x;b) <  e^{-\frac{b}{\sqrt{2\pi}}e^{-\frac{x^2}{2}}}\left[e^{\frac{b^2}{2}}Q\left(x-b\right) +e^{\frac{b^2}{2}}\left(1- Q\left(x\right)\right)\right]\\
 & = \left(e^{-\frac{b}{\sqrt{2\pi}}e^{-\frac{x^2}{2}}}\right)\left(e^{\frac{b^2}{2}}\right)\left[Q\left(x-b\right) +1- Q\left(x\right)\right] \\
 & = \left(e^{-\frac{b}{\sqrt{2\pi}}e^{-\frac{x^2}{2}}}\right)\left(e^{\frac{b^2}{2}}\right)\left(1 + \int_{x-b}^{x}\frac{1}{\sqrt{2\pi}}e^{-\frac{u^2}{2}}du\right)\\
 &\overset{(i)}{=} \left(e^{-\frac{b}{\sqrt{2\pi}}e^{-\frac{x^2}{2}}}\right)\left(e^{\frac{b^2}{2}}\right)\left(1 + \frac{b}{\sqrt{2\pi}}e^{-\frac{c^2}{2}}\right) \\
 & \overset{(j)}{ \leq} \left(e^{-\frac{b}{\sqrt{2\pi}}e^{-\frac{x^2}{2}}}\right)\left(e^{\frac{b^2}{2}}\right)\left(e^{\frac{b}{\sqrt{2\pi}}e^{-\frac{c^2}{2}}}\right) \overset{(k)}{ <}e^{\frac{b^2}{2}}
\end{align*}
where $(i)$ is from the first mean value theorem for integrals for some $c \in (x-b,x)$, $(j)$ is because $e^x \geq 1 + x$ for $x \geq 0$, and $(k)$ is because for $x \leq 0$, $e^{-x^2} > e^{-(x-b)^2}$ for $b>0$. Therefore,
\begin{equation}\label{eq_lem_con_ineq1}
 \sup_{x \in (-\infty,0]}f(x;b) = e^{\frac{b^2}{2}}.
\end{equation}
Now, for $x \geq 0$, consider 
\begin{align*}
h(x) & \vcentcolon= \frac{f(-x;b) - f(x;b)}{e^{-\frac{b}{\sqrt{2\pi}}e^{-\frac{x^2}{2}}}} \\
 &= e^{\frac{b^2}{2}}\left[Q\left(-x-b\right) -Q\left(x-b\right) \right] + Q\left(x\right) - Q\left(-x\right).
\end{align*}
We have $h(0) =0$ and 
\begin{align*}
 \sqrt{2\pi}\frac{dh(x)}{dx} &= e^{\frac{b^2}{2}}\left[e^{-\frac{(x+b)^2}{2}} + e^{-\frac{(x-b)^2}{2}}\right] - 2e^{-\frac{x^2}{2}}\\
 & = e^{\frac{-x^2}{2}}\left[e^{-bx} + e^{bx}\right] - 2e^{-\frac{x^2}{2}} \\
 & =  e^{\frac{-x^2}{2}}\left[e^{-bx} + e^{bx}-2\right] \\
 & = 2e^{\frac{-x^2}{2}}\left[\cosh (bx) -1\right] \geq 0 
\end{align*}
because $\cosh(bx) \geq 1$. This establishes that $h(x)$ is monotone non-decreasing in $[0,\infty)$ with $h(0) = 0$, and hence, for $x \in [0,\infty)$, 
\begin{equation}\label{eq_lem_con_ineq2}
 h(x) \geq 0 \Rightarrow  f(-x;b) \geq f(x;b).
\end{equation}
Finally, from \eqref{eq_lem_con_ineq1} and \eqref{eq_lem_con_ineq2}, it follows that 
\begin{equation}\label{eq_lem_con_ineq3}
 \sup_{x \in (-\infty,\infty)}f(x;b) = e^{\frac{b^2}{2}}.
\end{equation}
\noindent Using \eqref{eq_lem_con_ineq3} in \eqref{eq_lem_con_ineq0}, we obtain 
$ \mathbb{E}\left[e^{\lambda X}\right] \leq e^{\frac{n\lambda^2\sigma^2}{2}}$.  Hence, applying the Chernoff trick, we have for $\lambda >0$:
 \begin{equation*}
 \begin{split}
 \mathbb{P}\left( X \geq \epsilon  \right) & = \mathbb{P}\left( e^{\lambda X}  \geq e^{\lambda \epsilon}  \right)  \leq \frac{ \mathbb{E}\left[e^{\lambda X}\right]}{ e^{\lambda \epsilon}}  \leq e^{-\left(\lambda \epsilon -  \frac{n\lambda^2\sigma^2}{2} \right)}.
 \end{split}
 \end{equation*}
Choosing $\lambda = \frac{\epsilon}{n\sigma^2}$ which minimizes $ e^{-\left(\lambda \epsilon -  \frac{n\lambda^2\sigma^2}{2} \right)}$, we get $\mathbb{P}\left( X \geq \epsilon  \right) \leq e^{-\frac{\epsilon^2}{2n\sigma^2}}$ and so, 
\begin{align} \nonumber 
& \mathbb{P}\left( \frac{X}{n} \geq \epsilon  \right) \\ \nonumber
&= \mathbb{P}\left(\frac{1}{n}\left(\sum_{i=1}^n w_i\mathsf{1}_{\left\{w_i > a_i\right\}} -\frac{\sigma}{\sqrt{2\pi}}\sum_{i=1}^ne^{-\frac{a_i^2}{2\sigma^2}}\right )\geq \epsilon\right)\\ \label{eq_app_upp1}
& \leq e^{-\frac{n\epsilon^2}{2\sigma^2}}.
\end{align}
To obtain the lower tail inequality, we  use  the following result:
\begin{fact}
 \cite[Thm. 3.7]{chung}. For independent random variables $X_i$ satisfying $X_i \geq -M$, for $1 \leq i \leq n$, we have for any $\epsilon > 0$,
\begin{equation*}
\mathbb{P}\left(\sum_{i=1}^nX_i - \sum_{i=1}^n\mathbb{E}[X_i] \leq -\epsilon \right)  \leq e^{- \frac{\epsilon^2}{2\left(\sum_{i=1}^n\mathbb{E}\left[X_i^2 \right] + \frac{M\epsilon}{3}\right)}  }.
\end{equation*}
\end{fact}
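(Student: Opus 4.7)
The plan is to derive this as a one-sided Bernstein inequality by combining a Chernoff bound with an MGF estimate that exploits the one-sided boundedness $X_i \ge -M$, and then converting Bennett's form to the stated Bernstein form via a standard analytic inequality.

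First, I would set up the Chernoff bound. Since the event is a lower deviation of $\sum X_i$, I would consider $-X_i$, which satisfies $-X_i \le M$. For $\lambda > 0$,
\[
\mathbb{P}\Bigl(\sum_i (X_i - \mathbb{E}X_i) \le -\epsilon\Bigr) \le e^{-\lambda\epsilon}\, e^{\lambda\sum_i \mathbb{E}X_i}\prod_i \mathbb{E}\bigl[e^{-\lambda X_i}\bigr].
\]

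The heart of the argument is bounding each factor $\mathbb{E}[e^{-\lambda X_i}]$ using only the lower bound $X_i \ge -M$. Writing $U_i := -X_i \le M$, I would use the fact that the function $\phi(x) := (e^x - 1 - x)/x^2$ (with $\phi(0):=1/2$) is nondecreasing on all of $\mathbb{R}$. Since $\lambda U_i \le \lambda M$, this gives $\phi(\lambda U_i) \le \phi(\lambda M)$; multiplying by the nonnegative quantity $(\lambda U_i)^2$ yields $e^{\lambda U_i} \le 1 + \lambda U_i + \lambda^2 U_i^2\,\phi(\lambda M)$. Taking expectations and using $\mathbb{E}[U_i^2] = \mathbb{E}[X_i^2]$ together with $1+x \le e^x$ gives
\[
\mathbb{E}\bigl[e^{-\lambda X_i}\bigr] \le \exp\bigl(-\lambda\,\mathbb{E}X_i + \lambda^2\,\mathbb{E}[X_i^2]\,\phi(\lambda M)\bigr).
\]
Plugging back in, the $\pm\lambda\sum_i \mathbb{E}X_i$ terms cancel, leaving
\[
\mathbb{P}\Bigl(\sum_i (X_i - \mathbb{E}X_i) \le -\epsilon\Bigr) \le \exp\bigl(-\lambda\epsilon + \lambda^2\,V\,\phi(\lambda M)\bigr),
\]
where $V := \sum_i \mathbb{E}[X_i^2]$.

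Finally, I would optimize in $\lambda > 0$. The optimal choice $\lambda^\star = M^{-1}\log(1 + M\epsilon/V)$ gives Bennett's inequality $\exp\bigl(-V\,h(M\epsilon/V)/M^2\bigr)$ with $h(u) := (1+u)\log(1+u) - u$. The stated Bernstein form then follows from the standard inequality $h(u) \ge u^2/(2(1 + u/3))$ for $u \ge 0$, which is verified by checking that both sides vanish at $u = 0$ and comparing derivatives.

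The main obstacle is the MGF step: one has to be careful that the monotonicity of $\phi$ holds across the sign change at zero (so the inequality is valid regardless of the sign of $U_i$), and that multiplying by the nonnegative factor $(\lambda U_i)^2$ preserves the inequality. The Bennett-to-Bernstein conversion is routine but requires the explicit analytic inequality on $h$, which is the one place where the constant $1/3$ enters.
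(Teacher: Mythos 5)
Your proof is correct. Note that the paper does not prove this Fact at all --- it is imported verbatim from Chung and Lu --- so there is no internal argument to compare against; what you have written is the standard Chernoff--Bennett--Bernstein derivation of that cited result. The key steps all check out: the monotonicity of $(e^x-1-x)/x^2$ on all of $\mathbb{R}$ justifies the MGF bound using only the one-sided bound $-X_i \leq M$, working with $\mathbb{E}[X_i^2]$ rather than the variance matches the form of the stated bound, the optimizing choice $\lambda^{\star} = M^{-1}\log(1+M\epsilon/V)$ yields Bennett's bound $\exp\bigl(-V h(M\epsilon/V)/M^2\bigr)$, and the inequality $h(u) \geq u^2/\bigl(2(1+u/3)\bigr)$ (which follows from two rounds of differentiation, since the second-derivative comparison reduces to $(u+3)^3 \geq 27(1+u)$) converts it to exactly the stated exponent.
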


So, for $X_i = w_i\mathsf{1}_{\left\{w_i > a_i\right\}}$, we have $X_i \geq \min\{0,a_i,i=1,\cdots,n\}$, and $\mathbb{E}\left[X_i^2 \right] \leq \sigma^2$, $\forall i = 1,\cdots,n$. Clearly, we can take $M= -\min\{0,a_i,i=1,\cdots,n\} < \infty$. Therefore, for any $  \epsilon > 0$, 
\begin{align*}
\mathbb{P}\left(\sum_{i=1}^nX_i - \sum_{i=1}^n\mathbb{E}[X_i] \leq -\epsilon \right) & \leq e^{- \frac{\epsilon^2}{2\left(\sum_{i=1}^n\mathbb{E}\left[X_i^2 \right] + \frac{M\epsilon}{3}\right)}} \\
& \leq e^{- \frac{\epsilon^2}{2\left(n\sigma^2  + \frac{M\epsilon}{3}\right)}}
\end{align*}
and hence,
\begin{align}\nonumber 
& \mathbb{P}\left(\frac{1}{n}\left(\sum_{i=1}^n w_i\mathsf{1}_{\left\{w_i > a_i\right\}} -\frac{\sigma}{\sqrt{2\pi}}\sum_{i=1}^ne^{-\frac{a_i^2}{2\sigma^2}}\right ) \leq -\epsilon \right) \\ \label{eq_app_low1}
& \leq e^{- \frac{n\epsilon^2}{2\left(\sigma^2  + \frac{M\epsilon}{3}\right)}}.
\end{align}
Using  the upper and lower tail inequalities obtained in \eqref{eq_app_upp1} and \eqref{eq_app_low1}, respectively, we get 
\begin{align*}
 & \mathbb{P}\left(\frac{1}{n}\left\vert\sum_{i=1}^n w_i\mathsf{1}_{\left\{w_i > a_i\right\}} -\frac{\sigma}{\sqrt{2\pi}}\sum_{i=1}^ne^{-\frac{a_i^2}{2\sigma^2}}\right \vert \geq \epsilon\right)  \\
 & \leq 2e^{- \frac{n\epsilon^2}{2\left(\sigma^2  + \frac{M\epsilon}{3}\right)}} \leq 2e^{-nk\min(\epsilon,\epsilon^2)}
\end{align*}
where $k$ is a positive constant (this is due to $M$ being finite). This proves \eqref{lem_app_conc_ineq1_eq1}. The concentration inequality in \eqref{lem_app_conc_ineq1_eq2} can be similarly proven, and will not be detailed here. 

\subsection{Proof of Lemma \ref{lem_app_small_O_P}} \label{pf_lem_app_small_O_P}

We first prove  \eqref{eq:thetaiyi_conc}.  Then  \eqref{eq:1yi_conc} immediately follows by setting $\theta_i =1$, $\forall i$.

Let us denote the event whose probability we want to bound by $\mc{E}$. In our case, 
\[ \mc{E} = \left\{ \frac{1}{n}\bigg \vert \sum_{i=1}^n \theta_i \mathsf{1}_{\left\{y_i > f({\mathbf{y}})\right\}} - \sum_{i=1}^n \theta_i \mathsf{1}_{\left\{y_i > a\right\}} \bigg\vert \geq \epsilon \right\}. \]
Then, for any $t >0$, we have
\begin{align}
\nonumber
&  \mathbb{P}(\mc{E}) \\ \nonumber 
&  \leq \mathbb{P}(\mc{E}, \, \{a < f(\mbf{y}) \leq a+t\} )  +\mathbb{P}(\mc{E}, \, \{ a-t \leq f(\mbf{y}) \leq a\} )\\ \nonumber 
& \quad + \mathbb{P}(\abs{f(\mbf{y}) -a} > t) \\ \nonumber 
& =  \mathbb{P}\left(  \frac{1}{n} \bigg\lvert  \sum_{i=1}^n {\theta_i} 1_{\{ a < y_i \leq  f(\mbf{y})\}}\bigg\rvert \geq \e, \, \{a < f(\mbf{y}) \leq a+t\} \right)  \\ \nonumber 
&   \quad +  \mathbb{P}\left(  \frac{1}{n}  \bigg\lvert \sum_{i=1}^n {\theta_i} 1_{\{ f(\mbf{y}) < y_i \leq  a \}}\bigg \rvert \geq \e, \, \{a -t < f(\mbf{y}) \leq a \} \right) \\ \nonumber
& \quad +
 \mathbb{P}(\abs{f(\mbf{y}) -a} > t)  \\ \nonumber
 & \leq \mathbb{P}\left(  \bigg[  \frac{1}{n} \sum_{i=1}^n \abs {\theta_i} 1_{\{ a  < y_i \leq a+t \}}\bigg] \geq \e\right) \\ \label{eq_lem_app_O_p1}
 & \quad +  \mathbb{P}\left(  \bigg[ \frac{1}{n} \sum_{i=1}^n \abs{\theta_i} 1_{\{ a-t < y_i \leq  a \}}\bigg] \geq \e \right) + 2e^{-nkt^2}.
\end{align}

\noindent Now, 
\begin{align} 
       \mathbb{P}( 1_{\left\{a  < y_i \leq a+t\right\}} = 1)  = \int_{a}^{a+t} \hspace{-5pt} \frac{1}{\sqrt{2\pi\sigma^2}}e^{-\frac{(y_i-\theta_i)^2}{2\sigma^2}}dy_i \leq  \frac{t}{\sqrt{2\pi\sigma^2}}  \label{eq_lem11}
      \end{align}
where we have used $e^{-\frac{(y_i-\theta_i)^2}{2\sigma^2}} \leq 1$. Let $Y \vcentcolon = \frac{1}{n}\sum_{i=1}^n Y_i$ where $Y_i \vcentcolon= \abs{\theta_i} \mathsf{1}_{\left\{ a < y_i \leq  a+t\right\}}$. Then, from \eqref{eq_lem11}, we have  
\begin{align*}
  0 \leq \mathbb{E}Y & = \frac{1}{n}\sum_{i=1}^n \abs{\theta_i}  \,  \mathbb{P}( 1_{\left\{a  < y_i \leq a+t\right\}} = 1) \\
  & \leq \frac{t}{n \sqrt{2\pi\sigma^2}}\sum_{i=1}^n \abs{\theta_i}. 
\end{align*}
  Since $Y_i \in [0,\abs{\theta_i}]$, from Hoeffding's inequality, for any $\epsilon_1 > 0$, we have $\mathbb{P}\left( Y - \mathbb{E}Y \geq \epsilon_1 \right)  \leq  \exp\{ -\frac{2n\epsilon_1^2}{\norm{\bst}^2/n} \}$, which implies
 \begin{align*}
\mathbb{P}\left( Y \geq \epsilon_1 + \frac{t \norm{\bst}_1}{n\sqrt{2\pi\sigma^2}}\right) & \leq  e^{-\frac{2n\epsilon_1^2}{\norm{\bst}^2/n}},
 \end{align*}
where $\norm{\bst}_1 := \sum_{i=1}^n \abs{\theta_i}$.  Now, set $\e_1 = \e/2$ and 
 \be 
 \label{eq:teps_def}
 t= \frac{\e  \sqrt{\pi \sigma^2/2}}{\norm{\bst}_1/n} \ee to obtain
 \begin{align} \nonumber 
 \mathbb{P}\left( Y  \geq \e  \right) &  \leq  e^{-\frac{n\epsilon^2}{2\norm{\bst}^2/n}} \\ \label{eq_lem_app_O_p2}
 \Rightarrow \mathbb{P}\left( \frac{1}{n}\sum_{i=1}^n \abs{\theta_i} 
 \mathsf{1}_{\left\{ a  < y_i \leq a+t \right\}}  \geq \epsilon \right) & \leq   e^{-\frac{n\epsilon^2}{2\norm{\bst}^2/n}}.
\end{align}
 A similar analysis yields 
  \begin{align}\label{eq_lem_app_O_p3}
 \mathbb{P}\left( \frac{1}{n}\sum_{i=1}^n \abs{\theta_i} \mathsf{1}_{\left\{a -t< y_i \leq a \right\}}  \geq \epsilon \right) \leq   e^{-\frac{n\epsilon^2}{2\norm{\bst}^2/n}}.
\end{align}
Using \eqref{eq_lem_app_O_p2} and \eqref{eq_lem_app_O_p3} in \eqref{eq_lem_app_O_p1} and recalling that $t$ is given by \eqref{eq:teps_def}, we obtain
\begin{align*}
 & \mathbb{P}\left(\frac{1}{n}\left\vert \sum_{i=1}^n \theta_i \mathsf{1}_{\left\{y_i > f({\mathbf{y}})\right\}} - \sum_{i=1}^n  \theta_i \mathsf{1}_{\left\{y_i > a\right\}} \right\vert \geq \epsilon \right) \\
 & \leq  2\left(e^{-\frac{n\epsilon^2k \pi\sigma^2}{2 \norm{\bst}^2_1/n^2}} +  e^{-\frac{n\epsilon^2}{2\norm{\bst}^2/n}} \right) \leq 4e^{-\frac{nk \epsilon^2}{\Vert \bst\Vert^2/n}}
\end{align*}
where $k$ is a positive constant. The last inequality holds because $\Vert \bst \Vert_1^2/n^2 < \Vert \bst \Vert^2/n$ (by the Cauchy-Schwarz inequality), and   $\limsup_{n \to \infty} \Vert \bst \Vert^2/n < \infty$ (by assumption). This proves \eqref{eq:thetaiyi_conc}.

Next, we prove \eqref{eq:wiyi_conc}. Using steps very similar to \eqref{eq_lem_app_O_p1}, we have, for $t > 0$, $\epsilon >0$,
\begin{align}\nonumber
 &\mathbb{P}\left(\frac{1}{n}\left\vert \sum_{i=1}^n w_i\mathsf{1}_{\left\{y_i > f({\mathbf{y}})\right\}} - \sum_{i=1}^n w_i\mathsf{1}_{\left\{y_i > a\right\}} \right\vert \geq \epsilon \right) \\ \nonumber  
  & \leq2e^{-nkt^2} + \mathbb{P}\left(\frac{1}{n}\sum_{i=1}^n \abs{w_i} \mathsf{1}_{\left\{a < y_i \leq  a+t \right\}}  \geq \epsilon \right) \\  \label{eq_lem_app_O_p4} 
   & \quad + \mathbb{P}\left(\frac{1}{n}\sum_{i=1}^n \abs{w_i} \mathsf{1}_{\left\{a-t <   y_i \leq a\right\}}  \geq \epsilon \right).
\end{align}
Now, let  $Y \vcentcolon = \frac{1}{n}\sum_{i=1}^n Y_i$ where 
\[ Y_i \vcentcolon= \abs{w_i} \mathsf{1}_{\left\{ a < y_i \leq  a+t\right\}} =  \abs{w_i} \mathsf{1}_{\left\{ a -\theta_i < w_i \leq  a -\theta_i +t\right\}} . \]
Noting that $\abs{w_i} \leq t + \abs{a-\theta_i}$ when $w _i \in [a-\theta_i, a-\theta_i+t ]$, we have
\begin{align*}
\expec[Y_i] & = \int_{a-\theta_i}^{a-\theta_i+t} \frac{\abs{w}}{\sqrt{2 \pi \sigma^2}} e^{-w^2/2\sigma^2}dw \\
& \overset{(i)}{=}t \left(\frac{\abs{c}}{\sqrt{2 \pi \sigma^2}} e^{-c^2/2\sigma^2}\right) 
\overset{(j)}{\leq} \frac{t}{\sqrt{2\pi e}}.
\end{align*}
Note that $(i)$ is from the mean value theorem for integrals with $c \in \left(a-\theta_i, a-\theta_i+t \right)$, and $(j)$ is because $xe^{-x^2} \leq 1/\sqrt{2e}$ for $x \geq 0 $. Hence
\[ 0 \leq \expec[Y] =  \frac{1}{n}\sum_{i=1}^n \expec[Y_i] \leq \frac{t}{\sqrt{2\pi e}}. \]
As each $Y_i$ takes values in an interval of length at most $t$, by Hoeffding's inequality we have for any $\e_1 >0$
\begin{align} \nonumber 
 \mathbb{P}(Y \geq \e_1 + \expec[Y]) &\leq 2e^{-2n \e_1^2/t^2} \\ \label{eq:Ye1_bnd1}
 \Rightarrow \mathbb{P}\left(Y \geq \e_1 + \frac{t}{\sqrt{2\pi e}}\right) & \leq 2e^{-2n \e_1^2/t^2}.
\end{align}
Now, set $\frac{t}{\sqrt{2\pi e}} = \sqrt{\epsilon_1}$. Using this value of $t$ in the RHS of \eqref{eq:Ye1_bnd1}, we obtain
\ben 
\mathbb{P}\left(\frac{1}{n}\sum_{i=1}^n \abs{w_i} \mathsf{1}_{\left\{a < y_i \leq  a+t \right\}}  \geq \epsilon_1 + \sqrt{\epsilon_1} \right)  \leq 2 e^{-n k_1\e_1  } 
\een
where $k_1 = 1/(\pi e)$. Setting $\epsilon_1 + \sqrt{\epsilon_1} = \epsilon$, we get $\sqrt{\epsilon_1} = \frac{\sqrt{4\epsilon+1}-1}{2}$. Using the following inequality for $x > 0$:
\[ \left(\sqrt{1+x}-1\right)^2  \geq 
\left\{ 
\begin{array}{cc}
{x^2}/{32}, & 0 \leq x \leq 3 \\
3x/4,& x >3,
\end{array}
\right.
\]
we obtain, 
\be \label{eq:wibnd1}
\mathbb{P}\left(\frac{1}{n}\sum_{i=1}^n \abs{w_i} \mathsf{1}_{\left\{a < y_i \leq  a+t \right\}}  \geq \epsilon \right)  \leq 2 e^{-n k \min(\e^2, \e) } 
\ee
where $k$ is a positive constant. Using similar steps, it can be shown that the third term on the RHS of \eqref{eq_lem_app_O_p4} can also be bounded as 
\be \label{eq:wibnd2}
 \mathbb{P}\left(\frac{1}{n}\sum_{i=1}^n \abs{w_i} \mathsf{1}_{\left\{a-t <   y_i \leq a\right\}}  \geq \epsilon \right) 
\leq2 e^{-n k \min(\e^2, \e) }. 
\ee
This completes the proof of \eqref{eq:wiyi_conc}.

\subsection{Proof of Lemma \ref{lem_ui}}\label{pf_lem_ui}
 Since $\{Y_n \}_{n=1}^{\infty}$ is UI, from Definition \ref{def_UI}, we have
 $\lim_{K \to \infty} \big(\limsup_{n \to \infty} \mathbb{E}\left[Y_n \mathbf{1}_{\{ Y_n \geq K\}} \right]\big) = 0$.
Therefore,
\begin{align*}
&\mathbb{E}\left[\vert X_n \vert \mathbf{1}_{\{\vert X_n \vert \geq K\}} \right] \\
& \leq  \mathbb{E}\left[c\vert Y_n \vert \mathbf{1}_{\{\vert X_n \vert \geq K\}} \right] +  \mathbb{E}\left[a \mathbf{1}_{\{\vert X_n \vert \geq K\}} \right]\\
 & \leq c\mathbb{E}\left[ Y_n  \mathbf{1}_{\left\{c Y_n  + a \geq K\right\}} \right] +  a\mathbb{E}\left[ \mathbf{1}_{\{c Y_n  + a \geq K\}} \right]\\
&= c\mathbb{E}\left[ Y_n  \mathbf{1}_{\left\{ Y_n  \geq \frac{K-a}{c}\right\}} \right] +  a\mathbb{E}\left[ \mathbf{1}_{\left\{ Y_n \geq \frac{K-a}{c}\right\}} \right]\\
& = c\mathbb{E}\left[Y_n \mathbf{1}_{\left\{ Y_n  \geq \frac{K-a}{c}\right\}} \right] +  a\mathbb{P}\left(Y_n \geq \frac{K-a}{c}\right).
\end{align*}
So,
\begin{align*}
& \lim_{K \to \infty}\left(\limsup_{n \to \infty} \mathbb{E}\left[\vert X_n \vert \mathbf{1}_{\{\vert X_n \vert \geq K\}} \right]\right) \\
& \leq c\lim_{K \to \infty}\left(\limsup_{n \to \infty} \mathbb{E}\left[ Y_n  \mathbf{1}_{\left\{Y_n  \geq \frac{K-a}{c}\right\}} \right]\right) \\
& \quad + a\lim_{K \to \infty}\left(\limsup_{n \to \infty} \mathbb{P}\left(Y_n \geq \frac{K-a}{c}\right) \right)= 0.
\end{align*}

\section{Proofs of Lemmas related to JS-estimators}\label{sec_appendixB}

\subsection{Proof of Lemma \ref{prop_y1_y2}}\label{pf_prop_y1_y2}
We first prove \eqref{eq:lem1_theti_yi} and \eqref{eq:lem1_theti_yi1}.   Then, \eqref{eq:lem1_1yi} and \eqref{eq:lem1_1yi1} immediately follow by setting $\theta_i=1$, for $1 \leq i \leq n$.

 From Lemma \ref{lem_app_small_O_P}, for any $\epsilon > 0$,
\begin{equation}\label{prop_y1_y2_eq1}
\mathbb{P}\left(\frac{1}{n}\left \vert\sum_{i=1}^n\theta_i \mathsf{1}_{\left\{y_i > \bar{y}\right\}} -\sum_{i=1}^n \theta_i\mathsf{1}_{\left\{y_i > \bar{\theta}\right\}}\right \vert\right)  \leq  4 e^{-\frac{nk \epsilon^2}{\Vert \bst\Vert^2/n}}.
 \end{equation}
  Since $\theta_i \mathsf{1}_{\left\{y_i > \bar{\theta} \right\}} \in \{0,\theta_i\}$  are independent for $1 \leq i \leq n$, from Hoeffding's inequality, we have, for any $\epsilon > 0$,
\begin{align} \nonumber
& \mathbb{P}\left( \left \vert \frac{1}{n}\sum_{i=1}^n \theta_i \mathsf{1}_{\left\{y_i > \bar{\theta} \right\}} - \frac{1}{n}\sum_{i=1}^n \theta_i \mathbb{E}\left[\mathsf{1}_{\left\{y_i > \bar{\theta} \right\}}\right] \right \vert > \epsilon \right) \\ \label{prop_y1_y2_eq2}
& \leq 2e^{-\frac{2n\epsilon^2}{\norm{\bst}^2/n}} .
\end{align}
 Also for each $i$,
\begin{align*}
\mathbb{E}\left[ \mathsf{1}_{\left\{y_i > \bar{\theta} \right\}} \right] & = \mathbb{P}\left( y_i > \bar{\theta} \right)  = \mathbb{P}\left( w_i > \bar{\theta} - \theta_i\right ) \\
&=  Q\left(\frac{\bar{\theta}-\theta_i}{\sigma} \right).
\end{align*}

\noindent Therefore, from \eqref{prop_y1_y2_eq1} and \eqref{prop_y1_y2_eq2}, we obtain
\begin{equation}\label{eq_cow_ineq_1}
\frac{1}{n}\sum_{i=1}^n \theta_i \mathsf{1}_{\left\{y_i > \bar{y} \right\}} \doteq \frac{1}{n}\sum_{i=1}^n  \theta_i Q\left(\frac{\bar{\theta}-\theta_i}{\sigma} \right).
\end{equation}
 The concentration result in \eqref{eq:lem1_theti_yi1} immediately follows by writing
$\mathsf{1}_{\left\{y_i \leq \bar{y} \right\}}  = 1 -  \mathsf{1}_{\left\{y_i > \bar{y} \right\}}$.

To  prove \eqref{eq:lem1_yigeq}, we write
\[  \frac{1}{n} \sum_{i=1}^n y_i \mathsf{1}_{\{y_i > \bar{y}\}}  =  \frac{1}{n} \sum_{i=1}^n\theta_i\mathsf{1}_{\{y_i > \bar{y}\}}+ 
 \frac{1}{n} \sum_{i=1}^n w_i \mathsf{1}_{\{y_i > \bar{y}\}}. \]
Hence, we have to show that 
 \be
  \frac{1}{n} \sum_{i=1}^n w_i \mathsf{1}_{\{y_i > \bar{y}\}} \doteq \frac{\sigma}{n \sqrt{2\pi}}\sum_{i=1}^n e^{-\frac{\left(\bar{\theta}-\theta_i\right)^2}{2\sigma^2}}.
  \label{eq:wts_wi}
 \ee
From Lemma \ref{lem_app_small_O_P}, for any $\epsilon > 0$, we have
\begin{equation}\label{eq:wiyi}
\mathbb{P}\left(\frac{1}{n}\left \vert\sum_{i=1}^n w_i \mathsf{1}_{\left\{y_i > \bar{y}\right\}} -\sum_{i=1}^n w_i \mathsf{1}_{\left\{y_i > \bar{\theta}\right\}}\right \vert\right)  \leq  4 e^{-\frac{nk \epsilon^2}{\Vert \bst\Vert^2/n}}.
 \end{equation}
 Now,
\begin{align*}
 \mathbb{E}\left[w_i \mathsf{1}_{\{y_i > \bar{\theta} \}} \right] & = \int_{-\infty}^{\infty}w_i \mathsf{1}_{\{y_i > \bar{\theta} \}}\left(\frac{1}{\sqrt{2\pi\sigma^2}}e^{-\frac{w_i^2}{2\sigma^2}} \right)dw_i\\
 & = \int_{\bar{\theta}- \theta_i}^{\infty}\frac{w_i}{\sqrt{2\pi\sigma^2}}e^{-\frac{w_i^2}{2\sigma^2}}dw_i= \frac{\sigma}{\sqrt{2\pi}}e^{-\frac{\left(\bar{\theta}-\theta_i\right)^2}{2\sigma^2}}. 
\end{align*}
Using Lemma \ref{lem_app_conc_ineq1}, we get, for any $\epsilon > 0$, 
\begin{align}\nonumber 
& \mathbb{P}\left(\frac{1}{n}\left\vert\sum_{i=1}^n w_i\mathsf{1}_{\left\{w_i > \bar{\theta}\right\}} -\frac{\sigma}{\sqrt{2\pi}}\sum_{i=1}^ne^{-\frac{\left(\bar{\theta}-\theta_i\right)^2}{2\sigma^2}}\right \vert \geq \epsilon\right) \\ \label{eq_cow_ineq_3}
& \leq 2e^{-nk\min(\epsilon,\epsilon^2)}.
\end{align}
We obtain  \eqref{eq:wts_wi} by combining \eqref{eq:wiyi} and \eqref{eq_cow_ineq_3}.

Similarly, \eqref{eq:lem1_yileq} can be shown using Lemma \ref{lem_app_small_O_P} and Lemma \ref{lem_app_conc_ineq1}
 to establish that 
\begin{equation*}
\frac{1}{n}\sum_{i=1}^n w_i \mathsf{1}_{\left\{w_i \leq \bar{y}\right\}} \doteq - \frac{\sigma}{n \sqrt{2\pi}}\sum_{i=1}^n
e^{-\frac{\left(\bar{\theta}-\theta_i\right)^2}{2\sigma^2}}.
\end{equation*}

\subsection{Proof of Lemma \ref{prop_bias}}\label{pf_prop_bias}
From Lemma \ref{lem:fhy_conc}, we have, for any $\epsilon >0$,
\begin{align} \nonumber
 & \mathbb{P}\left(\frac{1}{n}\frac{\sigma^2}{2\delta}\left\vert\sum_{i=0}^{n} \mathsf{1}_{\left\{\left\vert y_i - \bar{{y}}  \right\vert \leq \delta \right\}} - \sum_{i=0}^{n} \mathsf{1}_{\left\{\left\vert y_i - \bar{\theta} \right\vert \leq \delta \right\}} \right\vert \geq \epsilon \right) \\ \label{prop_bias_eq1}
 &\leq 8e^{-\frac{nk\epsilon^2\delta^2}{\sigma^4}}.
\end{align}
Further, from Hoeffding's inequality, 
\begin{align} \nonumber 
& \mathbb{P}\left(\frac{1}{n}\left\vert\frac{\sigma^2}{2\delta}\sum_{i=0}^{n} \mathsf{1}_{\left\{\left\vert y_i - \bar{\theta} \right\vert \leq \delta \right\}}  -  \frac{\sigma^2}{2\delta}\sum_{i=0}^{n}\mathbb{E}\left[ \mathsf{1}_{\left\{\left\vert y_i - \bar{\theta}\right \vert \leq \delta \right\}}\right]\right \vert \geq \epsilon \right) \\ \label{prop_bias_eq2}
& \leq 2e^{-\frac{8n\delta^2\epsilon^2}{\sigma^4}}.
\end{align}
Also,
\begin{align} \nonumber 
 & \frac{\sigma^2}{2\delta}\sum_{i=0}^{n} \mathbb{E}\left[ \mathsf{1}_{\left\{\left\vert y_i - \bar{\theta}  \right\vert \leq \delta \right\}}\right] = \frac{\sigma^2}{2\delta}\sum_{i=0}^{n} \mathbb{P}\left(\left\vert y_i - \bar{\theta} \right \vert \leq \delta \right) \\ \label{prop_bias_eq3}
 & = \frac{\sigma^2}{2\delta}\sum_{i=0}^{n} \int_{\bar{\theta}-\delta}^{\bar{\theta}+\delta}\frac{1}{\sqrt{2\pi\sigma^2}}e^{-\frac{\left(y_i-\theta_i\right)^2}{2\sigma^2}}dy_i .
\end{align}
\noindent From the first mean value theorem for integrals, $\exists \varepsilon_i \in \left(-{\delta},{\delta}\right)$ such that \begin{equation*}
\int_{\bar{\theta}-\delta}^{\bar{\theta}+\delta}\frac{1}{\sqrt{2\pi\sigma^2}}e^{-\frac{\left(y_i-\theta_i\right)^2}{2\sigma^2}}dy_i = 2\delta\left(\frac{1}{\sqrt{2\pi\sigma^2}}e^{-\frac{\left(\bar{\theta} + \varepsilon_i-\theta_i\right)^2}{2\sigma^2}}\right)
\end{equation*}
and so the RHS of \eqref{prop_bias_eq3} can be written as 
\begin{equation*}
\begin{split}
&\frac{\sigma^2}{2n\delta}\sum_{i=0}^{n} \int_{\bar{\theta}-\delta}^{\bar{\theta}+\delta}\frac{1}{\sqrt{2\pi\sigma^2}}e^{-\frac{\left(y_i-\theta_i\right)^2}{2\sigma^2}}dy_i \\
 & =  \frac{\sigma}{n\sqrt{2\pi}}\sum_{i=0}^{n} e^{-\frac{\left(\bar{\theta} + \varepsilon_i -\theta_i\right)^2}{2\sigma^2}} .
\end{split}
\end{equation*}
Now, let $x_i \vcentcolon= \bar{\theta} - \theta_i$. Then, since $\vert \varepsilon_i \vert \leq \delta$, we have
\begin{align*}
  &\frac{1}{n\sqrt{2\pi\sigma^2}}\sum_{i=0}^{n} \left\vert e^{-\frac{x_i^2}{2\sigma^2}} - e^{-\frac{\left(x_i + \varepsilon_i \right)^2}{2\sigma^2}}\right \vert \\
 & \leq \delta \max \,  \frac{d}{dx}\left( \frac{1}{\sqrt{2\pi\sigma^2}}e^{-\frac{x^2}{2\sigma^2}}\right) = \frac{\delta}{\sigma^2\sqrt{2\pi e}}.
\end{align*}

\noindent Therefore, 
\begin{align} \nonumber 
 \frac{\sigma^2}{2n\delta}\sum_{i=0}^{n} \int_{\bar{\theta}-\delta}^{\bar{\theta}+\delta}\frac{1}{\sqrt{2\pi\sigma^2}}e^{-\frac{\left(y_i-\theta_i\right)^2}{2\sigma^2}}dy_i \\ \label{prop_bias_eq4}
  = \frac{\sigma}{n\sqrt{2\pi}}\sum_{i=0}^{n} e^{-\frac{\left(\bar{\theta} -\theta_i\right)^2}{2\sigma^2}} + \kappa_n\delta 
\end{align}
where $\vert \kappa_n \vert \leq \frac{1}{\sqrt{2\pi e}} $.  Using \eqref{prop_bias_eq4} in \eqref{prop_bias_eq3}, and then the obtained result in \eqref{prop_bias_eq2} and \eqref{prop_bias_eq1}, the proof of the lemma is complete.

\subsection{Proof of Lemma \ref{lem3}}\label{pf_lem3}

We have
 \begin{align}  \nonumber
\frac{1}{n}\left\Vert \mathbf{y}-\boldsymbol{\nu}_{2} \right\Vert^2 & = \frac{1}{n} \sum_{i=1}^n\left(y_i - a_1 \right)^2\mathsf{1}_{\left\{y_i > \bar{y}\right\}} \\ \label{lem3_eq0}
& + \frac{1}{n} \sum_{i=1}^n\left(y_i - a_2 \right)^2\mathsf{1}_{\left\{y_i \leq \bar{y}\right\}}.
\end{align}
Now,
\begin{align*}
 &\frac{1}{n}\sum_{i=1}^n\left(y_i - a_1 \right)^2\mathsf{1}_{\left\{y_i > \bar{y}\right\}}\\
 & = \frac{1}{n}\left[ \sum_{i=1}^ny_i^2 \mathsf{1}_{\left\{y_i > \bar{y}\right\}}  + \sum_{i=1}^na_1^2 \mathsf{1}_{\left\{y_i > \bar{y}\right\}}-2\sum_{i=1}^na_1y_i \mathsf{1}_{\left\{y_i > \bar{y}\right\}}\right]\\
  & = \frac{1}{n}\left[ \sum_{i=1}^n w_i^2 \mathsf{1}_{\left\{w_i > \bar{y} -\theta_i\right\}} + \sum_{i=1}^n \theta_i^2 \mathsf{1}_{\left\{y_i > \bar{y}\right\}}  \right.\\
  & \quad + 2\sum_{i=1}^n \theta_i w_i \mathsf{1}_{\left\{w_i > \bar{y} -\theta_i\right\}} + \sum_{i=1}^na_1^2 \mathsf{1}_{\left\{y_i > \bar{y}\right\}} \\
  & \quad -2\sum_{i=1}^na_1y_i \mathsf{1}_{\left\{y_i > \bar{y}\right\}}\bigg]
\end{align*}
and similarly,
\begin{align*}
&  \frac{1}{n}\sum_{i=1}^n\left(y_i - a_2 \right)^2\mathsf{1}_{\left\{y_i \leq \bar{y}\right\}} =\frac{1}{n}\bigg[ \sum_{i=1}^n w_i^2 \mathsf{1}_{\left\{w_i \leq \bar{y} -\theta_i\right\}}  \\
 & \quad  + \sum_{i=1}^n \theta_i^2 \mathsf{1}_{\left\{y_i \leq \bar{y}\right\}}  + 2\sum_{i=1}^n\theta_i w_i \mathsf{1}_{\left\{w_i\leq\bar{y} -\theta_i\right\}} \\
 & \quad  + \sum_{i=1}^na_2^2 \mathsf{1}_{\left\{y_i \leq \bar{y}\right\}}  -2\sum_{i=1}^na_2y_i \mathsf{1}_{\left\{y_i \leq \bar{y}\right\}}\bigg].
\end{align*}

\noindent Therefore, from \eqref{lem3_eq0}
\begin{align}
\nonumber
  & \frac{1}{n}\left\Vert \mathbf{y}-\boldsymbol{\nu}_{2} \right\Vert^2  =  \frac{1}{n}\sum_{i=1}^n w_i^2 + \frac{\Vert \boldsymbol{\theta} \Vert^2}{n} +\frac{2}{n}\sum_{i=1}^n \theta_i w_i \\ \nonumber
  & \qquad+  \frac{1}{n}\left(\sum_{i=1}^na_1^2 \mathsf{1}_{\left\{y_i > \bar{y}\right\}} -2\sum_{i=1}^na_1y_i \mathsf{1}_{\left\{y_i > \bar{y}\right\}} \right.\\ \label{lem3_eq1}
  & \qquad + \sum_{i=1}^na_2^2 \mathsf{1}_{\left\{y_i \leq \bar{y}\right\}} -2\sum_{i=1}^na_2y_i \mathsf{1}_{\left\{y_i \leq \bar{y}\right\}}\bigg).
\end{align}

\noindent Since $\frac{1}{n}\sum_{i=1}^n \theta_i w_i \sim \mathcal{N}\left(0,\frac{\Vert\boldsymbol{\theta} \Vert^2}{n^2}\right)$, 
\begin{equation}\label{lem3_eq2}
 \mathbb{P}\left( \left\vert \frac{1}{n}\sum_{i=1}^n \theta_i w_i \right\vert \geq \epsilon\right) \leq e^{-\frac{n\epsilon^2}{2\Vert\boldsymbol{\theta}\Vert^2/n}}.
\end{equation}

\noindent From Lemma \ref{lem_app_conc_ineq3}, we have, for any $\epsilon >0$,
\begin{equation*}
 \mathbb{P}\left(\left\vert\frac{1}{n}\sum_{i=1}^n w_i^2 - \sigma^2 \right \vert \geq \epsilon\right) \leq 2 e^{-nk\min(\epsilon,\epsilon^2)} 
\end{equation*}
where $k$ is a positive constant. Next, we claim that
\be a_1 \doteq c_1 + \kappa_n \delta + o(\delta), \quad a_2 \doteq c_2 + \kappa_n \delta + o(\delta), \label{eq:a1a2_conc} \ee
where $c_1, c_2$ are defined in \eqref{eq_c_1_c_2}. The concentration in \eqref{eq:a1a2_conc} follows from Lemmas \ref{prop_y1_y2} and \ref{prop_bias}, together with the results on concentration of products and reciprocals in Lemmas \ref{lem1_app2} and \ref{lem1_app}, respectively. Further, using \eqref{eq:a1a2_conc} and Lemma \ref{lem1_app2} again, we obtain
$a_1^2 \doteq c_1^2 + \kappa_n \delta + o(\delta)$ and
\begin{equation}\label{lem3_eq4}
a_1^2\left(\frac{1}{n}\sum_{i=1}^n \mathsf{1}_{\left\{y_i > \bar{y} \right\}}\right) \doteq \frac{c_1^2}{n}\sum_{i=1}^nQ\left(\frac{\bar{\theta}-\theta_i}{\sigma} \right) + \kappa_n\delta+o(\delta).
\end{equation}
Similarly, 
\begin{align}
 \nonumber
 &a_1\left(\frac{2}{n}\sum_{i=1}^ny_i \mathsf{1}_{\left\{y_i > \bar{y}\right\}}\right)\\ \nonumber 
 &\doteq \frac{2c_1}{n}\left(\sum_{i=1}^n \theta_iQ\left(\frac{\bar{\theta}-\theta_i}{\sigma}\right)+ \frac{\sigma}{\sqrt{2\pi}}\sum_{i=1}^ne^{-\frac{\left(\bar{\theta}-\theta_i\right)^2}{2\sigma^2}} \right) \\ 
 \nonumber 
 & \quad + \kappa_n\delta+o(\delta) \\ \nonumber
 &\doteq\frac{2c_1}{n}\left(c_1\sum_{i=1}^n Q\left(\frac{\bar{\theta}-\theta_i}{\sigma}\right)+ \frac{\sigma}{\sqrt{2\pi}}\sum_{i=1}^ne^{-\frac{\left(\bar{\theta}-\theta_i\right)^2}{2\sigma^2}} \right) \\ \nonumber 
  & \quad + \kappa_n\delta+o(\delta) \\  \nonumber 
  &\doteq \frac{2c_1^2}{n}\sum_{i=1}^n Q\left(\frac{\bar{\theta}-\theta_i}{\sigma}\right) + \frac{2c_1\sigma}{n\sqrt{2\pi}}\sum_{i=1}^ne^{-\frac{\left(\bar{\theta}-\theta_i\right)^2}{2\sigma^2}} \\ \label{lem3_eq5}
  & \quad + \kappa_n\delta+o(\delta) .
\end{align}

\noindent Employing the same steps as above, we get 
\begin{align}
\label{lem3_eq6}
&a_2^2\left(\frac{1}{n}\sum_{i=1}^n \mathsf{1}_{\left\{y_i \leq \bar{y} \right\}}\right) \doteq \frac{c_2^2}{n}\sum_{i=1}^nQ^c\left(\frac{\bar{\theta}-\theta_i}{\sigma} \right) + \kappa_n\delta+o(\delta),\\
\nonumber
& a_2\left(\frac{2}{n}\sum_{i=1}^ny_i \mathsf{1}_{\left\{y_i \leq \bar{y}\right\}}\right) 
   \doteq \frac{2c_2^2}{n}\sum_{i=1}^n Q^c\left(\frac{\bar{\theta}-\theta_i}{\sigma}\right)\\ \label{lem3_eq7}
  &  \hspace{0.4in} - \frac{2c_2\sigma}{n\sqrt{2\pi}}\sum_{i=1}^ne^{-\frac{\left(\bar{\theta}-\theta_i\right)^2}{2\sigma^2}}  + \kappa_n\delta+o(\delta).
\end{align}

\noindent Therefore, using \eqref{lem3_eq2}-\eqref{lem3_eq7} in \eqref{lem3_eq1}, we finally obtain
 \begin{equation*}
\begin{split}
& \frac{1}{n}\left\Vert \mathbf{y}-\boldsymbol{\nu}_{2} \right\Vert^2 \\
&\doteq  \frac{\Vert \boldsymbol{\theta} \Vert^2}{n} + \sigma^2 - \frac{ c_1^2}{n}\sum_{i=1}^nQ\left(\frac{\bar{\theta}-\theta_i}{\sigma}\right) - \frac{c_2^2}{n}\sum_{i=1}^nQ^c\left(\frac{\bar{\theta}-\theta_i}{\sigma}\right) \\
 & \quad- \left(\frac{2}{n}\right)\left(\frac{\sigma}{\sqrt{2\pi}} \right)\left(\sum_{i=1}^ne^{-\frac{\left(\bar{\theta}-\theta_i\right)^2}{2\sigma^2}} \right)\left(c_1 - c_2\right) \\
 & \quad + \kappa_n\delta+o(\delta) ,
\end{split}
\end{equation*}
which completes the proof of the lemma.

\subsection{Proof of Lemma \ref{lem_theta}}\label{pf_lem_theta}
 The proof is along the same lines as that of Lemma \ref{lem3}. We have
 \begin{equation*}
\begin{split}
  & \frac{1}{n}\Vert \boldsymbol{\theta}-\boldsymbol{\nu}_{2} \Vert^2 \\
  &  = \frac{1}{n}\left[ \sum_{i=1}^n\left(\theta_i - a_1 \right)^2\mathbf{1}_{\left\{y_i > \bar{y}\right\}} + \sum_{i=1}^n\left(\theta_i - a_2 \right)^2\mathbf{1}_{\left\{y_i \leq \bar{y}\right\}}\right]\\
  &=   \frac{\Vert \boldsymbol{\theta} \Vert^2}{n} + \frac{1}{n}\left(\sum_{i=1}^na_1^2 \mathbf{1}_{\left\{y_i > \bar{y}\right\}} -2\sum_{i=1}^na_1\theta_i \mathbf{1}_{\left\{y_i > \bar{y}\right\}} \right.\\
  & \quad + \sum_{i=1}^na_2^2 \mathbf{1}_{\left\{y_i \leq \bar{y}\right\}} -2\sum_{i=1}^na_2\theta_i \mathbf{1}_{\left\{y_i \leq \bar{y}\right\}}\Bigg)\\
  & \doteq \frac{\Vert \boldsymbol{\theta} \Vert^2}{n} + \frac{c_1^2}{n}\sum_{i=1}^n Q\left(\frac{\bar{\theta}-\theta_i}{\sigma}\right) -\frac{2c_1}{n}\sum_{i=1}^n\theta_i Q\left(\frac{\bar{\theta}-\theta_i}{\sigma}\right)\\
  &\quad +\frac{1}{n}\left( c_2^2\sum_{i=1}^n Q^c\left(\frac{\bar{\theta}-\theta_i}{\sigma}\right) -2c_2\sum_{i=1}^n\theta_i Q^c\left(\frac{\bar{\theta}-\theta_i}{\sigma}\right)\right) \\
  & \quad + \kappa_n\delta +o(\delta) \\
  &\doteq\frac{\Vert \boldsymbol{\theta} \Vert^2}{n} - \frac{ c_1^2}{n}\sum_{i=1}^nQ\left(\frac{\bar{\theta}-\theta_i}{\sigma}\right) - \frac{c_2^2}{n}\sum_{i=1}^nQ^c\left(\frac{\bar{\theta}-\theta_i}{\sigma}\right) \\
  & \quad + \kappa_n\delta+o(\delta).
 \end{split}
\end{equation*}

\subsection*{Acknowledgement}
 The authors thank R. Samworth for useful discussions on James-Stein estimators, and A. Barron and an anonymous referee for their comments which led to a much improved manuscript.



\end{document}